\newif\ifprocs
\newif\ifapps
\newtheorem{fact}{Fact}[section]
\newtheorem{lemma}[fact]{Lemma}
\newtheorem{theorem}[fact]{Theorem}
\newtheorem{definition}[fact]{Definition}
\newenvironment{proof}{{\bf Proof:  }}{\hfill\rule{2mm}{2mm}}
\newcommand*{\QEDA}{\hfill\ensuremath{\blacksquare}}
\newcommand{\symd}{\triangle}
\newcommand\polylog[1]{\ensuremath{\mathrm{polylog}\left(#1\right)}}
\else \newcounter{note}[section] \fi
\title{Explicit Expanding Expanders}
\author{Michael Dinitz\inst{1}\fnmsep\thanks{Supported in part by NSF grant \#1464239}
\and Michael Schapira\inst{2}\fnmsep\thanks{Supported in part by ISF grant 420/12, Israel Ministry of Science Grant 3-9772, Marie Curie Career Integration Grant, the Israeli Center for Research Excellence in Algorithms (I-CORE), Microsoft Research PhD Scholarship} 
\and Asaf Valadarsky\inst{3}\fnmsep$^{\star\star}$
\institute{Johns Hopkins University.
\email{mdinitz@cs.jhu.edu}
\and Hebrew University of Jerusalem. 
\email{schapiram@huji.ac.il}
\and Hebrew University of Jerusalem. 
\email{asaf.valadarsky@mail.huji.ac.il}
} }
\date{}
\author{Michael Dinitz\thanks{Supported in part by NSF grant \#1464239}\\Johns Hopkins University\\ \url{mdinitz@cs.jhu.edu} 
\and 
Michael Schapira\thanks{Supported in part by ISF grant 420/12, Israel Ministry of Science Grant 3-9772, Marie Curie Career Integration Grant, the Israeli Center for Research Excellence in Algorithms (I-CORE), and a Microsoft Research PhD Scholarship} \\Hebrew University of Jerusalem\\ \url{schapiram@huji.ac.il} 
\and 
Asaf Valadarsky\footnotemark[2]\\Hebrew University of Jerusalem\\ \url{asaf.valadarsky@mail.huji.ac.il}}
\begin{document}

\maketitle

\begin{abstract}
Deterministic constructions of expander graphs have been an important topic of research in computer science and mathematics, with many well-studied constructions of infinite families of expanders.  In some applications, though, an infinite family is not enough: we need expanders which are ``close" to each other.  We study the following question: Construct an an infinite sequence of expanders $G_0,G_1,\ldots,$ such that for every two consecutive graphs $G_i$ and $G_{i+1}$, $G_{i+1}$ can be obtained from $G_i$ by adding a single vertex and inserting/removing a small number of edges, which we call the \emph{expansion cost} of transitioning from $G_i$ to $G_{i+1}$. This question is very natural, e.g., in the context of datacenter networks, where the vertices represent racks of servers, and the expansion cost captures the amount of rewiring needed when adding another rack to the network. We present an \emph{explicit} construction of $d$-regular expanders with expansion cost at most $\frac{5d}{2}$, for any $d\geq 6$. Our construction leverages the notion of a ``2-lift'' of a graph.  This operation was first analyzed by Bilu and Linial~\cite{BL06}, who repeatedly applied $2$-lifts to construct an infinite family of expanders which double in size from one expander to the next. Our  construction can be viewed as a way to ``interpolate'' between Bilu-Linial expanders with low expansion cost while preserving good edge expansion throughout. 
\ifprocs \else

While our main motivation is centralized (datacenter networks), we also get the best-known distributed expander construction in the ``self-healing" model.
\fi
\end{abstract}

\ifprocs
\else
\thispagestyle{empty}

\clearpage

\setcounter{page}{1}
\fi

\section{Introduction}

Expander graphs (aka expanders) have been the object of extensive study in theoretical computer science and mathematics (see e.g.~the survey of~\cite{HLW06}). Originally introduced in the context of building robust, high-performance communication networks~\cite{BP73}, expanders are both very natural from a purely mathematical perspective and play a key role in a host of other applications (from complexity theory to coding). While $d$-regular random graphs are, in fact, very good expanders~\cite{Bol88,Fri08}, many applications require \emph{explicit}, deterministic constructions of expanders.\footnote{Throughout this paper we will use ``explicit'' and ``deterministic'' interchangeably.} Consequently, a rich body of literature in graph theory deals with deterministic constructions of expanders, of which the best known examples are Margulis's construction~\cite{Mar73} (with Gabber and Galil's analysis~\cite{GG81}), algebraic constructions involving Cayley graphs such as that of Lubotzky, Phillips, and Sarnak~\cite{LPS88}, constructions that utilize the zig-zag product~\cite{RVW00}, and constructions that rely on the concept of $2$-lifts~\cite{BL06,MSS13}.

All of these constructions generate an infinite family of $d$-regular expanders. However, for important applications of expanders that arise in computer networking, this is not enough. Our primary motivating example are datacenters, which network an unprecedented number of computational nodes and are the subject of much recent attention in the networking research community. Consider a datacenter network represented as a graph, in which each vertex represents a rack of servers, and edges represent communication links between these racks (or, more accurately, between the so-called ``top-of-rack switches''). Expanders are natural candidates for datacenter network topologies as they fare well with respect to crucial objectives such as fault-tolerance and throughput~\cite{BP73,HLW06}. However, the number of racks $n$ in a datacenter grows regularly as new equipment is purchased and old equipment is upgraded, calling for an expander construction that can grow gracefully (see discussion of industry experience in~\cite{SHPG12}, and references therein).

We hence seek expander constructions that satisfy an extra constraint: incremental growth, or \emph{expandability}.  When a new rack is added to an existing datacenter, it is impractical to require that the datacenter be entirely rewired and reconfigured. Instead, adding a new rack should entail only a small number of local changes, leaving the vast majority of the network intact. From a theoretical perspective, this boils down to requiring  that the construction of expanders not only work for all $n$, but also involve very few edge insertions and deletions from one expander to the next.

Our aim, then, is to explicitly construct an infinite family of expanders such that (1) every member of the family has good (edge) expansion; and (2) every member of the family can be obtained from the previous member via the addition of a single vertex and only ``a few'' edge insertions and deletions. Can this be accomplished? What are the inherent tradeoffs (e.g., in terms of edge expansion vs.~number of edge insertions/deletions)? We formalize this question and take a first step in this direction. Specifically, we present the first construction of explicit expanding expanders and discuss its strengths and limitations.

\subsection{Our Results and Techniques}

We formally define edge expansion and expansion cost in Section~\ref{sec:prelims}.  We now provide an informal exposition. The \emph{edge expansion} of a set of vertices is the number of edges leaving the set divided by the size of the set, and the edge expansion of a graph is the worst-case edge expansion across all sets. The \emph{expansion cost} for a graph $G_i$ on $n$ vertices $\{1,\ldots,n\}$ and graph $G_{i+1}$ on $n+1$ vertices $\{1,\ldots,n+1\}$ is the number of edge insertions and removals required to transition from $G_i$ to $G_{i+1}$.  The expansion cost of a family of graphs $\{G_i=(V_i,G_i)\}$, where $V_{i+1}$ is the union of $V_i$ and an additional vertex, is the worst-case expansion cost across all consecutive pairs of graphs in the family. Observe that adding a new vertex to a $d$-regular graph while preserving $d$-regularity involves inserting $d$ edges between that vertex and the rest of the graph, and removing at least $\frac{d}{2}$ edges to ``make room'' for the new edges. Hence, $\frac{3d}{2}$ is a lower bound on the expansion cost of any family of $d$-regular graphs.

Our main result is an explicit construction of an infinite family of $d$-regular expanders with very good edge expansion and small expansion cost:
\begin{theorem} \label{thm:overview-main}
For any even degree $d \geq 6$, there exists an infinite sequence of explicitly constructed $d$-regular expanders $\{G_i=(V_i,E_i\}$ such that
\begin{enumerate}
\item $|V_0|=\frac{d}{2} + 1$, and for every $i\geq 0$, $|V_{i+1}|=|V_{i}|+1$.
\item The edge expansion of $G_i$ is at least $\frac{d}{3} - O(\sqrt{d \log^3 d})$  for every $i\geq 0$.
\item The expansion cost of the family $\{G_i\}$ is at most $\frac{5d}{2}$.
\end{enumerate}
\end{theorem}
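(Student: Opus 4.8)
The plan is to build the sequence by \emph{interpolating} between a Bilu--Linial base expander $G$ on $n$ vertices and its $2$-lift $\hat{G}$ on $2n$ vertices, adding a single vertex per step, and then recursing with $\hat{G}$ as the new base. First I would invoke the guarantee of~\cite{BL06}: there is a signing $s:E(G)\to\{\pm 1\}$ whose signed adjacency matrix has spectral radius $O(\sqrt{d\log^3 d})$, so the nontrivial eigenvalues of the $d$-regular $2$-lift $\hat{G}$ are all bounded by $O(\sqrt{d\log^3 d})$; consequently both $G$ and $\hat{G}$ have edge expansion $\tfrac{d}{2}-O(\sqrt{d\log^3 d})$. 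Fixing a vertex ordering, I define the $k$-th intermediate graph $G^{(k)}$ on $n+k$ vertices by \emph{splitting} the first $k$ vertices of $G$ into twin pairs wired according to $s$, while leaving the remaining vertices unsplit. By design $G^{(0)}=G$, $G^{(n)}=\hat{G}$, and $|V_{i+1}|=|V_i|+1$, which gives Part~1 immediately; choosing $G_0$ to be a $d$-regular (multi)graph on $\tfrac{d}{2}+1$ vertices fixes the base case, and each completed lift becomes the base of the next phase.

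Part~3 is then a local counting argument. The transition $G^{(k)}\to G^{(k+1)}$ touches only the fiber over the $(k{+}1)$-st vertex $v$ and its incident edges: inserting the new twin requires $d$ new edges, converting the edges at $v$ from their unsplit form to their lift (parallel/crossing) form rewires its incident edges, and restoring $d$-regularity at the affected endpoints forces the removal of the requisite ``make-room'' edges. I would tabulate these three contributions and show they sum to at most $d$ (new twin) $+\,d$ (rewiring at $v$) $+\,\tfrac{d}{2}$ (make-room removals) $=\tfrac{5d}{2}$, with the unavoidable $\tfrac{3d}{2}$ from the abstract's lower bound sitting inside this count and the extra $d$ paying to reconcile the frontier.

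Part~2 is the crux and the step I expect to be hardest. Unlike $G$ and $\hat{G}$, an intermediate graph $G^{(k)}$ is only a \emph{partial} lift, so it does not inherit the spectral bound of~\cite{BL06} directly, and a step-by-step perturbation argument from $\hat{G}$ is hopeless because the spectral errors would accumulate over all $k$ splits. Instead I would bound the edge expansion of each $G^{(k)}$ \emph{directly}: given $S$ with $|S|\le |V(G^{(k)})|/2$, project it to $T=\pi(S)$ in $V(G)$ and classify each split vertex as ``full'' (both twins in $S$) or ``half'' (exactly one twin in $S$). The edges leaving $S$ decompose into a base term controlled by the edge expansion of $G$, a cross term governed by the small spectral radius of $A_s$ (keeping the half-vertex contribution inside the $O(\sqrt{d\log^3 d})$ budget), and a \emph{frontier defect} from edges joining split to unsplit vertices. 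The real difficulty is that this frontier can be as large as $\Theta(dn)$, since expanders admit no small separators, so it cannot simply be discarded; the key observation is that, measured against $|S|$, its contribution degrades the expansion by at most $\tfrac{d}{6}$, which is exactly the gap between the $\tfrac{d}{2}$ expansion of the two endpoints and the claimed $\tfrac{d}{3}$. Carrying this bookkeeping through uniformly in $k$ yields edge expansion $\tfrac{d}{3}-O(\sqrt{d\log^3 d})$ for every graph in the family.
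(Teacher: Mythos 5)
Your construction and your treatment of Parts~1 and~3 match the paper's: interpolate between a Bilu--Linial expander and its $2$-lift by splitting one vertex per step, and charge the $\frac{5d}{2}$ locally (the paper's exact accounting is $3|U(u)|+\frac52|S(u)|$ with $2|U(u)|+|S(u)|=d$, whose worst case is precisely your $d+d+\frac{d}{2}$, attained when every neighbor of the split vertex has already split). The gap is in Part~2, which you correctly flag as the crux but do not close. The sentence ``measured against $|S|$, its contribution degrades the expansion by at most $\frac{d}{6}$'' \emph{is} the theorem: nothing in your decomposition forces it. Combining the $\frac{d}{2}-O(\sqrt{d\log^3 d})$ expansion of the endpoint BL expanders with the fact that split--unsplit edges carry half the weight of their lifted images yields only $\frac{d}{4}$, not $\frac{d}{3}$; and the paper shows (Theorem~\ref{thm:rayleigh}) that the intermediate graphs genuinely have $\lambda_2\ge \frac{d}{2}-\epsilon$, so no spectral bound on the intermediate graph itself --- including one assembled from $A$ and the signed matrix $A_s$ --- can beat $\frac{d}{4}$ via Cheeger. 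Nor is the frontier a small ``defect'' to be absorbed: in the tight example the cut separating all split from all unsplit vertices consists \emph{entirely} of weight-$1$ frontier edges and has expansion exactly $\frac{d}{3}+O(\sqrt{d\log^3 d})$.

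What is missing is the balancedness argument that converts the factor-$2$ weight loss on frontier edges into only a factor-$3/2$ loss in expansion. The paper does this by mapping each set $X$ \emph{forward} to its ``future'' $F(X)$ in the next BL expander $H$ (rather than projecting back to the base, which is awkward for your half-split vertices): (i) a combinatorial lemma gives $w_G(X,\bar X)\ge\frac12 w_H(F(X),F(\bar X))$; (ii) if $|X|<n/5$ then $|F(X)|\le 2|X|<\frac13|V_H|$, so the Expander Mixing Lemma in $H$ gives the future cut expansion at least $\frac{2d}{3}-O(\sqrt{d\log^3 d})$, which survives the halving; (iii) for $\frac{n}{5}\le|X|\le\frac{n}{2}$ one applies the Mixing Lemma separately to the four blocks (split/unsplit crossed with $X$/$\bar X$), keeps the split--split term unhalved, and verifies algebraically that
\[
\frac{|X|\,|\bar X|+|U(X)|\,|U(\bar X)|}{|X|\left(|X|+|\bar X|+|U(X)|+|U(\bar X)|\right)}\ \ge\ \frac13 .
\]
Some quantitative step of this kind --- exploiting that a cut which is balanced in $G$ becomes unbalanced in $H$ exactly to the extent that it contains unsplit vertices --- is unavoidable, and your proposal does not contain it.
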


The attentive reader might notice that we claim our graphs are $d$-regular, yet the number of vertices of the first graph in the sequence, $G_0$, is only $\frac{d}{2}+1$. This seeming contradiction is due to our use of multigraphs, i.e.,~graphs with parallel edges. In particular, $G_0$ is the complete graph on $\frac{d}{2} + 1$ vertices, but where every two vertices are connected by $2$ parallel edges. While expanders are traditionally simple graphs, all nice properties of $d$-regular expanders, including the relationships between edge and spectral expansion, continue to hold with essentially no change for $d$-regular ``expander multigraphs''.

Our construction technique is to first deterministically construct an infinite sequence of ``extremely good'' expanders by starting at $K_{\frac{d}{2} +1}$ and repeatedly ``2-lifting'' the graph~\cite{BL06}. This standard and well-studied approach to explicitly constructing an infinite sequence of expanders was introduced in the seminal work of Bilu and Linial~\cite{BL06}. However, as every 2-lift doubles the size of the graph, this construction can only generate expanders on $n$ vertices where $n = 2^i (\frac{d}{2}+1)$ for some $i \geq 1$.  We show how to ``interpolate'' between these graphs. Intuitively, rather than doubling the number of vertices all at once, we insert new vertices one at a time until reaching the next Bilu-Linial expander in the sequence. Our construction and proof crucially utilize the properties of $2$-lifts, as well as the flexibility afforded to us by using multigraphs.

While our main focus is on centralized constructions for use as datacenter networks, the fact that our construction is deterministic also allows for improved expander constructions in some \emph{distributed} models.  Most notably, we get improved ``self-healing" expanders.  In the self-healing model, nodes are either inserted or removed into the graph one at a time, and the algorithm must send logarithmic-size messages between nodes (in synchronous rounds) in order to recover to an expander upon node insertion or removal.  Clearly small expansion cost is a useful property in this context.  The best-known construction of self-healing expanders~\cite{Dex} gives an expander with edge expansion of at least $d / 20000$, $O(1)$ maximum degree, $O(1)$ topology changes, and $O(\log n)$ recovery time and message complexity (where the time and complexity bounds hold with high probability, while the other bounds hold deterministically).  Our construction gives a self-healing expander with two improvements: much larger edge expansion (approximately $d/6$ rather than $d/20000$), and deterministic complexity bounds.  In particular, we prove the following theorem:

\begin{theorem}
For any $d \geq 6$, there is a self-healing expander which is completely deterministic, has edge expansion at least $d/6 - o(d)$, has maximum degree $d$, has $O(d)$ topology changes, and has recovery time and message complexity of $O(\log n)$.
\end{theorem}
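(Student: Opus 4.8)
The plan is to realize the self-healing expander as a distributed implementation of the explicit expanding expander family $\{G_i\}$ from Theorem~\ref{thm:overview-main}. That entire sequence is deterministic and $d$-regular, has edge expansion $d/3 - o(d)$, and has consecutive graphs differing in one vertex and at most $\frac{5d}{2} = O(d)$ edges. The self-healing algorithm keeps the current network isomorphic to whichever $G_i$ matches the current number of nodes, and responds to each adversarial insertion or deletion by walking a single step along this sequence. Since everything is driven by the fixed construction, all guarantees will be deterministic rather than holding only with high probability.

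Node insertion is the easy direction. When the adversary adds a node and the network grows from a $G_i$-configuration to the next size, I would execute the construction's forward transition $G_i \to G_{i+1}$: the new vertex acquires its prescribed $d$ neighbors, and the $O(d)$ edge deletions mandated by the expansion-cost bound are performed. The result is isomorphic to $G_{i+1}$, so regularity and expansion are preserved and only $O(d)$ topology changes occur.

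Node deletion is the crux, and I expect it to be the main obstacle. The adversary may delete an arbitrary vertex $v$, not the ``last'' vertex that the construction would naturally remove to pass from $G_i$ back to $G_{i-1}$. I would handle this by \emph{node replacement}: designate the most-recently-inserted vertex $w$ (the vertex whose removal reverses the last lift-interpolation step), have $w$ assume $v$'s identity and inherit the appropriate connections, and then run the reverse of the construction's transition so that the network becomes isomorphic to $G_{i-1}$, at a cost of another $O(d)$ topology changes. The difficulty is twofold: first, $w$ and $v$ may be far apart, so the ``take over'' instruction and the data needed to rewire must be routed across the network; second, one must verify that the replacement-plus-reversal really yields a valid member of the family rather than a graph with degraded expansion. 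It is in controlling expansion through this process---where, for instance, a node may transiently carry edges from both its old and its assumed role, forcing a halving of the base degree to respect the maximum-degree bound---that the factor-of-two loss arises, yielding the stated $d/6 - o(d)$ rather than $d/3 - o(d)$.

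For the complexity bounds I would exploit the layered structure of the $2$-lift construction. Each vertex stores an $O(\log n)$-bit label encoding its copy at each of the $\log n$ levels of lifting, from which its neighbor set in any $G_i$ is computable locally given the fixed, published edge signings. Determining the edges to insert and delete after an operation therefore touches only $O(\log n)$ levels, and the only long-range communication---routing the replacement instruction from the deletion site to $w$ and back---traverses a diameter-$O(\log n)$ expander. Hence recovery completes in $O(\log n)$ synchronous rounds using $O(\log n)$-bit messages, and since every quantity is fixed by deterministic labels and signings, these bounds hold deterministically, which is exactly the improvement over the prior construction of~\cite{Dex}.
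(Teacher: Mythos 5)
Your overall architecture matches the paper's: maintain the deterministic sequence $\{G_n\}$ itself as the network, handle an insertion by executing the forward transition $G_{n-1} \to G_n$, handle the deletion of an arbitrary node $u$ by having the most-recently-added node $x$ (with $\{x\} = V(G_n) \setminus V(G_{n-1})$) undo its own insertion and assume $u$'s place, and use $O(\log n)$-bit labels recording the splitting history so that nodes can locally compute neighborhoods and route along shortest paths in the diameter-$O(\log n)$ expander. One operational detail you gloss over but the paper must handle: a freshly inserted node knows nothing, and even existing nodes do not know $n$, so the paper designates a coordinator node (name $\vec 0$, with its state replicated at its neighbors so it survives its own deletion) that maintains $n$ and tells the new node which vertex is supposed to split; your labels alone do not determine which $G_n$ the current network is.

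The genuine gap is your justification of the $d/6 - o(d)$ expansion bound. You attribute the factor-of-two loss to transient degree inflation during deletion recovery and a consequent ``halving of the base degree.'' That is not where the loss comes from, and as stated it does not yield the theorem: the guarantees concern the recovered graph, not intermediate states, and the recovered graph is exactly $G_{n-1}$, whose weighted expansion is $d/3 - o(d)$. The actual reason is that the construction of Theorem~\ref{thm:overview-main} is a multigraph with integer edge weights, while in the self-healing setting an edge is a physical communication link, so the paper maintains the \emph{unweighted} version, reducing every nonzero weight to $1$. Since the expansion analysis never uses a weight above $2$, this at most halves the weight of every cut while leaving set sizes unchanged, giving degrees between $d/2$ and $d$ (hence maximum degree $d$) and expansion at least $d/6 - o(d)$. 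Without this step your proposal establishes neither the expansion bound nor the degree bound it claims.
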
 

\subsection{Related Work}

The immediate precursor of this paper is a recent paper of Singla et al.~\cite{SHPG12}, which proposes random graphs as datacenter network topologies. \cite{SHPG12} presents a simple randomized algorithm for constructing a sequence of random regular graphs with small expansion cost. 
While using random graphs as datacenter topologies constitutes an important and thought-provoking experiment, the inherent unstructuredness of random graphs  poses obstacles to their adoption in practice.  Our aim, in contrast, is to \emph{explicitly} construct expanders with \emph{provable} guarantees on edge expansion and expansion cost.

The deterministic/explicit construction of expanders is a prominent research area in both mathematics and computer science. See the survey of Hoory, Linial, and Wigderson~\cite{HLW06}. Our approach relies on the seminal paper of Bilu and Linial~\cite{BL06}, which proposed and studied the notion of $2$-lifting a graph. They proved that when starting with any ``good'' expander, a random $2$-lift results in another good expander and, moreover, that this can be derandomized. Thus \cite{BL06} provides a means to deterministically construct an infinite sequence of expanders: start with a good expander and repeatedly 2-lift. All expanders in this sequence are proven to be quasi-Ramanujan graphs, and are conjectured to be Ramanujan graphs (i.e., have optimal spectral expansion). Marcus, Spielman, and Srivastava~\cite{MSS13} recently showed that this is indeed essentially true for \emph{bipartite} expanders.  

There has been significant work on using expanders in peer-to-peer networks and in distributed computing.  See, in particular, the continuous-discrete approach of Naor and Wieder~\cite{NW07}, and the self-healing expanders of~\cite{Dex}.  The main focus of this line of research is on the efficient design of \emph{distributed} systems, and so the goal is to minimize metrics like the number of messages between computational nodes, or the time required for nodes to join/leave the system.  Moreover, the actual degree does not matter (since edges are logical rather than physical links), as long as it is constant.  Our focus, in contrast, is on centralized constructions that work for any fixed degree $d$.

\section{Preliminaries: Expander Graphs and Expansion Cost} \label{sec:prelims}

\ifprocs All missing proofs can be found in the full version~\cite{full}. \fi
We adopt most of our notation from the survey of Hoory, Linial, and Wigderson on expanders~\cite{HLW06}.  Throughout this paper the graphs considered are multigraphs without self-loops, that is, may have parallel edges between any two vertices. We will commonly treat a multigraph as a weighted simple graph, in which the weight of each edge is an integer that specifies the number of parallel edges between the appropriate two vertices. Given such a weighted graph $G = (V,E,w)$, let $n = |V|$ and say that $G$ is $d$-regular if every vertex in $V$ has weighted degree $d$.  We let $N(u) = \{v \in V : \{u,v\} \in E\}$ be the neighborhood of vertex $u$ for any vertex $u \in V$. Traditionally, expanders are defined as simple graphs, but it is straightforward to see that all standard results on expanders used here continue to hold for multigraphs.

\vspace{0.05in}\noindent{\bf Expansion:} For $S, T \subseteq V$, let $E(S,T)$ denote the multiset of edges with one endpoint in $S$ and one endpoint in $T$, and let $\bar S = V \setminus S$.  If $G = (V,E,w)$ is a $d$-regular multigraph, then for every set $S \subseteq V$ with $1 \leq |S| \leq \frac{n}{2}$ the \emph{edge expansion} (referred to simply as the \emph{expansion}) of $S$ is
$h_G(S) = \frac{|E(S, \bar S)|}{|S|}$.
We will sometimes omit the subscript when $G$ is clear from context.  The edge expansion of $G$ is 
$h(G) = \min_{S \subseteq V : 1 \leq|S| \leq \frac{n}{2}} h_G(S)$.
We say that $G$ is an expander if $h(G)$ is large.  In particular, we want $h(G)$ to be at least $d/c$ for some constant $c$.

While much of our analysis is combinatorial, we also make extensive use of spectral analysis.  Given a multigraph $G$, the adjacency matrix of $G$ is an $n \times n$ matrix $A(G)$ in which the entry $A_{ij}$ specifies the number of edges between vertex $i$ and vertex $j$.  We let $\lambda_1(G) \geq \lambda_2(G) \geq \dots \geq \lambda_n(G)$ denote the eigenvalues of $A(G)$, and let $\lambda(G) = \max\{\lambda_2(G), |\lambda_n(G)|\}$. 

Cheeger's inequality (the discrete version) enables us relate the eigenvalues of a (multi)graph $G$ to the edge expansion of $G$:
\begin{theorem} \label{thm:cheeger}
$\frac{d-\lambda_2}{2} \leq h(G) \leq \sqrt{2 d (d - \lambda_2)}$.
\end{theorem}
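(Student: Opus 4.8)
The plan is to prove the two inequalities separately, as is standard for the discrete Cheeger inequality on $d$-regular (multi)graphs, with both directions built on one observation: since $G$ is $d$-regular, the all-ones vector $\mathbf 1$ is the top eigenvector of $A(G)$ with eigenvalue $\lambda_1 = d$, so the Laplacian $L = dI - A(G)$ is positive semidefinite with $\mathbf 1$ in its kernel, and the variational characterization of eigenvalues gives $x^T L x \ge (d-\lambda_2)\|x\|^2$ for every $x \perp \mathbf 1$. The quadratic form also factors combinatorially as $x^T L x = \sum_{\{u,v\}\in E}(x_u - x_v)^2$, and this identity is the bridge between the spectral and combinatorial pictures. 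For the easy (lower-bound) direction $\frac{d-\lambda_2}{2} \le h(G)$, I would fix any $S$ with $1 \le |S| \le n/2$ and plug the balanced indicator $x_v = |\bar S|$ for $v \in S$ and $x_v = -|S|$ for $v \in \bar S$ into this inequality. This $x$ is orthogonal to $\mathbf 1$, only the cut edges contribute to the quadratic form (each contributing $n^2$), and $\|x\|^2 = n\,|S|\,|\bar S|$, so rearranging yields $|E(S,\bar S)| \ge (d-\lambda_2)\frac{|S|\,|\bar S|}{n}$; dividing by $|S|$ and using $|\bar S| \ge n/2$ gives $h_G(S) \ge \frac{d-\lambda_2}{2}$, and minimizing over $S$ finishes this direction.

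The hard (upper-bound) direction $h(G) \le \sqrt{2d(d-\lambda_2)}$ is where the real work lies, and I would split it into a localization step and a rounding step. For localization, let $f$ be a $\lambda_2$-eigenvector, so that $(Lf)_v = (d-\lambda_2)f_v$ at every vertex; after negating $f$ if necessary so that $|\{v : f_v > 0\}| \le n/2$, set $g_v = \max(f_v, 0)$. A direct computation shows $(Lg)_v \le (d-\lambda_2)g_v$ on the support of $g$, since replacing each neighbor's value by its positive part only decreases the subtracted neighbor-sum. Multiplying by $g_v \ge 0$ and summing then gives $\frac{g^T L g}{g^T g} = \frac{\sum_{\{u,v\}\in E}(g_u-g_v)^2}{\sum_v g_v^2} \le d - \lambda_2$, producing a nonnegative vector $g$, supported on at most $n/2$ vertices, whose Rayleigh quotient is controlled by $d-\lambda_2$.

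For the rounding step I would convert $g$ into a sparse cut via a randomized sweep. Draw $\tau$ uniformly from $[0,\max_v g_v^2]$ and set $S_\tau = \{v : g_v^2 > \tau\}$, which always satisfies $|S_\tau| \le n/2$. Then $\expct{|S_\tau|} = \frac{\sum_v g_v^2}{\max_v g_v^2}$, while each edge $\{u,v\}$ is cut precisely when $\tau$ lands between $g_u^2$ and $g_v^2$, so $\expct{|E(S_\tau,\bar S_\tau)|} = \frac{1}{\max_v g_v^2}\sum_{\{u,v\}\in E}|g_u - g_v|\,(g_u+g_v)$. Applying Cauchy--Schwarz to this sum as $\sqrt{\sum_{\{u,v\}}(g_u-g_v)^2}\cdot\sqrt{\sum_{\{u,v\}}(g_u+g_v)^2}$ and bounding $\sum_{\{u,v\}}(g_u+g_v)^2 \le 2\sum_{\{u,v\}}(g_u^2+g_v^2) = 2d\sum_v g_v^2$ gives $\frac{\expct{|E(S_\tau,\bar S_\tau)|}}{\expct{|S_\tau|}} \le \sqrt{2d\cdot\frac{\sum_{\{u,v\}}(g_u-g_v)^2}{\sum_v g_v^2}} \le \sqrt{2d(d-\lambda_2)}$. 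Since a ratio of expectations is always attained or beaten by some realized outcome, there is a concrete threshold with $h_G(S_\tau) \le \sqrt{2d(d-\lambda_2)}$, and as $|S_\tau| \le n/2$ this is a legitimate cut for the definition of $h(G)$.

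I expect the main obstacle to be the rounding step, and specifically two points within it: choosing the threshold distribution so that we sweep on $g^2$ rather than $g$, which is exactly what makes the Cauchy--Schwarz split produce the clean factor $\sqrt{2d}$, and justifying that the bound on the ratio of expectations transfers to a single deterministic cut $S_\tau$ that is nonempty and of size at most $n/2$. By contrast, the localization lemma and the entire lower-bound direction are routine once the Laplacian quadratic form $\sum_{\{u,v\}\in E}(x_u-x_v)^2$ is in hand.
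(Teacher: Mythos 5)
The paper does not prove this statement at all: it is quoted as the standard discrete Cheeger inequality (with the Alon--Milman/Dodziuk upper bound) and used as a black box, so there is no in-paper argument to compare against. Your proof is the standard textbook derivation --- the balanced indicator vector $x_v=|\bar S|$ on $S$ and $-|S|$ on $\bar S$ for the easy direction, and truncation of the $\lambda_2$-eigenvector to its positive part followed by a threshold sweep on $g_v^2$ with Cauchy--Schwarz for the hard direction --- and it is correct, including the two delicate points you flag (sweeping on $g^2$ rather than $g$, and passing from the ratio of expectations to a single nonempty cut of size at most $n/2$). It also carries over verbatim to the weighted multigraphs the paper uses, since every step only involves sums over edges with multiplicity.
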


We will also use the \emph{Expander Mixing Lemma}, which, informally, states that the number of edges between any two sets of vertices is very close to the expected number of edges between such sets in a random graph.
\begin{theorem}[\cite{AC89}] \label{thm:mixing}
$\left| |E(S,T)| - \frac{d |S| |T|}{n} \right| \leq \lambda \sqrt{|S| |T|}$ for all $S, T \subseteq V$.
\end{theorem}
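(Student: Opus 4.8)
The plan is to prove this purely spectrally, via the eigendecomposition of the adjacency matrix $A = A(G)$. Since $G$ is a $d$-regular multigraph without self-loops, $A$ is a real symmetric matrix, so it admits an orthonormal eigenbasis $v_1,\ldots,v_n$ with corresponding eigenvalues $d = \lambda_1 \geq \lambda_2 \geq \cdots \geq \lambda_n$; here $v_1 = \frac{1}{\sqrt{n}}\mathbf{1}$ is the normalized all-ones vector, which is an eigenvector of eigenvalue $d$ precisely because every row of $A$ sums to $d$. The key observation is that if $\mathbf{1}_S, \mathbf{1}_T \in \{0,1\}^n$ denote the indicator vectors of $S$ and $T$, then $|E(S,T)| = \mathbf{1}_S^\top A\,\mathbf{1}_T$, since the $(i,j)$ entry of $A$ counts the number of edges between $i$ and $j$.

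First I would expand the two indicators in the eigenbasis, writing $\mathbf{1}_S = \sum_i \alpha_i v_i$ and $\mathbf{1}_T = \sum_i \beta_i v_i$. The coefficients on the top eigenvector are exactly $\alpha_1 = \langle \mathbf{1}_S, v_1\rangle = |S|/\sqrt{n}$ and $\beta_1 = |T|/\sqrt{n}$. Substituting into the bilinear form and using orthonormality gives $\mathbf{1}_S^\top A\,\mathbf{1}_T = \sum_i \lambda_i \alpha_i \beta_i$. Isolating the $i=1$ term yields the ``expected'' count $\lambda_1 \alpha_1 \beta_1 = d\,|S||T|/n$, so that $|E(S,T)| - \frac{d|S||T|}{n} = \sum_{i\geq 2} \lambda_i \alpha_i \beta_i$. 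This reduces the entire statement to bounding the tail sum.

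To finish, I would bound the tail using $\lambda = \max\{\lambda_2, |\lambda_n|\}$, which by definition dominates $|\lambda_i|$ for every $i \geq 2$. Pulling out $\lambda$ and applying Cauchy--Schwarz gives
\[
\Bigl|\sum_{i\geq 2}\lambda_i \alpha_i \beta_i\Bigr| \leq \lambda \sum_{i\geq 2}|\alpha_i|\,|\beta_i| \leq \lambda \sqrt{\sum_{i\geq 2}\alpha_i^2}\,\sqrt{\sum_{i\geq 2}\beta_i^2}.
\]
Parseval's identity then gives $\sum_i \alpha_i^2 = \|\mathbf{1}_S\|^2 = |S|$ and $\sum_i \beta_i^2 = |T|$, and since dropping the $i=1$ term only decreases these sums we have $\sum_{i\geq 2}\alpha_i^2 \leq |S|$ and $\sum_{i\geq 2}\beta_i^2 \leq |T|$. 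Combining the bounds yields $\bigl||E(S,T)| - \frac{d|S||T|}{n}\bigr| \leq \lambda\sqrt{|S||T|}$, as required.

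I do not anticipate a serious obstacle, as this is the classical Alon--Chung argument and each step is routine linear algebra. The one point requiring care is that the tail bound genuinely needs $\lambda = \max\{\lambda_2, |\lambda_n|\}$ rather than just $\lambda_2$: the most negative eigenvalue $\lambda_n$ may have large absolute value and contribute to $\sum_{i\geq 2}\lambda_i\alpha_i\beta_i$, so we must bound $|\lambda_i|$ uniformly. The multigraph setting introduces no new difficulty, since $A$ remains real symmetric with the normalized all-ones vector as its $d$-eigenvector.
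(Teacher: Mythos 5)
Your proof is correct: it is the standard Alon--Chung spectral argument (expand the indicator vectors in an orthonormal eigenbasis containing $\frac{1}{\sqrt{n}}\mathbf{1}$, isolate the top term, and bound the tail by Cauchy--Schwarz), and the paper itself offers no proof of this statement, citing it directly from the literature. The only point worth flagging is the identification $|E(S,T)| = \mathbf{1}_S^\top A\,\mathbf{1}_T$, which counts edges inside $S \cap T$ twice; this matches the usual convention for the Mixing Lemma and is harmless here since the paper only ever applies the bound to disjoint pairs of sets, but it is worth stating explicitly if $S$ and $T$ are allowed to overlap.
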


\noindent{\bf Bilu-Linial:} The construction of $d$-regular expanders using ``lifts", due to Bilu and Linial~\cite{BL06}, plays a key role in our construction.  Informally, a graph $H$ is called a \emph{$k$-lift} of a (simple) graph $G$ if every vertex in $G$ is replaced by $k$ vertices in $H$, and every edge in $G$ is replaced with a perfect matching between the two sets of vertices in $H$ that represent the endpoints of that edge in $G$. To put this formally: a graph $H$ is called a \emph{$k$-lift} of graph $G$ if there is a function $\pi : V(H) \rightarrow V(G)$ such that the following two properties hold.  First, $|\pi^{-1}(u)| = k$ for all $u \in V(G)$.  Second, if $\{u,v\} \in E(G)$ then for every $x \in \pi^{-1}(u)$ there is exactly one $y \in \pi^{-1}(v)$ such that $\{x,y\} \in E(H)$.

We call the function $\pi$ the \emph{assignment function} for $H$.  We follow Bilu and Linial in only being concerned with $2$-lifts. Observe that if $H$ is a $2$-lift of $G$ then $|V(H)| = 2|V(G)|$ and $|E(H)| = 2|E(G)|$, and furthermore that if $G$ is $d$-regular then so is $H$.  Bilu and Linial proved that when starting out with a $d$-regular expander $G$ that also satisfies a certain sparsity condition (see Corollary 3.1 in~\cite{BL06}), one can deterministically and efficiently find a $2$-lift $H$ where $\lambda(H) \leq O(\sqrt{d \log^3 d})$ and moreover $H$ continues to satisfy the sparsity condition.   As $K_{d+1}$ (the $d$-regular complete graph on $d+1$ vertices) satisfies the sparsity condition, starting out with $K_{d+1}$ and repeatedly $2$-lifting generates a deterministic sequence of $d$-regular expanders, each of which twice as large as the previous, with edge expansion at least $\frac{d - O(\sqrt{d \log^3 d})}{2}$ throughout (see also Theorem 6.12 in~\cite{HLW06}).

\vspace{0.05in}\noindent{\bf Incremental Expansion:} We will also be concerned with the expansion cost of an infinite family of expander (multi)graphs.  Given two sets $A,B$, let $A \symd B = (A \setminus B) \cup (B \setminus A)$ denote their symmetric difference.  Let $\mathcal G = G_1, G_2, \dots$ be an infinite family of $d$-regular expanders, where $V(G_i) \subset V(G_{i+1})$ for all $i \geq 1$.
\begin{definition}
The \emph{expansion cost} of $\mathcal G$ is $\alpha(\mathcal G) = \max_{i \geq 1} |E(G_i) \symd E(G_{i+1})|$.
\end{definition}

As our focus is on multigraphs, the edge sets are in fact multisets, and so the expansion cost is the change in weight from $G_i$ to $G_{i+1}$.  Slightly more formally, if we let $x_i^e$ denote the number of copies of edge $e$ in $E(G_i)$, we have that $\alpha(\mathcal G) = \max_{i\geq 1} \sum_{e \in E(G_{i+1}) \cup E(G_i)} |x_e^i - x_e^{i+1}|$.  Observe that the expansion cost is defined for any infinite sequence of graphs, and that a large gap in size from one graph to the next trivially implies a large expansion cost. We restrict our attention henceforth to constructions that generate a $d$-regular graph on $n$ vertices for every integer $n$. We observe that the expansion cost of any such sequence is at least $\frac{3d}{2}$, since $E(G_{i+1}) \setminus E(G_i)$ must contain $d$ edges incident to the vertex in $V(G_{i+1}) \setminus V(G_i)$, and in order to maintain $d$-regularity there must be at least $\frac{d}{2}$ edges in $E(G_i) \setminus E(G_{i+1})$.

\section{Construction and Some Observations} \label{sec:construction}

We now formally present our construction of the sequence $\mathcal G$ of $d$-regular expanders and prove some simple properties of this construction.  

We begin with the complete graph on $\frac{d}{2} + 1$ vertices and assign every edge a weight of $2$.  This will serve as the first graph in $\mathcal G$. To simplify exposition, we will refer to this graph as $G_{\frac{d}{2} + 1}$. In general, the subscript $i$ in graph $G_i\in \mathcal G$ will henceforth refer to the number of vertices in $G_i$. Clearly, $G_{\frac{d}{2} + 1}$ is $d$-regular and has edge expansion $\frac{d}{2}$.  We now embed the Bilu-Linial sequence of graphs starting from $G_{\frac{d}{2} + 1}$ in $\mathcal G$: for every $i \geq 0$, let $G_{2^{i+1} (\frac{d}{2} + 1)}$ be the $2$-lift of $G_{2^i (\frac{d}{2} + 1)}$ guaranteed by~\cite{BL06} to have $\lambda(G_{2^i (\frac{d}{2} + 1)}) \leq O(\sqrt{d \log^3 d})$ (recall that the next graph in the sequence can be constructed in polynomial time).  Assign weight $2$ to every edge in this sequence of expanders. We refer to graphs in this subsequence of $\mathcal G$ as \emph{BL expanders}, since they are precisely $d/2$-regular BL expanders in which every edge is doubled.  Thus each BL expander is $d$-regular and by the Cheeger inequality has edge expansion at least $\frac{d}{2} - O(\sqrt{d \log^3 d})$.

We let $G^*_i$ denote $G_{2^i (\frac{d}{2} + 1)}$. We know, from the definition of a $2$-lift, that for each $i$ there exists a function $\pi: V(G^*_{i+1}) \rightarrow V(G^*_i)$ which is surjective and has $|\pi^{-1}(u)| = 2$ for all $u \in V(G^*_i)$.  As we want that $V(G^*_i) \subset V(G^*_{i+1})$, we identify one element of $\pi^{-1}(u)$ with $u$, i.e.~for each $u \in V(G^*_i)$ we will assume (without loss of generality) that $u \in V(G^*_{i+1})$ and $\pi(u) = u$.

To construct the infinite sequence $\mathcal G$ it is clearly sufficient to show how to create appropriate expanders for all values of $n$ between $2^i(\frac{d}{2}+1)$ and $2^{i+1}(\frac{d}{2}+1)$ for an arbitrary $i$.  Fix some $i \geq 0$, let $\pi : V(G^*_{i+1}) \rightarrow V(G^*_i)$ be the assignment function for the BL expanders, and initialize the sets $S = \emptyset$ (called the \emph{split} vertices) and $U = V(G^*_i)$ (called the \emph{unsplit} vertices). We apply the following algorithm to construct $G_{n+1}$ from $G_n$, starting with $n = 2^i(\frac{d}{2} + 1)$ and iterating until $n= 2^{i+1}(\frac{d}{2} + 1) - 1$.

\begin{enumerate}

\item {\bf Splitting a vertex $u$ into $u$ and $u'$.} Let $u$ be an arbitrary unsplit vertex.  We let the new vertex in $G_{n+1}$ that is not in $G_n$ be $u'$, the vertex in $\pi^{-1}(u)$ that is not $u$.  Let $S(u) = S \cap N(u)$ be the neighbors of $u$ that have already split, and let $U(u) = U \cap N(u)$ be the neighbors of $u$ that are unsplit. Here the neighborhood $N(u)$ is with respect to $G_n$.  \label{step:node}

\item {\bf Inserting edges from $u$ and $u'$ to unsplit neighbors.}  For every $v \in U(u)$, replace the edge from $u$ to $v$ (which we prove later always exists) with an edge from $u$ to $v$ of weight $1$ and an edge from $u'$ to $v$ of weight $1$. \label{step:unsplit}

\item {\bf Inserting edges from $u$ and $u'$ to split neighbors.} For every pair of vertices $v, v' \in S(u)$ with $\pi(v) = \pi(v')$, decrease the weight of $\{v,v'\}$ by $1$ and do one of the following:\label{step:split}
\begin{itemize}
\item if $\{u,v\} \in E(G^*_{i+1})$, assign $\{u,v\}$ a weight of $2$, remove $\{u,v'\}$, and add an edge $\{u',v'\}$ of weight $2$;
\item otherwise (that is, $\{u, v'\} \in E(G^*_{i+1})$), assign $\{u,v'\}$ a weight of $2$, remove $\{u,v\}$, and add an edge $\{u', v\}$ of weight $2$.
\end{itemize}

\item {\bf Inserting edges between $u$ and $u'$.}  Add an edge between $u$ and $u'$ of weight $|U(u)|$. \label{step:pair}

\item {\bf Mark $u$ and $u'$ as split.} Remove $u$ from $U$, add $u$ and $u'$ to $S$.
\end{enumerate}

We prove the following simple invariants. We will refer to two vertices $u, v$ as \emph{paired} if $\pi(u) = \pi(v)$.  Together, these lemmas imply that the algorithm is well-defined and that we have an infinite sequence of $d$-regular graphs that interpolates between BL expanders.

\begin{lemma} \label{lem:large-weight-invariant}
Let $u, u'$ be paired vertices with $\pi(u) = \pi(u') = u$.  Then throughout the execution of the algorithm, edge $\{u,u'\}$ exists if $u$ has already split and if there are neighbors of $u$ which are unsplit.  If $\{u,u'\}$ exists then it has weight equal to the number of neighbors of $u$ that are unsplit.
\end{lemma}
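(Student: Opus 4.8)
The plan is to prove a single, slightly stronger numerical invariant by induction on the iterations of the splitting algorithm, and to read off both claims of the lemma from it. The invariant I would maintain is: after every iteration, for each already-split original vertex $u$ the weight of $\{u,u'\}$ equals exactly the number of current neighbors of $u$ that are still unsplit (interpreting a missing edge as weight $0$), while if $u$ has not yet split then $u'$ is not present and $\{u,u'\}$ does not exist. The existence statement of the lemma is then immediate: $\{u,u'\}$ exists precisely when its weight is positive, i.e.\ precisely when $u$ has split and has at least one unsplit neighbor.

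Before carrying out the main induction I would first record an auxiliary structural fact about the edges incident to an unsplit vertex, since this is what makes the bookkeeping in Step~\ref{step:split} work out. \emph{Claim:} for any unsplit vertex $w$ and any original vertex $z$ adjacent to $w$ in $G^*_i$, the current graph contains a weight-$2$ edge $\{w,z\}$ if $z$ is unsplit, and two weight-$1$ edges $\{w,z\}$ and $\{w,z'\}$ if $z$ has already split into $z,z'$. This is itself proved by induction: the only operations that can alter an edge incident to the unsplit vertex $w$ are splits of neighbors of $w$, and in each such split $w$ plays the role of an unsplit neighbor, so its weight-$2$ edge is replaced by a pair of weight-$1$ edges exactly as prescribed in Step~\ref{step:unsplit}; Steps~\ref{step:split} and~\ref{step:pair} never touch edges incident to an unsplit vertex. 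The consequence I actually need is that the split neighbors of $w$ come in paired couples $z,z'$, each joined to $w$ by a weight-$1$ edge, so that when $w$ itself is split every such couple is processed exactly once in Step~\ref{step:split}.

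With this in hand the induction on iterations is routine. The base case is vacuous since $S=\emptyset$. For the inductive step consider splitting a vertex $w$. First note that the only within-couple edges of the form $\{u,u'\}$ that a single iteration can change are $\{w,w'\}$, created with weight $|U(w)|$ in Step~\ref{step:pair}, and the edges $\{v,v'\}$ decremented by $1$ in Step~\ref{step:split} (Step~\ref{step:unsplit} creates edges only between $w/w'$ and unsplit vertices, and the other edges moved in Step~\ref{step:split} join $w/w'$ to members of \emph{distinct} couples, so none of these is a within-couple edge). For the couple $(w,w')$: immediately after the split the unsplit neighbors of $w$ are exactly the set $U(w)$ (Step~\ref{step:unsplit} keeps each of them adjacent to $w$, and Step~\ref{step:split} only rewires already-split neighbors), so the assigned weight $|U(w)|$ equals the unsplit-neighbor count, as required. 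For any other split couple $(u,u')$: if $w$ is not adjacent to $u$ then neither $u$ nor $u'$ lies in $S(w)$, so $\{u,u'\}$ is untouched and $u$'s unsplit-neighbor count is unchanged; if $w$ is adjacent to $u$, then by the structural claim $u'$ is adjacent to $w$ as well, so $(u,u')$ is precisely one of the couples handled in Step~\ref{step:split} and $\{u,u'\}$ drops by $1$, while the number of unsplit neighbors of $u$ also drops by $1$ because $w$ becomes split and is the only neighbor of $u$ whose status changes. In every case the invariant is preserved.

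The step I expect to require the most care is the structural claim and its use in the adjacent-neighbor case: I must be sure that whenever a split vertex $u$ has $w$ as a neighbor at the moment $w$ is split, the paired vertex $u'$ is also adjacent to $w$, so that Step~\ref{step:split} decrements $\{u,u'\}$ by \emph{exactly} $1$ (neither $0$ nor $2$) and keeps the weight in lock-step with the unsplit-neighbor count. Everything else---the base case, the non-adjacent case, and the initialization of $\{w,w'\}$ in Step~\ref{step:pair}---is straightforward once the edge structure around unsplit vertices has been pinned down.
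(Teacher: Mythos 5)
Your proposal is correct and follows essentially the same route as the paper: an induction over iterations maintaining the invariant that the weight of $\{u,u'\}$ equals the number of unsplit neighbors of $u$, with the split of a neighbor $w$ decrementing both quantities by exactly one. The auxiliary structural claim you isolate (unsplit vertices have weight-$2$ edges to unsplit neighbors and a pair of weight-$1$ edges to each split couple) is precisely what the paper establishes separately as Lemma~\ref{lem:edge-weight-invariant}, so you have merely made explicit a step the paper's proof leaves implicit.
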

\ifprocs \else \begin{proof}
When $u$ is first split (when $u'$ is first created) the edge $\{u,u'\}$ has weight $|U(u)|$ by construction.  Now suppose that we are at some point in the execution of the algorithm, let $U(u)$ be the set of neighbors of the original vertex that are still unsplit, and assume that the weight of $\{u,u'\}$ is $|U(u)|$.  We will prove that this invariant continues to hold.  Let $v$ be the vertex that is currently being split, say into $v$ and $v'$.  If $v$ was not a neighbor of $u$ in the original expander then it is not a neighbor of $u$ or $u'$ in the current graph, and clearly splitting it has no effect on the weight of $\{u,u'\}$.  If $v$ was a neighbor of $u$, then when we split $v$ we decrease the weight of $\{u,u'\}$ by $1$.  Observe that now, though, there is one less neighbor of $u$ that is unsplit, and so the invariant is maintained. 
\end{proof}
\fi 

\begin{lemma} \label{lem:edge-weight-invariant}
Edges between unpaired split vertices always have weight $2$, edges between unsplit vertices always have weight $2$, and edges with one endpoint unsplit and one split have weight $1$.
\end{lemma}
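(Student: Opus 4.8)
The plan is to prove the invariant by induction on the number of split operations performed so far (equivalently, on $n$ as we interpolate from $G^*_i$ up to $G^*_{i+1}$). For the base case, $G_n = G^*_i$ is a BL expander in which every edge has weight $2$ and every vertex is unsplit (so $S = \emptyset$). The only non-vacuous clause is then ``edges between unsplit vertices have weight $2$,'' which holds by construction, and the other two clauses hold vacuously.

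For the inductive step, I would assume the invariant holds in $G_n$ and consider splitting an arbitrary unsplit vertex $u$ into $u$ and $u'$. The key observation is that the only edges whose weight or whose endpoint-types can change are those incident to $u$ or $u'$, together with the paired edges $\{v,v'\}$ modified in Step~\ref{step:split}; every other edge is untouched and so remains correct by the inductive hypothesis. It therefore suffices to check the edges touched by Steps~\ref{step:unsplit}--\ref{step:pair}. Partition $N(u)$ in $G_n$ into the unsplit neighbors $U(u)$ and the split neighbors $S(u)$. Since $u$ is unsplit before the split, the inductive hypothesis gives that each edge $\{u,v\}$ with $v \in U(u)$ has weight $2$ (unsplit--unsplit) and each edge $\{u,v\}$ with $v \in S(u)$ has weight $1$ (split--unsplit). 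For $v \in U(u)$, Step~\ref{step:unsplit} replaces the weight-$2$ edge $\{u,v\}$ by two weight-$1$ edges $\{u,v\}$ and $\{u',v\}$; after the split both $u$ and $u'$ are split while $v$ is still unsplit, so these become split--unsplit edges of weight $1$, exactly as required.

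The crux is Step~\ref{step:split}. Here I would first argue that the split neighbors of $u$ come in matched pairs: whenever a neighbor $w$ of $u$ was split earlier (into $w$ and $w'$) while $u$ was still unsplit, Step~\ref{step:unsplit} of that earlier split created both $\{u,w\}$ and $\{u,w'\}$ at weight $1$, so $u$ is adjacent in $G_n$ to \emph{both} lifts of every split neighbor. Hence $S(u)$ decomposes into pairs $v,v'$ with $\pi(v)=\pi(v')$, and before Step~\ref{step:split} both $\{u,v\}$ and $\{u,v'\}$ exist with weight $1$. Since $G^*_{i+1}$ is a $2$-lift of $G^*_i$, the four vertices $u,u',v,v'$ carry exactly one of the two matchings $\{u,v\},\{u',v'\}$ or $\{u,v'\},\{u',v\}$, and Step~\ref{step:split} redistributes the two available units of weight to realize it, producing two weight-$2$ edges between the corresponding split vertices. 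Because $v$ projects to a genuine neighbor $\pi(v)=\pi(v')$ of $u$ in $G^*_i$ and $G^*_i$ has no self-loops, we have $\pi(v)=\pi(v')\neq u = \pi(u)=\pi(u')$, so all of these new weight-$2$ edges run between \emph{unpaired} split vertices, as required.

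Finally, Step~\ref{step:pair} creates the paired edge $\{u,u'\}$, which is explicitly excluded from this lemma and is instead governed by Lemma~\ref{lem:large-weight-invariant}; likewise the weight decrease of $\{v,v'\}$ in Step~\ref{step:split} concerns a paired edge and is irrelevant to the three clauses here. Thus every edge relevant to the invariant has the claimed weight after the split, completing the induction. I expect the only real obstacle to be the bookkeeping in Step~\ref{step:split}: one must be certain that split neighbors genuinely appear as complete pairs $v,v'$ with both edges $\{u,v\}$ and $\{u,v'\}$ present at weight $1$, since this is precisely what licenses rerouting the two units of weight according to the $2$-lift matching of $G^*_{i+1}$.
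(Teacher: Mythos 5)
Your proof is correct and follows essentially the same route as the paper's: induction over the splitting process, with the base case $G^*_i$ (all weights $2$) and an inductive step that checks the effect of Steps~\ref{step:unsplit}--\ref{step:pair} on each class of edge. The paper's own proof is terser and simply cites Steps~\ref{step:unsplit} and~\ref{step:split} for the two nontrivial cases; your additional argument that the split neighbors of $u$ genuinely appear as complete pairs $v,v'$ with both weight-$1$ edges present (and that these pairs are unpaired with $u,u'$ since $G^*_i$ has no self-loops) is a worthwhile piece of bookkeeping that the paper leaves implicit.
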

\ifprocs \else \begin{proof}
We start out with $G^*_i$ in which no vertices are split and all edges have weight $2$, satisfying the lemma.  Suppose the lemma is satisfied at the moment we split some vertex $u$ into $u$ and $u'$.  Edges between unpaired vertices that do not have $u$ as an endpoint are unchanged.  Edges from $u$ or $u'$ to unsplit vertices have weight $1$ by step~\ref{step:unsplit} of the algorithm, and edges from $u$ or $u'$ to split vertices have weight $2$ by step~\ref{step:split}.  This implies the lemma. 
\end{proof} \fi

\begin{lemma} \label{lem:degree}
Every vertex has weighted degree $d$ throughout the execution of the algorithm.
\end{lemma}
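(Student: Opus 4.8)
The plan is to prove the statement by induction on the steps of the construction, maintaining the invariant that every vertex has weighted degree exactly $d$. The base case is immediate: the algorithm starts from $G^*_i$, which is a doubled $d/2$-regular BL expander and hence $d$-regular. For the inductive step I would assume that every vertex has degree $d$ immediately before some vertex $u$ is split into $u$ and $u'$, and then track the net change in weighted degree of every vertex caused by steps~\ref{step:unsplit}--\ref{step:pair}. It suffices to check the four relevant classes of vertices: $u$ itself, the new vertex $u'$, the unsplit neighbors $v \in U(u)$, and the split neighbors $v \in S(u)$; every other vertex is untouched by the splitting of $u$ and so retains degree $d$ trivially.

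First I would set up the bookkeeping using the edge-weight invariant of Lemma~\ref{lem:edge-weight-invariant}. Since $u$ is still unsplit at the moment it is processed, every edge from $u$ to an unsplit neighbor has weight $2$ and every edge from $u$ to a split neighbor has weight $1$. Moreover the split neighbors come in \emph{paired} pairs $\{v,v'\}$: whenever an original neighbor of $u$ in $G^*_i$ was itself split earlier, step~\ref{step:unsplit} of that iteration replaced the single weight-$2$ edge to $u$ by two weight-$1$ edges, one to each member of the pair (note that $u$, being unsplit, never enters $S(\cdot)$ for an earlier split, so it acquires no other new edges). Writing $t = |U(u)|$ for the number of unsplit neighbors and noting that the $|S(u)|$ split neighbors form exactly $k = |S(u)|/2$ such pairs, the degree of $u$ before the split is $2t + |S(u)| = 2t + 2k$, which equals $d$ by the inductive hypothesis; this also records that $t + k = d/2$.

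Then I would compute the net effect of each step. Step~\ref{step:unsplit} converts each of the $t$ weight-$2$ edges $\{u,v\}$ into weight-$1$ edges $\{u,v\}$ and $\{u',v\}$, so it removes weight $t$ from $u$, adds weight $t$ to $u'$, and leaves each unsplit neighbor $v$ with unchanged total weight $2$ to the pair $\{u,u'\}$. Step~\ref{step:split}, applied to each of the $k$ paired split neighbors $\{v,v'\}$, is the delicate case: in either branch it raises one of $u$'s two weight-$1$ edges to weight $2$ while deleting the other, so $\deg(u)$ is unchanged; it decrements $\{v,v'\}$ by $1$ while promoting the corresponding $u$-edge by $1$, so the retained neighbor keeps its degree; and for the reassigned neighbor it removes a weight-$1$ edge to $u$, drops $\{v,v'\}$ by $1$, but adds a weight-$2$ edge to $u'$, for a net change of $-1-1+2 = 0$. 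Thus each split neighbor keeps degree $d$, $u$ is unaffected, and $u'$ gains $2$ per pair, i.e.\ $2k$ in total. Finally step~\ref{step:pair} adds the edge $\{u,u'\}$ of weight $t = |U(u)|$, contributing $t$ to each of $u$ and $u'$. Summing, $u$ ends with $d - t + 0 + t = d$ and $u'$ ends with $0 + t + 2k + t = 2(t+k) = d$, completing the induction.

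I expect the main obstacle to be the careful accounting in step~\ref{step:split}: one has to invoke Lemma~\ref{lem:large-weight-invariant} to know that the edge $\{v,v'\}$ being decremented actually exists with positive weight, which holds precisely because $u$ is still an unsplit neighbor of $\pi(v)$ at this moment, and one must track the asymmetric roles of $v$ and $v'$ so that both keep degree $d$ while $u'$ absorbs exactly the weight $2k$ needed, which together with the $t$ collected in step~\ref{step:unsplit} and the $t$ collected in step~\ref{step:pair} brings it to $d$. Everything else is routine edge-weight bookkeeping anchored to Lemmas~\ref{lem:large-weight-invariant} and~\ref{lem:edge-weight-invariant}.
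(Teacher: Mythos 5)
Your proposal is correct and follows essentially the same route as the paper's proof: induction using the edge-weight invariant of Lemma~\ref{lem:edge-weight-invariant} to get $|S(u)|+2|U(u)|=d$, a case analysis over $u$, $u'$, the unsplit neighbors, and the paired split neighbors, and an appeal to Lemma~\ref{lem:large-weight-invariant} to ensure the $\{v,v'\}$ edge can be decremented. The only cosmetic difference is that you track net changes per step while the paper directly recomputes the final degrees; the accounting is identical.
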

\ifprocs \else \begin{proof}
We proceed by induction.  For the base case, take the original expander $G^*_i$: it is $\frac{d}{2}$ regular and every edge has weight $2$, so the weighted degree is $d$.  Now, suppose that we just split the vertex $u$ into $u$ and $u'$, and assume that before the split $u$ had weighted degree $d$. Lemma~\ref{lem:edge-weight-invariant} implies that before the split each edge from $u$ to a vertex in $S(u)$ had weight $1$, while each edge from $u$ to a vertex in $U(u)$ had weight $2$.  Thus, $|S(u)| + 2|U(u)| = d$.

After the split, the edges from $u$ and from $u'$ to vertices that are unsplit (i.e.~vertices in $U(u)$) have weight $1$, while the edges to vertices in $S(u)$ have weight $2$ (by Lemma~\ref{lem:edge-weight-invariant}).  However, each of $u$ and $u'$ is adjacent to only half of the vertices in $S(u)$, since for each $v, v'$ pair in $S(u)$ the edges $\{u,v\}$ and $\{u, v'\}$ are replaced by the appropriate matching (either $\{u, v\}, \{u',v'\}$ or $\{u,v'\}, \{u', v\}$).  By construction, we know that the weight of $\{u,u'\}$ is $|U(u)|$.  Hence, $u$ and $u'$ have weighted degree $2\frac{|S(u)|}{2} + |U(u)| + |U(u)| = |S(u)| + 2|U(u)| =  d$.

Now, consider some vertex $v \in U(u)$.  By Lemma~\ref{lem:edge-weight-invariant}, before splitting $u$ the edge from $u$ to $v$ had weight $2$.  After splitting, $v$ has a weight $1$ edge to $u$ and a weight $1$ edge to $u'$, and thus maintains its weighted degree of $d$.

Lastly, let $v \in S(u)$, with its paired vertex $v'$.  By Lemma~\ref{lem:edge-weight-invariant}, before splitting $u$ the edge from $u$ to $v$ (and the one to $v'$) had weight $1$.  After splitting, it is replaced by a single edge of weight $2$ (to either $u$ or $u'$, depending on the matching). However, the weight on the $\{v,v'\}$ edge is also decreased by $1$, and so the total weighted degree of $v$ is unchanged (note that Lemma~\ref{lem:large-weight-invariant} implies that since $v \in S(u)$ the weight of $\{v,v'\}$ before splitting $u$ is at least $1$, so this weight can be decreased by $1$ without becoming negative).
\end{proof} \fi

\begin{lemma} \label{lem:correct}
When all vertices have split, $G$ is precisely $G^*_{i+1}$ in which all edges have weight $2$.
\end{lemma}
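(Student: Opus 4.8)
The plan is to verify separately that the vertex set and the weighted edge set of the final graph $G$ agree with those of $G^*_{i+1}$ (with every edge doubled). The vertex set is immediate: each iteration splits one original vertex $u$ and introduces the unique partner $u' \in \pi^{-1}(u) \setminus \{u\}$, so after all $|V(G^*_i)|$ original vertices have split the vertex set is exactly $\bigcup_{u \in V(G^*_i)} \pi^{-1}(u) = V(G^*_{i+1})$. The substance of the lemma is therefore the edge set, and I would handle it by partitioning the pairs of vertices of $G^*_{i+1}$ into three types according to $\pi$ and adjacency in $G^*_i$, showing in each case that the final weight matches the doubled lift.

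The two easy cases would be dispatched first. For a \emph{paired} pair $\{u,u'\}$ with $\pi(u)=\pi(u')$, there is no corresponding edge in $G^*_{i+1}$ (as $G^*_i$ is simple and self-loop-free), and by Lemma~\ref{lem:large-weight-invariant} the weight of $\{u,u'\}$ once all neighbors of $u$ have split equals the number of unsplit neighbors of $u$, namely zero; hence the edge is absent, as required. For a pair lying in two distinct fibers whose base vertices are \emph{non-adjacent} in $G^*_i$, I would observe that every step of the algorithm only ever touches edges incident to the vertex currently being split and only along edges that already existed in $G^*_i$; consequently no edge is ever created between such fibers, again matching $G^*_{i+1}$.

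The main case is a pair of distinct fibers $\{u,u'\}$, $\{w,w'\}$ with $\{u,w\}\in E(G^*_i)$, and this is where I would spend the effort. Assume without loss of generality that $u$ splits before $w$. When $u$ splits, $w$ is an unsplit neighbor, so Step~\ref{step:unsplit} replaces the weight-$2$ edge $\{u,w\}$ with weight-$1$ edges $\{u,w\}$ and $\{u',w\}$; Lemma~\ref{lem:edge-weight-invariant} guarantees these remain weight $1$ until $w$ splits. When $w$ subsequently splits, the pair $u,u'$ lies in $S(w)$, so Step~\ref{step:split} fires: it consults $E(G^*_{i+1})$ to determine which of $\{w,u\}$, $\{w,u'\}$ is the lifted edge, promotes that edge to weight $2$, deletes the other weight-$1$ edge, and inserts the complementary weight-$2$ edge incident to $w'$. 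In either branch the four resulting edges between the two fibers are precisely the lift matching prescribed by $G^*_{i+1}$, each of weight exactly $2$; and since both endpoints are now split, these edges are never modified again. The step requiring the most care here is exactly this weight bookkeeping: I must invoke Lemma~\ref{lem:edge-weight-invariant} to know the incoming weights are $1$ (so that the promotion to $2$ is correct), and Lemma~\ref{lem:large-weight-invariant} to know the decrement of the same-fiber edge $\{u,u'\}$ never underflows. Combining the three cases shows that every edge of $G^*_{i+1}$ appears in $G$ with weight $2$ and no other edges survive, which is the claim.
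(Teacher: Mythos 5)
Your proof is correct, and it establishes the same facts the paper does, but it organizes them differently. The paper runs a single induction over the splitting steps, maintaining the global invariant that the edges between non-paired split vertices are exactly the corresponding edges of $G^*_{i+1}$; once every vertex has split, this invariant together with Lemma~\ref{lem:edge-weight-invariant} (weight $2$ on non-paired split--split edges) and Lemma~\ref{lem:large-weight-invariant} (the paired edge $\{u,u'\}$ has weight equal to the number of unsplit neighbors of $u$, hence vanishes at the end) gives the lemma in one stroke. You instead decompose by the vertex pairs of $G^*_{i+1}$ --- same fiber, non-adjacent fibers, adjacent fibers --- and follow each pair through its lifecycle, which amounts to unrolling the paper's induction one fiber pair at a time. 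The substance is identical: both arguments ultimately rest on the observation that step~\ref{step:split} installs exactly the matching prescribed by $G^*_{i+1}$, plus the two weight lemmas. Your version makes explicit two points the paper's invariant handles implicitly, namely that no edge is ever created between fibers whose base vertices are non-adjacent in $G^*_i$, and that an edge between two already-split, non-paired vertices is never touched again; the latter you justify correctly by noting that the algorithm only alters edges incident to the vertex currently being split. The paper's formulation is more compact, while yours is more self-contained at the cost of these extra verifications. (One trivial slip: the matching installed between two adjacent fibers consists of two weight-$2$ edges, not four.)
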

\ifprocs \else \begin{proof}
We proceed by induction, with the inductive hypothesis that the edges between non-paired split vertices are exactly the edges between those vertices in $G^*_{i+1}$.  Initially there are no split vertices, so this holds.  Now suppose it holds for $G_n$, and suppose we create $G_{n+1}$ by splitting $u$ into $u$ and $u'$.  Then the only changes in the edges between split vertices are the addition of edges from $u$ and $u'$ to vertices in $S(u)$.  But step~\ref{step:split} explicitly creates those edges to be identical to the edges in $G^*_{i+1}$, and thus the inductive hypothesis continues to hold.  This, together with Lemmas~\ref{lem:large-weight-invariant}~and~\ref{lem:edge-weight-invariant}, implies the lemma. 
\end{proof} \fi


\section{Analysis: Expansion and Expansion Cost}\label{sec:analysis}

We next prove that that the expansion cost of our construction is small, and the edge expansion throughout is good. Specifically, we prove that the expansion cost is at most $\frac{5}{2}d$, and then prove some combinatorial lemmas which will immediately imply that the edge expansion is at least $\frac{d}{4} - O(\sqrt{d \log^3 d})$. We show in Section~\ref{sec:improved-expansion} how this bound on edge expansion can be improved to a tight lower bound of $\frac{d}{3} - O(\sqrt{d \log^3 d})$ via a more delicate, spectral analysis combined with the combinatorial lemmas from this section.  

We begin by analyzing the expansion cost.

\begin{theorem} \label{thm:cost}
$\alpha(\mathcal G) \leq \frac52 d$.
\end{theorem}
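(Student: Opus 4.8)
The plan is to bound $\alpha(\mathcal G) = \max_i |E(G_i) \symd E(G_{i+1})|$ by carefully accounting for every edge weight change caused by a single splitting operation, since each transition $G_n \to G_{n+1}$ corresponds to splitting exactly one vertex $u$ into $u$ and $u'$. I would fix an arbitrary split of $u$ and tally the total change in edge weights contributed by each step of the algorithm separately, then sum. The key quantities are $|U(u)|$ (unsplit neighbors) and $|S(u)|$ (split neighbors), and from Lemma~\ref{lem:degree} we have the crucial constraint $|S(u)| + 2|U(u)| = d$, which I expect to be the workhorse for converting the raw count into a bound in terms of $d$.

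First I would handle the unsplit neighbors (step~\ref{step:unsplit}): for each $v \in U(u)$ we remove a weight-$2$ edge $\{u,v\}$ and insert a weight-$1$ edge $\{u,v\}$ and a weight-$1$ edge $\{u',v\}$. The symmetric-difference cost here is the sum of absolute weight changes: the $\{u,v\}$ edge changes by $1$ and the new $\{u',v\}$ edge contributes $1$, for a total of $2$ per unsplit neighbor, giving $2|U(u)|$. Next, the split neighbors (step~\ref{step:split}): for each paired pair $v,v' \in S(u)$ we decrease $w(\{v,v'\})$ by $1$ (cost $1$), and we reroute the two weight-$1$ edges from $u$ into a single weight-$2$ edge plus a new weight-$2$ edge incident to $u'$. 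Carefully counting the absolute changes on the four affected edges $\{u,v\},\{u,v'\},\{u',v\},\{u',v'\}$, I expect a per-pair cost of roughly $5$, so over the $|S(u)|/2$ pairs this yields about $\tfrac52 |S(u)|$. Finally the new edge $\{u,u'\}$ (step~\ref{step:pair}) of weight $|U(u)|$ contributes $|U(u)|$.

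Summing these contributions gives a total of the form $c_1 |U(u)| + c_2 |S(u)|$ for explicit constants, and I would then substitute the relation $|S(u)| + 2|U(u)| = d$ to eliminate one variable and verify the worst case over the feasible range of $|U(u)|$ (equivalently over how many neighbors have split so far) stays at most $\tfrac52 d$. The extreme cases worth checking are $|S(u)| = 0$ (all neighbors unsplit, so $|U(u)| = d/2$) and $|U(u)| = 0$ (all neighbors split, so $|S(u)| = d$), since the bound is linear in the split fraction and the maximum is attained at an endpoint.

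The main obstacle will be the split-neighbor accounting in step~\ref{step:split}: one must be scrupulous about which edges already exist with which weights (invoking Lemma~\ref{lem:edge-weight-invariant} to know the $\{u,v\},\{u,v'\}$ edges have weight $1$ and Lemma~\ref{lem:large-weight-invariant} to know $\{v,v'\}$ has positive weight before the decrement) and about the fact that before the split $u$ is adjacent to \emph{both} $v$ and $v'$ whereas afterward $u$ and $u'$ each inherit only one endpoint of each pair. Getting the per-pair symmetric-difference count exactly right, rather than over- or under-counting the rerouted weight-$2$ edges, is where the constant $\tfrac52$ is genuinely determined, so I would isolate and double-check that step before combining everything.
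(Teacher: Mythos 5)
Your proposal is correct and follows essentially the same route as the paper's proof: the identical decomposition into a cost of $2$ per unsplit neighbor, $5$ per paired pair of split neighbors, and $|U(u)|$ for the new $\{u,u'\}$ edge, followed by substituting $|S(u)| + 2|U(u)| = d$ (the total is $3|U(u)| + \tfrac52|S(u)| = \tfrac52 d - 2|U(u)| \leq \tfrac52 d$, with the worst case at $|U(u)|=0$ as you anticipated). The per-pair count of $5$ that you flag as the delicate step is exactly the paper's accounting: $1+1+2$ for the rerouted edges incident to $u$ and $u'$ plus $1$ for decrementing $\{v,v'\}$.
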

\begin{proof}
Suppose $G_{n+1}$ is obtained from $G_n$ by splitting vertex $u$ into $u$ and $u'$.  The transition from $G_n$ to $G_{n+1}$ entails the following changes in edge weights:
\begin{itemize}
\item {\bf A change of $2$ in edge weights per vertex in $U(u)$.} Each edge from vertex $u$ to a vertex $v \in U(u)$ changes its weight from $2$ to $1$ and an additional edge of weight $1$ is added from $u'$ to $v$, so there are $2$ edge changes per vertex in $U(u)$.

\item {\bf A change of $5$ in edge weights for every two paired vertices in $S(u)$.} Every pair of edges in $G_n$ (of weight $1$) from $u$ to paired vertices $v, v'$ in $S(u)$ is replaced by a pair of edges between $u, u'$ and $v, v'$, each of weight $2$, which results in a total change in edge weights of 4: 1 for increasing the weight of one of $u$'s outgoing edges to the pair $v,v'$ from $1$ to $2$, 1 for decreasing an edge of $u$'s other outgoing edge from $1$ to $0$, and $2$ for the new edge from $u'$ the pair $v, v'$.  In addition, the weight of the edge $(v,v')$ is decreased by $1$.  So, each pair of vertices in $S(u)$ induces a total change of $5$ in edge weights.

\item {\bf An additional change of $|U(u)|$ in edge weights.} An edge of weight $|U(u)|$ is added between $u$ and $u'$.

\end{itemize}

Hence, $|E(G_n) \symd E(G_{n+1}| = 2|U(u)| + 5|S(u)| / 2 + |U(u)| = 3|U(u)| + (5|S(u)| / 2)$.  As $2 |U(u)| + |S(u)| = d$ by Lemma~\ref{lem:degree}, this concludes the proof of the theorem. \ifprocs \qed \fi
\end{proof}

This analysis is tight for our algorithm. At some point in the execution of the algorithm, some vertex $u$ will be split after all of its neighboring vertices have already been split. As this entails a change in weight of $5$ for each of the $\frac{d}{2}$ paired vertices in $S(u)$, the resulting total change in edge weights will be $\frac{5}{2}d$.


\subsection{Edge Expansion} \label{sec:expansion}

We show, via a combinatorial argument, that every member of our sequence of graphs $\mathcal G$ has edge expansion at least $\frac{d}{4} - O(\sqrt{d \log^3 d})$. To this end, we show that for every $n$ between $2^i(\frac{d}{2}+1)$ and $2^{i+1}(\frac{d}{2}+1)$, the graph $G = G_n = (V,E)$ has edge expansion at least $\frac{d}{4}  - O(\sqrt{d \log^3 d})$. We will then show in Section~\ref{sec:improved-expansion} how this lower bound on edge expansion can be tightened to $\frac{d}{3} - O(\sqrt{d \log^3 d})$ via spectral analysis combined with the combinatorial lemmas proved here.

\begin{theorem} \label{thm:expansion}
For every $G\in\mathcal G$, $h(G) \geq \frac{d}{4} - O(\sqrt{d \log^3 d})$.
\end{theorem}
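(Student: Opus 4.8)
My plan is to prove the edge-expansion bound by relating an arbitrary graph $G_n$ in the sequence back to the two BL expanders it interpolates between, $G^*_i$ and $G^*_{i+1}$, whose expansion we already control via Cheeger's inequality (Theorem~\ref{thm:cheeger}) applied to the Bilu--Linial guarantee $\lambda \leq O(\sqrt{d\log^3 d})$. The key structural observation from the construction (Lemmas~\ref{lem:large-weight-invariant}--\ref{lem:correct}) is that at the intermediate step $G_n$, the vertex set $V$ is partitioned into the split vertices $S$ and the unsplit vertices $U$, where split vertices come in pairs $\{v,v'\}$ with $\pi(v)=\pi(v')$, and unsplit vertices $w$ correspond to single vertices of $G^*_i$ that have not yet been duplicated. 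I would fix an arbitrary set $A \subseteq V$ with $1 \le |A| \le n/2$ and bound $|E(A,\bar A)|$ from below.

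\textbf{The reduction to the BL expanders.}
The natural approach is to push $A$ down to $G^*_i$ and up to $G^*_{i+1}$ and argue that in at least one of these two graphs the expansion of the image of $A$ is good, then translate that back to a bound on $|E_{G_n}(A,\bar A)|$. Concretely, I would project $A$ onto $G^*_i$ via $\pi$: let $A^* = \pi(A) \subseteq V(G^*_i)$. A split pair $\{v,v'\}$ maps to a single vertex of $G^*_i$; an unsplit vertex maps to itself. The crossing edges of $A$ in $G_n$ correspond (up to the weight-$2$ halving and the internal $\{u,u'\}$ edges accounted for in Lemma~\ref{lem:large-weight-invariant}) to crossing edges of $A^*$ in $G^*_i$, \emph{provided} $A$ does not split too many pairs between $A$ and $\bar A$. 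The clean case is when $A$ respects pairing: then edge weights in $G_n$ between regions that are both-split or both-unsplit are $2$ (Lemma~\ref{lem:edge-weight-invariant}), so a crossing edge in $G^*_i$ contributes weight comparable to a crossing edge in $G_n$, and I get $h_{G_n}(A) \gtrsim h(G^*_i) \geq \frac{d}{2} - O(\sqrt{d\log^3 d})$, which is stronger than needed.

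\textbf{Handling split pairs across the cut --- the main obstacle.}
The difficulty, and what I expect to be the crux, is a set $A$ that separates many paired vertices $v \in A$, $v' \in \bar A$. For such a pair the edge $\{v,v'\}$ (when both are split and unpaired-to-others it has weight $2$ by Lemma~\ref{lem:edge-weight-invariant}; when still linked via the large $\{u,u'\}$ edge of Lemma~\ref{lem:large-weight-invariant} it carries the unsplit-neighbor weight) is itself a crossing edge, and these contributions should only \emph{help} expansion. The honest worry is the opposite: that projection collapses $A$ and $\bar A$ onto overlapping images in $G^*_i$, destroying the cut structure, so that a large $A$ with good balance maps to an unbalanced or tiny set downstairs where the Cheeger bound is vacuous. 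I would resolve this by a case split on the number of separated pairs $p = |\{\{v,v'\} : v \in A, v' \in \bar A,\ \pi(v)=\pi(v')\}|$. If $p$ is large (say $p \geq |A|/2$), then those $p$ separating edges alone give $|E(A,\bar A)| \geq p \geq |A|/2$, which already beats $\frac{d}{4}|A|$ for the relevant regime, or more carefully gives a $\Theta(|A|)$ contribution; combined with the degree bookkeeping this suffices. If $p$ is small, then $A$ \emph{essentially} respects the pairing, so projecting to $G^*_i$ (collapsing each intact pair to one vertex and each separated pair charge to the $p$ term) yields a well-defined balanced cut $(A^*, \overline{A^*})$ in $G^*_i$ whose crossing edges lower-bound those of $A$ up to the factor lost by halving weights; here the factor of roughly $2$ in the denominator is exactly the source of the $d/4$ (rather than $d/2$) in the statement. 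I would be careful that $|A^*| \le n_i/2$ or handle the symmetric complement, since $\pi$ roughly halves set sizes.

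\textbf{Assembling the bound.}
Finally I would combine: in every case $|E_{G_n}(A,\bar A)| \geq (\frac{d}{4} - O(\sqrt{d\log^3 d}))\,|A|$, taking the minimum over the two regimes. The unsplit vertices deserve a separate sanity check --- an unsplit $w$ still carries weight-$2$ edges to unsplit neighbors and weight-$1$ edges to split neighbors (Lemma~\ref{lem:edge-weight-invariant}), and since $G^*_i$ is $\tfrac{d}{2}$-regular with good expansion, a cut through the unsplit region inherits expansion from $G^*_i$ directly. The weakest regime dictates the $d/4$ constant, and I would flag that the sharper $d/3$ bound promised for Section~\ref{sec:improved-expansion} requires the spectral refinement rather than this purely combinatorial projection, so here I only aim for $\frac{d}{4} - O(\sqrt{d\log^3 d})$. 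The single hardest technical point remains controlling how $\pi$ distorts cut balance when $A$ separates a moderate number of pairs, i.e.\ making the ``small $p$'' projection genuinely balanced so that Cheeger on $G^*_i$ applies.
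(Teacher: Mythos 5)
Your overall plan---relate the cut in $G_n$ to a cut in a neighboring BL expander and pay a factor of $2$---is the right shape, and you correctly identify separated pairs as the crux. But your resolution of that crux has a genuine gap, in both branches of the case split on $p$. In the large-$p$ branch you assert that each separated pair $\{v,v'\}$ with $v\in A$, $v'\in\bar A$ contributes a crossing edge, citing Lemma~\ref{lem:edge-weight-invariant}; but that lemma's weight-$2$ guarantee applies only to \emph{unpaired} split vertices. The edge between \emph{paired} split vertices is governed by Lemma~\ref{lem:large-weight-invariant}: its weight equals the number of still-unsplit neighbors of $\pi(v)$, which is zero whenever the whole neighborhood has already split (and is always zero in $G^*_{i+1}$ itself, since a $2$-lift of a simple graph has no edge inside a fiber). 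So a set $A$ that takes one vertex from each of many fully-split pairs gets nothing from these edges, and $|E(A,\bar A)|\geq p$ is simply false. In the small-$p$ branch, misclassifying a separated pair when you collapse it to a single vertex of $G^*_i$ can cost up to $d$ in crossing weight (all $d/2$ weight-$2$ edges at $\pi(v)$ may be realized only by the copy on the wrong side), so to absorb the correction you need $p\lesssim \lambda|A|/d$ --- far below the $|A|/2$ threshold where your other case would (even if it worked) take over. There is a wide middle regime of $p$ that neither case covers.

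The missing idea is to project \emph{forward} rather than backward. The paper maps every $A\subseteq V$ to a ``future'' set $F(A)\subseteq V(G^*_{i+1})$: split vertices map to themselves and each unsplit vertex $u$ maps to both elements of $\pi^{-1}(u)$. Under this map $(F(A),F(\bar A))$ is \emph{always} a genuine partition of $V(G^*_{i+1})$ --- there is no collapsing and hence no separated-pair pathology --- and three exact weight comparisons (Lemmas~\ref{lem:SS}, \ref{lem:UU}, \ref{lem:SU}: split--split weights are preserved, while unsplit--unsplit and split--unsplit weights double) combine into $w_G(A,\bar A)\geq \frac12 w_H(F(A),F(\bar A))$ (Lemma~\ref{lem:half}) with no case analysis at all; dividing by $|A|\leq\min\{|F(A)|,|F(\bar A)|\}$ gives $\frac{d}{4}-O(\sqrt{d\log^3 d})$. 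To repair your write-up you would have to handle the split portion of $A$ against $G^*_{i+1}$ anyway, at which point you have rediscovered $F$; the downward projection via $\pi$ cannot be patched by a threshold on $p$.
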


We now prove Theorem~\ref{thm:expansion}. Let $S \subseteq V$ denote the set of vertices that have already split in $G$, and let $U \subseteq V$ be the set of vertices that are currently unsplit.  Let $H = (V_H, E_H) = G^*_{i+1}$ be the next BL expander in the sequence and let $\pi$ be its assignment function (note that the range of $\pi$ is the vertices of the previous BL expander, which includes the vertices $U$ in $G$).  For any subset $A \subseteq V$, let $F(A) \subseteq V_H$ denote the ``future'' set of $A$, in which all unsplit vertices in $A$ are split and both vertices appear in $F(A)$.  More formally, $F(A) = (A \cap S) \cup (\cup_{u \in A \cap U} \pi^{-1}(u))$.  For $X, Y \subseteq V_H$ with $X \cap Y = \emptyset$, let $w_H(X,Y)$ denote the total edge weight between $X$ and $Y$ in $H$.  Lastly, for $A, B \subseteq V$ with $A \cap B = \emptyset$ we define $w_G(A,B)$ similarly, except that we \emph{do not} include edge weights between paired vertices. Our proof proceeds by analyzing $w_G(A,B)$ for all possible different subsets of vertices $A,B$ in $G$. As $w_G(A,B)$ only reflects the edge weights in $G$ between non-paired vertices, the proof below lower bounds the actual edge expansion (which also includes weights between paired vertices).

\begin{lemma} \label{lem:SS}
If $A, B \subseteq S$ with $A \cap B = \emptyset$, then $w_H(F(A), F(B)) = w_G(A, B)$.
\end{lemma}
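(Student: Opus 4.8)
The plan is to first trivialise the $F$ operation and then reduce the claimed identity to a pairwise comparison of edge weights, dispatched entirely by the invariants already established for the construction.

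First I would observe that since $A \subseteq S$ and $B \subseteq S$, neither set contains any unsplit vertex, so $A \cap U = B \cap U = \emptyset$ and the definition of the future set collapses to $F(A) = A \cap S = A$ and likewise $F(B) = B$. Hence the lemma reduces to showing that $w_H(A,B) = w_G(A,B)$ for disjoint $A, B \subseteq S$. I would then expand both sides as sums over pairs $x \in A$, $y \in B$ of the weight of $\{x,y\}$, and compare the contribution of each pair to $w_G$ (which omits edges between paired vertices) against its contribution to $w_H$. There are two cases. If $x$ and $y$ are \emph{not} paired, then since both are split the invariant maintained in the proof of Lemma~\ref{lem:correct} guarantees that $\{x,y\}$ is an edge of $G$ exactly when it is an edge of $H = G^*_{i+1}$; moreover Lemma~\ref{lem:edge-weight-invariant} gives every such edge weight $2$ in $G$, matching the weight $2$ assigned to every edge of $H$. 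So these pairs contribute equally to both sides.

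If instead $x$ and $y$ are paired, i.e.\ $\pi(x) = \pi(y)$, then on the $G$ side $w_G$ excludes $\{x,y\}$ by definition (contributing $0$), while on the $H$ side the two copies of a single vertex are never adjacent in a $2$-lift, since a lift replaces each edge by a matching between the two fibres and places no edge inside a fibre; hence this pair contributes $0$ to $w_H$ as well. Summing the matching contributions over all pairs yields $w_G(A,B) = w_H(A,B)$, as required.

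The one point deserving care — and really the only substantive content here — is the paired case: the definition of $w_G$ was tailored precisely so that the edges it discards (the edges between paired split vertices, of weight $|U(u)|$ by Lemma~\ref{lem:large-weight-invariant}) are exactly the pairs that carry no edge in the $2$-lift $H$. Once this alignment is noted, every remaining pair is between non-paired split vertices and is governed verbatim by the correctness invariant, so no further calculation is needed.
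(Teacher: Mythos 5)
Your proof is correct and follows essentially the same route as the paper's: reduce via $F(A)=A$, $F(B)=B$, then match edges between $A$ and $B$ in $G$ and $H$ using the weight-$2$ invariant for unpaired split vertices, with the paired case dispatched by the definition of $w_G$ together with the absence of intra-fibre edges in a $2$-lift. You simply make explicit the case analysis that the paper compresses into a parenthetical remark.
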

\ifprocs \else \begin{proof}
Since $A, B \subseteq S$, we know by definition that $F(A) = A$ and $F(B) = B$.  This means that (if we ignore edges between $u_0, u_1$ with $\pi(u_0) = \pi(u_1)$) the edges in $G$ between $A$ and $B$ are precisely the edges in $H$ between $A$ and $B$, and moreover all such edges have weight $2$ in both $G$ and $H$.
\end{proof} \fi

\begin{lemma} \label{lem:UU}
If $A, B \subseteq U$ with $A \cap B = \emptyset$, then $w_H(F(A), F(B)) = 2 \cdot w_G(A, B)$.
\end{lemma}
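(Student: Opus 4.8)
The plan is to reduce the claim to an edge count in the underlying (simple) Bilu--Linial graphs $G^*_i$ and $G^*_{i+1}$, mirroring the proof of Lemma~\ref{lem:SS} but tracking the extra factor of $2$ contributed by the lift. Write $e(A,B)$ for the number of simple edges of $G^*_i$ with one endpoint in $A$ and one endpoint in $B$; since $U \subseteq V(G^*_i)$ this is well-defined for $A,B \subseteq U$. The first step is to show $w_G(A,B) = 2\,e(A,B)$. Because $A,B \subseteq U$, neither endpoint of any edge between $A$ and $B$ has been split yet, and the algorithm modifies an edge only at the step in which one of its endpoints is split; hence the edges of $G$ between $A$ and $B$ are exactly the (doubled) edges of $G^*_i$ between $A$ and $B$, each of weight $2$ by Lemma~\ref{lem:edge-weight-invariant}. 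Note also that distinct vertices of $U$ are never paired (for $u \in U$ we have $\pi(u)=u$), so the ``no paired pairs'' exclusion built into the definition of $w_G$ is vacuous here.

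Next I would compute $w_H(F(A),F(B))$. Here $F(A) = \bigcup_{u \in A} \pi^{-1}(u)$ and $F(B) = \bigcup_{v \in B} \pi^{-1}(v)$, and these two sets are disjoint because $A \cap B = \emptyset$ and the preimages $\pi^{-1}(\cdot)$ are pairwise disjoint, so $w_H(F(A),F(B))$ is well-defined. By the definition of a $2$-lift, every simple edge $\{u,v\}$ of $G^*_i$ lifts to a perfect matching between $\pi^{-1}(u)$ and $\pi^{-1}(v)$ in $G^*_{i+1}$, i.e.~to exactly two simple edges, while non-adjacent pairs contribute none; moreover every edge of $H$ has weight $2$. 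Summing over the $e(A,B)$ edges between $A$ and $B$ gives $w_H(F(A),F(B)) = 2 \cdot 2 \cdot e(A,B) = 4\,e(A,B)$. Combining the two computations yields $w_H(F(A),F(B)) = 4\,e(A,B) = 2 \cdot 2\,e(A,B) = 2\,w_G(A,B)$, which is exactly the claim.

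The only place requiring care---and the main point of the argument---is bookkeeping the factor of $2$ correctly: the weight-$2$ doubling is common to both $G$ and $H$ and therefore cancels, so the surviving factor of $2$ comes purely from the lift multiplicity (two matching edges in $G^*_{i+1}$ per single edge of $G^*_i$). This is precisely what distinguishes the present ``unsplit--unsplit'' case from the ``split--split'' case of Lemma~\ref{lem:SS}, where $F$ acts as the identity and no such doubling arises. A secondary sanity check is that no edge between two unsplit vertices has been altered partway through the algorithm, which holds because each splitting step only touches edges incident to the vertex currently being split.
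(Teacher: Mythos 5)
Your argument is correct and is essentially the paper's proof: both rest on the observation that each weight-$2$ edge between unsplit $a\in A$ and $b\in B$ corresponds under the lift to a perfect matching of two weight-$2$ edges between $\pi^{-1}(a)$ and $\pi^{-1}(b)$ in $H$, giving the factor of $2$. Your intermediate bookkeeping via the simple-edge count $e(A,B)$ and the two sanity checks (vacuity of the paired-vertex exclusion, and that edges between unsplit vertices are untouched by the algorithm) are just more explicit renderings of steps the paper treats as immediate.
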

\ifprocs \else \begin{proof}
Since $A$ and $B$ are entirely unsplit, by definition $F(A) = \cup_{u \in A} \pi^{-1}(u)$ and $F(B) = \cup_{u \in B} \pi^{-1}(u)$.  This means that if $a \in A$ and $b \in B$, there is an edge between $a$ and $b$ in $G$ if and only if there is a matching between $\pi^{-1}(a)$ and $\pi^{-1}(b)$ in $H$. Clearly, any such edge $\{a,b\}$ has weight $2$ in $G$ (since neither endpoint has split), and the two edges in the matching between $\pi^{-1}(a)$ and $\pi^{-1}(b)$ in $H$ also have weight $2$ (by definition). Hence, $w_H(F(A), F(B)) = 2 \cdot w_G(A, B)$. 
\end{proof} \fi

\begin{lemma} \label{lem:SU}
If $A \subseteq S$ and $B \subseteq U$, then $w_H(F(A), F(B)) = 2 \cdot w_G(A, B)$.
\end{lemma}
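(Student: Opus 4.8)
The plan is to reduce the identity to a vertex-by-vertex comparison, exactly mirroring the proof of Lemma~\ref{lem:UU} but accounting for the asymmetry between the split side $A$ and the unsplit side $B$. First I would unpack the two ``future'' sets from the hypotheses: since $A \subseteq S$ we have $A \cap U = \emptyset$, so $F(A) = A$; since $B \subseteq U$ we have $B \cap S = \emptyset$, so $F(B) = \bigcup_{b \in B} \pi^{-1}(b)$, i.e.\ $F(B)$ contains both copies $b, b'$ of every $b \in B$. I would also record the disjointness facts needed for $w_H$ and $w_G$ to be well-defined: a split vertex $a$ and an unsplit vertex $b$ can never be paired, since $\pi(a) = \pi(b) = b$ would force $a \in \pi^{-1}(b) = \{b,b'\}$, impossible because $a$ is a split vertex present in $G$ while $b$ is unsplit and $b'$ has not yet been created. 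This shows both $A \cap F(B) = \emptyset$ and that \emph{every} edge between $A$ and $B$ is counted in $w_G(A,B)$.

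Next I would argue pairwise: fix $a \in A$ and $b \in B$ and compare the total $H$-weight between $a$ and $\{b,b'\}$ with the $G$-weight between $a$ and $b$. On the $H$ side, since $H$ is a $2$-lift, the edges between $\pi^{-1}(\pi(a)) = \{a,a'\}$ and $\{b,b'\}$ form a perfect matching when $\{\pi(a),b\} \in E(G^*_i)$ and are empty otherwise; in the former case $a$ is matched to exactly one of $b,b'$, giving one weight-$2$ edge, so $w_H(\{a\},\{b,b'\}) = 2$, and in the latter case it is $0$. On the $G$ side I would invoke Lemma~\ref{lem:edge-weight-invariant}: any edge between the split vertex $a$ and the unsplit vertex $b$ has weight exactly $1$. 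It then remains to show this edge is present precisely when $\{\pi(a),b\} \in E(G^*_i)$: because splitting is irreversible and $b$ is still unsplit, $b$ was unsplit when $a$ was split, so $b \in U(a)$ at that moment iff $b$ was a neighbor of $a$ in the initial doubled $G^*_i$, i.e.\ iff $\{\pi(a),b\} \in E(G^*_i)$; Step~\ref{step:unsplit} then created a weight-$1$ edge $\{a,b\}$. Thus $w_G(\{a\},\{b\}) = 1$ exactly when $w_H(\{a\},\{b,b'\}) = 2$, and both vanish otherwise.

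Finally I would sum over pairs. Because the classes $\{b,b'\}$ are disjoint across $b \in B$ and $F(A) = A$, both weights decompose additively over pairs $(a,b)$, yielding
\[
w_H(F(A),F(B)) = \sum_{a \in A,\, b \in B} w_H(\{a\},\{b,b'\}) = \sum_{a \in A,\, b \in B} 2\, w_G(\{a\},\{b\}) = 2\, w_G(A,B).
\]

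I expect the main obstacle to be the $G$-side bookkeeping in the pairwise step: verifying that the weight-$1$ edge $\{a,b\}$ is created exactly in the adjacent case and then persists unchanged until $b$ splits. This amounts to checking that no application of Steps~\ref{step:unsplit}--\ref{step:pair} to a third vertex $w \notin \{a,b\}$ can alter the weight of $\{a,b\}$ or create a spurious $a$--$b$ edge, which holds because every edge those steps insert or modify is incident to the vertex currently being split (and $\{a,b\}$ is not a pair edge, as $a,b$ are unpaired). Everything else follows directly from the $2$-lift structure of $H$ together with Lemma~\ref{lem:edge-weight-invariant}.
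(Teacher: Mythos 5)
Your proposal is correct and follows essentially the same route as the paper: both use Lemma~\ref{lem:edge-weight-invariant} to pin the $G$-side weight at $1$ and the $2$-lift matching structure of $H$ to show each such edge corresponds to exactly one weight-$2$ edge into $\pi^{-1}(b)$. The paper phrases this as two edge-mapping inequalities rather than your pairwise sum, and is terser about why the edge $\{a,b\}$ exists exactly in the adjacent case, but the substance is identical.
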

\ifprocs \else \begin{proof}
Clearly $F(A) = A$ and $F(B) = \cup_{u \in B} \pi^{-1}(u)$.  Consider an edge $\{a, b\} \in E$ with $a \in A$ and $b \in B$. By Lemma~\ref{lem:edge-weight-invariant}, this edge has weight $1$.  Let $\{b_0, b_1\} = \pi^{-1}(b)$.  Then in $H$ exactly one of $\{a, b_0\}$ and $\{a, b_1\}$ exists, and this edge has weight $2$.  Thus $w_H(F(A), F(B)) \geq 2 \cdot w_G(A,B)$.  Similarly, let $\{a,b\} \in E_H$ with $a \in F(A)$ and $b \in F(B)$.  Then this edge has weight $2$, and in $G$ the edge $\{a, \pi(b)\}$ must exist and have weight $1$.  Hence $w_H(F(A), F(B)) \leq 2 \cdot w_G(A,B)$. \ifprocs \qed \fi
\end{proof} \fi

Combining these lemmas proves that every cut in $G$ has weight at least half of that of the associated ``future" cut, since we can divide any cut in $G$ into split and unsplit parts.  \ifprocs This implies Theorem~\ref{thm:expansion} as $h(H) \geq \frac{d}{2} - O(\sqrt{d \log^3 d})$. \fi

\begin{lemma} \label{lem:half}
If $(A, \bar A)$ is a cut in $G$, then $w_G(A, \bar A) \geq \frac12 w_H(F(A), F(\bar A))$.
\end{lemma}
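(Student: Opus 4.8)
The plan is to decompose an arbitrary cut $(A,\bar A)$ in $G$ into its split and unsplit parts and then invoke the three preceding lemmas. Specifically, write $A_S = A \cap S$, $A_U = A \cap U$, and likewise $\bar A_S = \bar A \cap S$, $\bar A_U = \bar A \cap U$. Since $F$ is additive on disjoint unions (the ``future'' of a disjoint union is the disjoint union of the futures, as is immediate from the definition $F(X) = (X\cap S)\cup(\cup_{u\in X\cap U}\pi^{-1}(u))$), we have $F(A) = F(A_S)\cup F(A_U)$ and $F(\bar A) = F(\bar A_S)\cup F(\bar A_U)$, and both $w_G$ and $w_H$ split additively across these disjoint pieces.

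First I would expand $w_H(F(A),F(\bar A))$ as the sum of four cross terms:
\begin{align*}
w_H(F(A),F(\bar A)) = \; & w_H(F(A_S),F(\bar A_S)) + w_H(F(A_S),F(\bar A_U)) \\
& + w_H(F(A_U),F(\bar A_S)) + w_H(F(A_U),F(\bar A_U)).
\end{align*}
Then I would bound each of these four terms in terms of the corresponding $w_G$ term using the lemmas just proved: Lemma~\ref{lem:SS} gives the split--split term equal to $w_G(A_S,\bar A_S)$; Lemma~\ref{lem:SU} gives each of the two mixed split--unsplit terms equal to $2\,w_G$ of the corresponding pair; and Lemma~\ref{lem:UU} gives the unsplit--unsplit term equal to $2\,w_G(A_U,\bar A_U)$. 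Summing, every term on the right is at most $2$ times its $w_G$ counterpart (the split--split term is exactly equal, hence certainly at most twice), so $w_H(F(A),F(\bar A)) \leq 2\,w_G(A,\bar A)$, which rearranges to the claimed inequality $w_G(A,\bar A)\geq \frac12 w_H(F(A),F(\bar A))$.

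The only subtlety I would flag is a bookkeeping one rather than a genuine obstacle: the lemmas are stated for pairs $(A,B)$ of a fixed type (both in $S$, both in $U$, or one of each), so I must make sure the four-way split is exhaustive and that the two mixed terms are each addressed by Lemma~\ref{lem:SU} with the roles of the split and unsplit argument assigned correctly. I also want to be careful that $w_G$ here is the restricted weight function that ignores edges between paired vertices, matching the convention under which the three lemmas were proved; since the desired bound is a lower bound on the true cut weight and $w_G$ already undercounts by omitting paired-vertex edges, working with this restricted quantity only strengthens the conclusion. The main ``hard part'' is therefore nonexistent at the level of ideas — the content lives entirely in the three preceding lemmas, and this step is purely the additive assembly — so I would keep the proof to a few lines, emphasizing the decomposition and the term-by-term application.
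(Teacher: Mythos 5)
Your proof is correct and is essentially identical to the paper's: both decompose the cut into the four split/unsplit cross terms, apply Lemmas~\ref{lem:SS}, \ref{lem:UU}, and~\ref{lem:SU} term by term, and observe that the split--split term's factor of $1$ is absorbed into the factor-of-$2$ bound. The only cosmetic difference is that you expand from the $w_H$ side while the paper expands from the $w_G$ side; the content is the same.
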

\ifprocs \else \begin{proof}
We divide each of $A$ and $\bar A$ into two parts:  let $S(A)$ denote the nodes in $A \cap S$, let $U(A) = A \cap U$, let $S(\bar A) = \bar A \cap S$, and let $U(\bar A) = \bar A \cap U$.  We then have that
\begin{align*}
w_G(A, \bar A) &= w_G(S(A), S(\bar A)) + w_G(S(A), U(\bar A)) + w_G(U(A), S(\bar A)) + w_G(U(A), U(\bar A)) \\
& = w_H(F(S(A)), F(S(\bar A))) + \frac12 w_H(F(S(A)), F(U(\bar A))) \\
&\qquad + \frac12 w_H(F(U(A)), F(S(\bar A))) + \frac12 w_H(F(U(A)), F(U(\bar A))) \\
& \geq \frac12 w_H(F(A), F(\bar A))
\end{align*}
where the first equality is by definition (since $S$ and $U$ are disjoint) and the second equality is due to Lemmas~\ref{lem:SS}, \ref{lem:UU}, and~\ref{lem:SU}. The last inequality is again because $F(S(A)), F(U(A)), F(S(\bar A))$, and $F(U(\bar A))$ are disjoint. 
\end{proof} \fi

\ifprocs \else
Let $X \subseteq V$ with $|X| \leq |\bar X|$.  We know that in $H$ the edge expansion of $X$ is at least $d/2 - O(\sqrt{d \log^3 d})$, and so
\begin{equation*}
h_G(X) = \frac{w_G(X, \bar X)}{|X|} \geq \frac{\frac12 w_H(F(X), \overline{F(X)})}{\min\{|F(X)|, |\overline{F(X)}|\}}  = \frac12 h_H(F(X)) \geq \frac{d}{4} - O\left(\sqrt{d \log^3 d}\right).
\end{equation*}

Theorem~\ref{thm:expansion} follows.
\fi

\ifprocs \vspace{-0.1in} \fi

\section{Improved Edge Expansion Analysis}\label{sec:improved-expansion}

We proved in Section~\ref{sec:expansion} that our sequence of graphs has edge expansion at least $\frac{d}{4} - O(\sqrt{d \log^3 d})$. We next apply spectral analysis to improve this lower bound. 

\begin{theorem} \label{thm:improved-expansion}
For every $G\in\mathcal G$, $h(G) \geq \frac{d}{3} - O(\sqrt{d \log^3 d})$.
\end{theorem}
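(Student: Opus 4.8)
The plan is to bound $h_G(X)$ for every $X \subseteq V$ with $|X| \le n/2$, translating each such cut into the next BL expander $H = G^*_{i+1}$, whose spectral behaviour we control, and then extracting the constant $\tfrac13$ via the Expander Mixing Lemma. The starting point is an exact reformulation of the combinatorial identities of Section~\ref{sec:expansion}. Writing $a = w_G(X\cap S,\ \bar X\cap S)$ for the weight of cut edges between split vertices and $b$ for the weight of the remaining cut edges (those with an unsplit endpoint), Lemmas~\ref{lem:SS}, \ref{lem:UU}, and~\ref{lem:SU} give $w_G(X,\bar X) = a+b$ and $w_H(F(X),\overline{F(X)}) = a + 2b$, whence the exact identity
\[
w_G(X,\bar X) \;=\; \tfrac12\, w_H\big(F(X),\overline{F(X)}\big) \;+\; \tfrac12\, w_H\big(X\cap S,\ \bar X \cap S\big).
\]
Since the true cut of $X$ in $G$ also contains the (discarded) edges between paired split vertices, this lower-bounds $h_G(X)$. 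The first term alone recovers the $\tfrac d4$ bound of Theorem~\ref{thm:expansion}; the second, non-negative term is what I will harvest to push the constant up to $\tfrac13$.

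First I would introduce the four size parameters $s = |X\cap S|$, $t = |X\cap U|$, $s' = |\bar X\cap S|$, $t' = |\bar X\cap U|$, so that $x := |X| = s+t$, $|F(X)| = s+2t$, and $N := |V(H)| = n + t + t'$. Applying the Expander Mixing Lemma (Theorem~\ref{thm:mixing}) in $H$, where $\lambda(H) = O(\sqrt{d\log^3 d})$, to each of the four cut-pieces $w_H(X\cap S,\bar X\cap S)$, $w_H(X\cap S, F(\bar X\cap U))$, $w_H(F(X\cap U),\bar X\cap S)$, and $w_H(F(X\cap U), F(\bar X\cap U))$ replaces every cut by its \emph{random-graph} value $\tfrac{d\,|Y||Z|}{N}$ up to an additive error $\lambda\sqrt{|Y||Z|}$. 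Collecting the main terms, a short computation yields
\[
h_G(X) \;\ge\; \frac{d}{N}\left[\,\bar x + \frac{t\,t'}{x}\,\right] \;-\; \frac{E}{x},
\qquad \bar x := s'+t' = n-x,
\]
where the aggregated mixing error satisfies $E = O\big(\lambda\sqrt{x\,n}\big)$.

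The heart of the argument is to show that the main term is always at least $\tfrac d3$. Using $\bar x = n-x$, $t \le x$, $t' \le \bar x$, and the hypothesis $x \le n/2$, this reduces to the elementary inequality $3\bar x + 3tt'/x \ge n + t + t'$, which I would verify by checking the (piecewise-linear) extreme points first in $t'$ and then in $t$; it holds with equality exactly at the balanced configuration $x = n/2$ with $X$ entirely unsplit and $\bar X$ entirely split (and its mirror image). Intuitively, this is precisely where the factor-$2$ loss in edge weight on unsplit edges is exactly offset by the factor-$2$ gain in the size of $F(X)$, and balancing the two regimes converts the $h_H = \tfrac d2$ expansion of $H$ into $\tfrac23\cdot\tfrac d2 = \tfrac d3$ for $G$. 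For every set of size $x = \Omega(n)$ we have $E/x = O(\lambda\sqrt{n/x}) = O(\sqrt{d\log^3 d})$, so such sets obey $h_G(X) \ge \tfrac d3 - O(\sqrt{d\log^3 d})$.

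The main obstacle is controlling the mixing error uniformly, since $E/x = O(\lambda\sqrt{n/x})$ blows up for small $X$: EML is inherently too weak for small sets and must be replaced by a direct argument there. For this I would fall back on the clean consequence of the identity above, $h_G(X) \ge \tfrac{s+2t}{2x}\,h_H(F(X))$ (this is the refinement of Lemma~\ref{lem:half}), which is valid whenever $|F(X)| \le N/2$ and carries no small-set penalty. When $t \ge x/3$ it already gives $h_G(X) \ge (1 + t/x)\big(\tfrac d4 - O(\lambda)\big) \ge \tfrac d3 - O(\lambda)$; when $t < x/3$ the set $F(X)$ is a small, predominantly split set in $H$, and here I would invoke the strong small-set expansion of BL expanders guaranteed by the Bilu--Linial sparsity condition (Corollary~3.1 of~\cite{BL06}), namely $h_H(Y) \ge d - o(d)$ for small $Y$, so that $h_G(X) \ge \tfrac13 h_H(F(X)) \ge \tfrac d3 - o(d)$. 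Combining the large-set (EML) and small-set (sparsity) regimes then covers all cuts. I expect the two delicate points to be (i) pinning down the exact extremal configuration in the optimization so that the constant is genuinely $\tfrac13$ rather than $\tfrac14$, and (ii) making the small-set expansion of $H$ quantitative enough that it transfers through $F(\cdot)$ with only an $O(\sqrt{d\log^3 d})$ loss.
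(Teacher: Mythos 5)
Your balanced-cut analysis is essentially the paper's own proof of Lemma~\ref{lem:balanced}: applying the Mixing Lemma to the four pieces of the future cut yields exactly the main term $\frac{d}{N}\bigl(\bar x + \frac{tt'}{x}\bigr)$ that appears there, and your bilinear inequality $3\bar x + 3tt'/x \ge n+t+t'$ is the same inequality the paper establishes by contradiction (your corner-checking argument is valid, since the expression is bilinear in $(t,t')$ over a box, and you correctly identify the extremal configuration that forces the constant $\tfrac13$). The exact identity $w_G(X,\bar X) = \tfrac12 w_H(F(X),\overline{F(X)}) + \tfrac12 w_H(X\cap S,\bar X\cap S)$ is a clean packaging of Lemmas~\ref{lem:SS}--\ref{lem:SU}, and your error bookkeeping $E/x = O(\lambda\sqrt{n/x})$ is if anything slightly sharper than the paper's $4\lambda|V_H|/|X|$.

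The genuine gap is in the small-set regime, sub-case $t < x/3$. There the generic bound $h_H(F(X)) \ge \tfrac d2 - O(\lambda)$ only gives $h_G(X) \ge \tfrac d4 - O(\lambda)$, so you genuinely need improved expansion of small sets in $H$ --- and your justification for it, ``Corollary 3.1 of~\cite{BL06} / the sparsity condition gives $h_H(Y) \ge d - o(d)$ for small $Y$,'' does not hold up: the Bilu--Linial sparsity condition is a local-sparsity \emph{hypothesis} used in their derandomization, not a small-set edge-expansion conclusion, so as written this step has no proof. The fact you need is true, but it must be extracted from the Mixing Lemma by a different application than the one you use elsewhere: apply Theorem~\ref{thm:mixing} to a random \emph{bisection} of $Y$ itself to bound the number of edges internal to $Y$, which yields $w_H(Y,\bar Y) \ge |Y|\bigl(d\cdot\frac{N-|Y|}{N} - 4\lambda\bigr)$ for every $Y$ with no small-set penalty (this is the paper's Lemma~\ref{lem:unbalanced-general}). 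Note also that the quantitative form you state, $d - o(d)$, is only valid for $|Y| = o(N)$; what you actually get for $|Y|$ up to a constant fraction of $N$ is $d(1 - |Y|/N) - O(\lambda)$, which is $\ge \tfrac{2d}{3} - O(\lambda)$ whenever $|Y| \le N/3$ and is exactly enough: combined with the factor-$\tfrac12$ transfer and $|F(X)|\ge|X|$ it gives $\tfrac d3 - O(\lambda)$ for all $|X| < n/5$, making your split into sub-cases (a) and (b) unnecessary. With Lemma~\ref{lem:unbalanced-general} substituted for the sparsity-condition appeal, your argument is complete and coincides with the paper's.
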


Interestingly, while we prove this theorem by using spectral properties of Bilu-Linial expanders, we cannot prove such a theorem through a direct spectral analysis of the expanders that we generate.  

\begin{theorem} \label{thm:rayleigh}
For any $\epsilon > 0$, there are an infinite number of graphs $G \in \mathcal G$ which have $\lambda_2(G) \geq d/2 - \epsilon$. 
\end{theorem}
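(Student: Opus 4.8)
The plan is to exploit the variational (Courant--Fischer) characterization $\lambda_2(G) = \max_{x \perp \mathbf{1},\, x \neq 0} \frac{x^\top A(G) x}{x^\top x}$, which holds because every $G \in \mathcal{G}$ is a connected $d$-regular multigraph and hence has top eigenvector $\mathbf{1}$ with eigenvalue $\lambda_1 = d$. It therefore suffices to exhibit, for infinitely many $i$, a single test vector orthogonal to $\mathbf{1}$ whose Rayleigh quotient is at least $\frac{d}{2} - \epsilon$. The structural feature driving the bound is the heavy parallel edge created by the very first split: starting from the BL expander $G^*_i$ (which is $\frac{d}{2}$-regular with every edge doubled), the first vertex $u$ we split has $U(u) = N(u)$ with $|U(u)| = \frac{d}{2}$, so Step~\ref{step:pair} inserts an edge $\{u,u'\}$ of weight $\frac{d}{2}$. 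Intuitively, $\{u, u'\}$ forms a tiny cluster that is tightly bound internally (weight $\frac{d}{2}$ across the pair) yet of total degree only $d$, exactly the configuration that forces $\lambda_2$ close to $\frac{d}{2}$.

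Concretely, I would fix $i$, let $G_n$ be the graph obtained from $G^*_i$ by this single split (so $n = 2^i(\frac{d}{2}+1) + 1$), and take $y = e_u + e_{u'}$. Only the weight-$\frac{d}{2}$ edge contributes to $y^\top A(G_n) y$, since the weight-$1$ edges from $u, u'$ to $N(u)$ touch coordinates of $y$ that are zero; this gives $y^\top A y = 2 \cdot \frac{d}{2} = d$ and $\|y\|^2 = 2$, an (unnormalized) quotient of exactly $\frac{d}{2}$. To enforce orthogonality I would project onto $\mathbf{1}^\perp$, setting $x = y - \frac{2}{n}\mathbf{1}$. Using $A\mathbf{1} = d\mathbf{1}$ one computes $\|x\|^2 = 2 - \frac{4}{n}$ and $x^\top A x = d - \frac{4d}{n}$, so
\begin{equation*}
\frac{x^\top A x}{x^\top x} = \frac{d(n-4)}{2(n-2)} = \frac{d}{2} - \frac{d}{n-2},
\end{equation*}
and hence $\lambda_2(G_n) \geq \frac{d}{2} - \frac{d}{n-2}$.

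Finally, since $n = 2^i(\frac{d}{2}+1) + 1 \to \infty$ as $i \to \infty$, every $i$ with $n \geq 2 + d/\epsilon$ yields $\lambda_2(G_n) \geq \frac{d}{2} - \epsilon$; as there are infinitely many such $i$ and the corresponding graphs are distinct (they have distinct vertex counts), the theorem follows. The computation itself is short, so the only real subtlety—and what I would be most careful to verify—is the structural claim that the first split produces a parallel edge of weight exactly $\frac{d}{2}$, which is precisely where Lemma~\ref{lem:large-weight-invariant} and the fact that $G^*_i$ is entirely unsplit with all weights $2$ are invoked; the remaining work is the routine check that projecting onto $\mathbf{1}^\perp$ perturbs the Rayleigh quotient by only $O(d/n)$, a quantity that vanishes in the limit and thus does not affect the conclusion.
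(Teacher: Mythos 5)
Your proposal is correct and follows essentially the same route as the paper: both exploit the weight-$\frac{d}{2}$ parallel edge created by the first split of $G^*_i$ and use the test vector $e_{v_0}+e_{v_1}-\frac{2}{n}\mathbf{1}$ (the paper writes its coordinates explicitly as $1-\frac{2}{n}$ and $-\frac{2}{n}$) in the Courant--Fischer bound. Your exact computation of the Rayleigh quotient as $\frac{d}{2}-\frac{d}{n-2}$ is in fact cleaner than the paper's $\Theta$-estimates, but the argument is the same.
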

\ifprocs \else \begin{proof}
Fix $i \geq 0$, and let $G_{n-1} = G^*_i$.  Let $G_n$ be the next graph in $\mathcal G$, obtained by splitting a single node of $G^*_i$ (say $v$) into two nodes (say $v_0$ and $v_1$).  So the weight of the edge between $v_0$ and $v_1$ in $G_n$ is $d/2$.  Recall that $\lambda_1(G_n) = d$ and the associated eigenvector is the vector $\bf{1/\sqrt{n}}$ in which every coordinate is $1/\sqrt{n}$.  So in order to lower bound $\lambda_2(G_n)$, we just need to find a vector $\vec{x}$ orthogonal to $\bf{1/\sqrt{n}}$ with Rayleigh quotient $(\vec{x}^T A \vec{x}) / (\vec{x}^T \vec{x}) \geq d/2 - \epsilon$.   

Let $\vec x$ be the vector with $1-2/n$ in the coordinate for $v_0$ and $1-2/n$ in the coordinate for $v_1$, and $-2/n$ in all other coordinates.  Then clearly $\vec x$ is orthogonal to $\bf{1/\sqrt{n}}$.  We begin by analyzing $\vec{x}^T A \vec{x} = \sum_i \sum_j A_{ij} x_i x_j$.  Simple calculations show that when $i$ is not in the neighborhood of $v$ it contributes $\Theta(d/n^2)$ to this sum, while if $i$ is in the neighborhood of $v$ then it contributes $\Theta(d/n^2 - 1/n) = -\Theta(1/n)$.  Finally, if $i$ is $v_0$ or $v_1$ then it contributes $\frac{d}{2}(1 - \frac2n)^2 - (\frac{d}{2} - 1)(\frac2n)(1-\frac{2}{n})$.  Thus 
\begin{equation*}
\vec{x}^T A \vec{x} \geq d \left(\frac{n-2}{n}\right)^2 - \Theta(d/n) \geq d - \epsilon
\end{equation*}
for large enough $n$. 

Now we are left with the easy task of computing $\vec{x}^T \vec{x}$.  This is clearly $2(\frac{n-2}{n})^2  + (n-2)(4/n^2) \leq 2+\epsilon$ for large enough $n$.  Putting this together, we get that the Rayleigh quotient of $x$ is at least $(d-\epsilon) / (2+\epsilon) \geq d/2 - \epsilon$ (for large enough $n$, by slightly changing $\epsilon$).  Thus $\lambda_2(G_n) \geq d/2 - \epsilon$.  This was true for all sufficiently large $n$, so by setting $i$ large enough we have this infinitely often.  
\end{proof} \fi

This implies that if we want to lower bound $h(G)$ by using Theorem~\ref{thm:cheeger} (the Cheeger inequalities), the best bound we could prove would be $d/4$.  Thus Theorem~\ref{thm:improved-expansion} beats the eigenvalue bound for this graph.

We now begin our proof of Theorem~\ref{thm:improved-expansion}.  We use the same terminology and notation as in the proof of Theorem~\ref{thm:expansion}. The key to improving our analysis lies in leveraging the fact that $H= G^*_{i+1}$, the next BL expander in the sequence of graphs $\mathcal G$, is a strong \emph{spectral} expander (i.e., $\lambda(G^*_{i+1}) \leq O(\sqrt{d \log^3 d})$).  We first handle the case of unbalanced cuts, then the more difficult case of nearly-balanced cuts.  We then show that the analysis in this section is tight.

\vspace{0.1in}\noindent{\bf Unbalanced Cuts.} We first show that in a strong spectral expander, unbalanced cuts give large expansion.  This is straightforward from the Mixing Lemma (Theorem~\ref{thm:mixing}) if the cut is not \emph{too} unbalanced, i.e.~if both sides of the cut are of linear size. However, a straightforward application of the Mixing Lemma fails when the small side is very small.  
We show that this can be overcome by using the full power of the Mixing Lemma: the two sets in Theorem~\ref{thm:mixing} need not be a cut, but can be any two sets.

\begin{lemma} \label{lem:unbalanced-general}
If $X \subseteq V_H$ with $|X| \leq n/2$, then $w_H(X, \bar X) \geq |X| \left(d \left(\frac{n-|X|}{n}\right) - 4\lambda\right)$.
\end{lemma}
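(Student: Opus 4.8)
The plan is to reduce the cut weight $w_H(X,\bar X)$ to the \emph{internal} edge weight of $X$ and then control the latter with the Mixing Lemma. Writing $e_H(X)$ for the total weight of edges with both endpoints in $X$, the $d$-regularity of $H$ gives the handshake identity $d|X| = 2e_H(X) + w_H(X,\bar X)$, since each internal edge contributes its weight twice to $\sum_{v \in X}\deg(v)$ while each crossing edge contributes it once. Hence $w_H(X,\bar X) = d|X| - 2e_H(X)$, and it suffices to establish the upper bound $e_H(X) \le \frac{d|X|^2}{2n} + 2\lambda|X|$.

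The key idea — exactly the ``full power'' of the Mixing Lemma alluded to in the surrounding text — is to apply Theorem~\ref{thm:mixing} with \emph{both} sets equal to $X$, rather than to the complementary pair $(X,\bar X)$. Since $E(X,X)$ accounts for the internal edges of $X$ (with the appropriate multiplicity from double-counting endpoints), this yields $2e_H(X) \le \frac{d|X|^2}{n} + \lambda\sqrt{|X|\cdot|X|} = \frac{d|X|^2}{n} + \lambda|X|$, i.e. $e_H(X) \le \frac{d|X|^2}{2n} + \frac{\lambda}{2}|X|$. Substituting into the identity above gives
\[
w_H(X,\bar X) = d|X| - 2e_H(X) \ge d|X| - \frac{d|X|^2}{n} - \lambda|X| = |X|\left(d\,\frac{n-|X|}{n} - \lambda\right),
\]
which is in fact stronger than the claimed bound; the constant $4$ in the statement leaves ample room to absorb any factor-of-two discrepancy in how $E(X,X)$ is counted.

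I expect the main conceptual obstacle to be precisely the point the text flags: the ``obvious'' route of applying the Mixing Lemma directly to the cut $(X,\bar X)$ produces the error term $\lambda\sqrt{|X||\bar X|}$, which is of order $\lambda\sqrt{n|X|}$ when $X$ is small and therefore swamps the main term $d|X|(n-|X|)/n \approx d|X|$. Switching to the pair $(X,X)$ replaces that error by the affordable $\lambda|X|$, and the only remaining work is the bookkeeping needed to fix the convention for $E(S,S)$ (whether internal edges are counted once or twice) so that the constant comes out correctly — for which the slack between $\lambda$ and $4\lambda$ is more than sufficient. A virtue of this approach is that it is uniform in $|X|$, so no separate case analysis for the balanced versus highly unbalanced regimes is ultimately required.
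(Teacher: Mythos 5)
Your proof is correct and reaches the same destination by a noticeably more direct route. Both arguments rest on the same skeleton --- bound the internal edge weight $e_H(X)$ via the Mixing Lemma and convert to a cut bound through the handshake identity $d|X| = 2e_H(X) + w_H(X,\bar X)$ --- but they differ in how the Mixing Lemma is invoked. The paper never applies Theorem~\ref{thm:mixing} to overlapping sets: instead it takes a bisection $(X_0,X_1)$ of $X$, bounds $|E(X_0,X_1)|$, and then uses an averaging/probabilistic argument over random bisections to lift that bound to all internal edges of $X$ (also working in the underlying $\frac{d}{2}$-regular simple graph and doubling at the end). Your application with $S=T=X$ shortcuts that entire averaging step. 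The one point you should tighten is the convention issue you wave at: if $E(X,X)$ counted each internal edge \emph{once}, the resulting bound on $e_H(X)$ would be off by a factor of $2$ in the \emph{main} term, yielding $w_H(X,\bar X) \geq |X|\left(d\,\frac{n-2|X|}{n} - 2\lambda\right)$, and no amount of slack in the $4\lambda$ error term absorbs a discrepancy of order $d|X|^2/n$. Fortunately the version of the Mixing Lemma that the spectral proof actually delivers for $S=T$ is the ordered-pair (twice-counted) one, i.e.\ $\left|\,2e_H(X) - \frac{d|X|^2}{n}\,\right| \leq \lambda |X|$, under which your calculation goes through verbatim and in fact proves the stronger bound with $\lambda$ in place of $4\lambda$; the paper's bisection device exists precisely to avoid having to assert this $S=T$ variant given its stated edge-multiset convention. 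So state and use that variant explicitly rather than appealing to the constant $4$ to paper over it, and your argument is complete --- and arguably cleaner, since it is uniform in $|X|$.
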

\ifprocs \else \begin{proof}
Recall that $H$ is $\frac{d}{2}$-regular and all edges have weight $2$.  Consider any bisection $(X_0, X_1)$ of $X$ such that $X_0 \cap X_1 = \emptyset$ and $|X_0| = |X_1| = |X|/2$.  The Mixing Lemma implies that $|E(X_0, X_1)| \leq \frac{(d/2) \cdot (|X|^2/4)}{n} + \lambda \frac{|X|}{2}$. We claim that this implies that the number of edges with both endpoints in $X$ is at most $\frac{d |X|^2}{4n} + \lambda |X|$.  To see this, suppose otherwise.   Then in a random bisection of $X$ (i.e., a random partition of $X$ into two equally-sized subsets) the expected number of edges across the bisection is larger than $\frac{d}{2} \cdot \frac{|X|^2}{4n} + \lambda \frac{|X|}{2}$.  Hence there exists a bisection of $X$ with at least that many edges across it, contradicting our upper bound on the number of edges across any bisection.

So the total number of edges with both endpoints in $X$ is at most $\frac{d |X|^2}{4n} + \lambda |X|$.  Each of these edges counts against the total degree for two vertices (each endpoint), and so
$|E(X, \bar X)| \geq \frac{d}{2} |X| - \frac{d|X|^2}{2n} - 2 \lambda |X| = |X| \left(\frac{d}{2} \cdot \frac{n - |X|}{n} - 2 \lambda\right)$.

This, and the fact that every edge has weight $2$, concludes the proof
\end{proof} \fi

\begin{lemma} \label{lem:unbalanced}
If $X \subseteq V$ with $|X| < \frac{n}{5}$, then $h_G(X) \geq \frac{d}{3} - O\left(\sqrt{d \log^3 d}\right)$.
\end{lemma}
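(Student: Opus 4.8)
The plan is to push the cut $(X,\bar X)$ forward into the next BL expander $H=G^*_{i+1}$ and then invoke the combinatorial and spectral machinery already in place. First I would record two elementary facts about the future map $F$: since split vertices come in pairs that both lie in $V$, while each unsplit $u\in V$ contributes both $u$ and $u'$ to $V_H$, we have $F(\bar X)=\overline{F(X)}$, and writing $b=|X\cap U|$ gives $|F(X)|=|X|+b$, so that $|X|\le |F(X)|\le 2|X|$. I would also note that $|V_H|\ge n$, since the vertex count only increases from $G_n$ up to the next BL expander (this matters because the $n$ appearing in Lemma~\ref{lem:unbalanced-general} is really the vertex count of $H$).

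Next I would chain the two lemmas. Lemma~\ref{lem:half} gives $w_G(X,\bar X)\ge \tfrac12 w_H(F(X),\overline{F(X)})$, and since $h_G(X)\ge w_G(X,\bar X)/|X|$ (the true cut weight only exceeds $w_G$, which ignores paired edges), it suffices to lower bound the right-hand side. Because $|F(X)|\le 2|X|<2n/5\le 2|V_H|/5<|V_H|/2$, the set $F(X)$ is the small side of its cut in $H$, so Lemma~\ref{lem:unbalanced-general} applies and yields $w_H(F(X),\overline{F(X)})\ge |F(X)|\bigl(d\,\tfrac{|V_H|-|F(X)|}{|V_H|}-4\lambda\bigr)$, where $\lambda=O(\sqrt{d\log^3 d})$ is the spectral bound for the BL expander. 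Combining, $h_G(X)\ge \tfrac{|F(X)|}{2|X|}\bigl(d\,\tfrac{|V_H|-|F(X)|}{|V_H|}-4\lambda\bigr)$.

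The hard part is extracting the constant $\tfrac13$ from this expression, and the tempting move---bounding $\tfrac{|F(X)|}{2|X|}\ge\tfrac12$ and $\tfrac{|V_H|-|F(X)|}{|V_H|}>\tfrac35$ separately---fails, giving only $\tfrac{3d}{10}<\tfrac d3$. The point is that these two factors are anti-correlated: the prefactor is smallest exactly when $F(X)$ is smallest, which is precisely when the second factor is largest. I would make this quantitative by setting $r=|F(X)|/|X|\in[1,2]$ and using $|X|<n/5\le |V_H|/5$, which turns the main term into $\tfrac{r}{2}\bigl(1-\tfrac{r|X|}{|V_H|}\bigr)d\ge \tfrac{r}{2}\bigl(1-\tfrac r5\bigr)d$. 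A one-line minimization over $r\in[1,2]$ (the expression has derivative $\tfrac12-\tfrac r5>0$ there, so the minimum is at $r=1$) gives $\ge\tfrac{2d}{5}$. Equivalently, one can treat $|F(X)|$ as the sole free variable on $[|X|,2|X|]$ and observe that $|F(X)|\bigl(|V_H|-|F(X)|\bigr)$ is increasing there (as $|F(X)|<|V_H|/2$), so the minimum sits at $|F(X)|=|X|$, giving $\tfrac d2\bigl(1-|X|/|V_H|\bigr)>\tfrac{2d}{5}$. Either way, $h_G(X)\ge \tfrac{2d}{5}-O(\sqrt{d\log^3 d})\ge \tfrac d3-O(\sqrt{d\log^3 d})$, which is in fact stronger than claimed; the $\tfrac13$ bottleneck comes only from the near-balanced cuts treated separately, so the slack here costs nothing.
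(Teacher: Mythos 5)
Your proof is correct and follows essentially the same route as the paper's: push the cut forward via Lemma~\ref{lem:half} and apply Lemma~\ref{lem:unbalanced-general} to the future cut in $H$, using $F(\bar X)=\overline{F(X)}$ and $|X|\le|F(X)|\le 2|X|$. The only difference is the final arithmetic: the paper simply observes $|F(X)|<\frac13|V_H|$ and $|F(X)|\ge|X|$ to land exactly on $\frac{d}{3}$, whereas your joint minimization over $|F(X)|\in[|X|,2|X|]$ gives the slightly stronger constant $\frac{2d}{5}$, which of course still implies the stated bound.
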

\ifprocs \else \begin{proof}
Clearly $|F(X)| \leq 2|X|$ and $|F(\bar X)| \geq |\bar X|$.  Thus, $|F(X)| < \frac13 |V_H|$, and so Lemma~\ref{lem:unbalanced-general} implies that $w_H(F(X), F(\bar X)) \geq \left(\frac{2d}{3} - O\left(\sqrt{d \log^3 d}\right)\right) |F(X)|$.  Now, Lemma~\ref{lem:half} and the fact that $|F(X)| \geq |X|$ imply that $w_G(X, \bar X) \geq \left(\frac{d}{3} - O\left(\sqrt{d \log^3 d}\right)\right) |X|$, giving the claimed expansion. 
\end{proof} \fi

\vspace{0.1in}\noindent{\bf Balanced Cuts.} We next prove that $h_G(X) \geq \frac{d}{3} - O(\sqrt{d \log^3 d})$ when $\frac{n}{5} \leq |X| \leq \frac{n}{2}$.  To accomplish this, we use the Mixing Lemma (again) to show that the expansion does not drop by a factor of $2$ from the future cut.  Intuitively, if $X$ contains many unsplit vertices, then even though $G$ only gets half of the weight from unsplit vertices than $H$ does, there are only half as many vertices and thus the expansion is basically preserved.\footnote{We point out that this is not quite accurate, since $F(X)$ could be larger than $F(\bar X)$.}  On the other hand, if $X$ contains many split vertices, then either $\bar X$ also contains many split vertices (and so by Lemma~\ref{lem:SS} we lose nothing), or $\bar X$ contains many unsplit vertices  (and so the cut is unbalanced enough for the Mixing Lemma to provide stronger bounds).

\begin{lemma} \label{lem:balanced}
If $X \subseteq V$ with $\frac{n}{5} \leq |X| \leq \frac{n}{2}$, then $h_G(X) \geq \frac{d}{3} - O\left(\sqrt{d \log^3 d}\right)$.
\end{lemma}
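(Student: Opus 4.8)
The plan is to sharpen Lemma~\ref{lem:half} for nearly-balanced cuts by replacing its crude factor-of-$\frac12$ estimate with the Mixing Lemma applied to $H = G^*_{i+1}$, whose small $\lambda = \lambda(H) = O(\sqrt{d\log^3 d})$ has not yet been exploited. Write $m = 2^i(\frac d2+1)$, so that $|V_H| = 2m$ and $H$ is $d$-regular as a weighted graph. Given the cut $(X,\bar X)$ in $G$, split each side into its split and unsplit parts and set $a = |X \cap S|$, $b = |X \cap U|$, $c = |\bar X \cap S|$, $e = |\bar X \cap U|$; note the structural identities $a + c = |S| = 2(n-m)$ and $b + e = |U| = 2m-n$. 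Following the decomposition in the proof of Lemma~\ref{lem:half}, I would write
\begin{equation*}
w_G(X,\bar X) = P_{SS} + \tfrac12\left(P_{SU} + P_{US} + P_{UU}\right),
\end{equation*}
where $P_{SS} = w_H(F(X\cap S), F(\bar X\cap S))$ and the other three terms are the analogous cut weights in $H$ involving at least one doubled (unsplit) part; here $F(X\cap S) = X \cap S$ has size $a$, while $F(X \cap U)$ has size $2b$, and likewise on the $\bar X$ side.

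First I would apply the Mixing Lemma (Theorem~\ref{thm:mixing}) to each of the four terms in $H$, using the sizes $a, 2b, c, 2e$. Summing the resulting lower bounds, the cross terms combine cleanly and give
\begin{equation*}
w_G(X,\bar X) \geq \frac{d}{2m}\left(ac + ae + bc + 2be\right) - O(\lambda)\left(\sqrt{ac} + \sqrt{ae} + \sqrt{bc} + \sqrt{be}\right).
\end{equation*}
Because $\frac n5 \le |X| \le \frac n2$, every root term is at most $\sqrt{|X|\cdot n} = O(|X|)$, so the error is $O(\lambda |X|) = O(\sqrt{d\log^3 d})\,|X|$; this is exactly where the hypothesis $|X| \ge n/5$ is needed (it fails for very small $X$, which is why that regime is handled separately in Lemma~\ref{lem:unbalanced}). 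Dividing by $|X| = a+b$, it then suffices to prove the purely combinatorial inequality
\begin{equation*}
ac + ae + bc + 2be \;\ge\; \frac{2m}{3}\,(a+b).
\end{equation*}

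The crux --- and the step I expect to be the main obstacle --- is this last inequality. I would eliminate $c, e$ via $c = 2(n-m) - a$ and $e = (2m-n) - b$; after simplification the left side becomes $an + 2mb - (a^2 + 2ab + 2b^2)$, and the key algebraic simplification $a^2 + 2ab + 2b^2 = (a+b)^2 + b^2$ reduces the required inequality, with $x := a+b$, to
\begin{equation*}
x\left(n - \tfrac{2m}{3} - x\right) + b\big((2m-n) - b\big) \;\ge\; 0 .
\end{equation*}
Since the second summand is concave in $b$, its minimum over the feasible range $\max(0, x - 2(n-m)) \le b \le \min(x, 2m-n)$ is attained at an endpoint, so it suffices to verify the inequality at the two extreme configurations ($X$ as split as possible and $X$ as unsplit as possible, truncated by feasibility). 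The delicate regime is $n < \frac{4m}{3}$ (more than half the vertices still unsplit), where the first term can be negative; there the structural constraint $b \le 2m-n$ forces $X$ to contain many unsplit vertices, and at the all-unsplit endpoint the inequality collapses to $x \le \frac{2m}{3}$, which holds because $x \le \frac n2 < \frac{2m}{3}$ precisely when $n < \frac{4m}{3}$. In the complementary regime $n \ge \frac{4m}{3}$ the first term is already nonnegative (using $x \le n/2$) and the second is nonnegative on its range, so the bound is immediate. Checking the remaining endpoint configurations reduces in each case to either $x \le n/2$ or $n \le \frac{4m}{3}$, completing the argument and yielding $h_G(X) \ge \frac d3 - O(\sqrt{d\log^3 d})$. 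Notably, both extreme configurations are tight exactly at $n = \frac{4m}{3}$, matching the tightness of the $\frac d3$ bound.
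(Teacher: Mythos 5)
Your proposal is correct and follows essentially the same route as the paper: the same four-way split/unsplit decomposition of the future cut, the same application of the Mixing Lemma to $H$ with the error controlled via $|X|\ge n/5$, and a reduction to the identical combinatorial inequality $ac+ae+bc+2be \ge \frac{2m}{3}(a+b)$ (the paper writes it as $\frac{|X||\bar X|+|U(X)||U(\bar X)|}{2m|X|}\ge\frac13$). The only difference is cosmetic: you verify that inequality by substituting the constraints and a concavity/endpoint check, whereas the paper argues by contradiction with a case split on $|U(\bar X)|$ versus $|X|/3$; both verifications are valid.
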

\begin{proof}
As before, let $S(X) = S \cap X, U(X) = U \cap X, S(\bar X) = S \cap \bar X$, and $U(\bar X) = U \cap \bar X$.  We first analyze the weight of the future cut using the Mixing Lemma (Theorem~\ref{thm:mixing}).
\begin{align}
\label{eq:weight-split} w_H(&F(X), F(\bar X)) = w_H(F(S(X)), F(S(\bar X))) + w_H(F(S(X)), F(U(\bar X))) \\
\nonumber &\quad + w_H(F(U(X)), F(S(\bar X))) + w_H(F(U(X)), F(U(\bar X))) \\
\label{eq:big-mixing} &\geq \frac{d  |F(S(X))| \cdot |F(S(\bar X))|}{|F(X)| + |F(\bar X)|}+ \frac{d  |F(S(X))| \cdot |F(U(\bar X))|}{|F(X)| + |F(\bar X)|}  \\
\nonumber & \quad + \frac{d \cdot |F(U(X))| \cdot |F(S(\bar X))|}{|F(X)| + |F(\bar X)|}+ \frac{d \cdot |F(U(X))| \cdot |F(U(\bar X))|}{|F(X)| + |F(\bar X)|} -4\lambda |V_H|\\
\label{eq:resize} & \geq d  \frac{|S(X)| (|S(\bar X)| + 2|U(\bar X)|) + 2 |U(X)| (|S(\bar X)| + 2 |U(\bar X)|)}{|X| + |\bar X| + |U(X)| + |U(\bar X)|}  - 4\lambda |V_H|.
\end{align}
Equation~\eqref{eq:weight-split} is simply the partition of the edges crossing the cut into the natural four sets.  Equation~\eqref{eq:big-mixing} is the application of the Mixing Lemma to each of the four parts, together with an upper bound of $|V_H|$ on all sets to bound the discrepancy due to the Mixing Lemma to $4 \lambda |V_H|$.  Equation~\eqref{eq:resize} exploits the fact that unsplit vertices in $V$ split into exactly two vertices in $V_H$ to get that $|V_H| = |F(X)| + |F(\bar X)| = |X| + |\bar X| + |U(X)| + |U(\bar X)|$, and that $|F(S(X))| = |S(X)|$, $|F(S(\bar X))| = |S(\bar X)|$, $|F(U(X))| = 2 |U(X)|$, and $|F(U(\bar X))| = 2 |U(\bar X)|$.  

We can now apply Lemmas~\ref{lem:SS}, \ref{lem:UU}, and \ref{lem:SU} to relate this to the weight in $G$.  The first term in~\eqref{eq:resize} remains unchanged, whereas the second, third, and fourth terms are reduced by a factor of $2$, and the final loss term also remains unchanged.  With these adjustments, we get that
\begin{align*}
w_G(X, \bar X) &\geq \frac{d\left(|S(X)| \left(|S(\bar X)| + |U(\bar X)|\right) + |U(X)| \left(|S(\bar X)| + 2 |U(\bar X)|\right)\right)}{|X| + |\bar X| + |U(X)| + |U(\bar X)|} - 4\lambda |V_H| \\
&= d \cdot \frac{|S(X)| \cdot |\bar X| + |U(X)| \cdot \left(|\bar X| + |U(\bar X)|\right)}{|X| + |\bar X| + |U(X)| + |U(\bar X)|} - 4\lambda |V_H| \\
&= d \cdot \frac{|X| \cdot |\bar X| + |U(X)| \cdot |U(\bar X)|}{|X| + |\bar X| + |U(X)| + |U(\bar X)|} - 4\lambda |V_H|.
\end{align*}

Note that $\lambda$ in this expression is $\lambda(H)$, not $\lambda(G)$.  We can now get the expansion simply by dividing by $|X|$, the size of the smaller side:
$h_G(X) \geq d \cdot \frac{|X| \cdot |\bar X| + |U(X)| \cdot |U(\bar X)|}{|X| \left(|X| + |\bar X| + |U(X)| + |U(\bar X)|\right)} - 40 \lambda$,
where for the final term we use the fact that $|V_H| \leq 2n$ and $|X| \geq \frac{n}{5}$ to get that $4 \lambda |V_H| / |X| \leq \lambda \cdot 8n / (\frac{n}{5}) = 40 \lambda$.

We claim that this expression is at least $\frac{d}{3}- O(\sqrt{d \log^3 d})$.  As $\lambda = O(\sqrt{d \log^3 d})$, it needs to be shown that $\frac{|X| \cdot |\bar X| + |U(X)| \cdot |U(\bar X)|}{|X| \left(|X| + |\bar X| + |U(X)| + |U(\bar X)|\right)} \geq \frac{1}{3}$.  Suppose for the sake of contradiction that this is false.  Then rearranging terms gives us that
\begin{equation} \label{eq:end}
|U(X)| \cdot (3 |U(\bar X)| - |X|) < |X|^2 - 2 |X| |\bar X| + |X| |U(\bar X)|.
\end{equation}

If $|U(\bar X)| > \frac{|X|}{3}$, then \eqref{eq:end} implies that
$|U(X)| < |X|^2 - 2 |X| |\bar X| + |X| |U(\bar X)| \leq |X|^2 - |X| |\bar X| \leq 0$,
where we used the fact that $|U(\bar X)| \leq |\bar X|$ and $|\bar X| \geq |X|$.  This is a contradiction, since $|U(X)|$ clearly cannot be negative.

Otherwise, if $|U(\bar X)| \leq \frac{|X|}{3}$, then~\eqref{eq:end} implies that
\begin{align*}
|U(X)| >\frac{2|X| |\bar X| - |X|^2 - |X| |U(\bar X)|}{|X| - 3 |U(\bar X)|} \geq \frac{|X|^2 - |X| |U(\bar X)|}{|X| - 3 |U(\bar X)|} \geq |X|,
\end{align*}
since $|\bar X| \geq |X|$.  This is also a contradiction, as $U(X) \subseteq X$, and hence the lemma follows.  \ifprocs \qed \fi
\end{proof}

Combining Lemma~\ref{lem:unbalanced} and Lemma~\ref{lem:balanced} concludes the proof of Theorem~\ref{thm:improved-expansion}.

\vspace{0.1in}\noindent{\bf Tightness of Analysis.} We show that the bound on the edge expansion from Theorem~\ref{thm:improved-expansion} is essentially tight and, moreover, is tight infinitely often.  

\begin{theorem} \label{thm:tight}
There exists a graph in $\mathcal G$ with edge expansion at most $\frac{d}{3} + \frac{2}{3}$ and, for every $i \geq 1$, there exists a graph in $\mathcal G$ between $G^*_i$ and $G^*_{i+1}$ with edge expansion at most $\frac{d}{3} + O(\sqrt{d \log^3 d})$.
\end{theorem}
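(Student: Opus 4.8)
The plan is to prove both tightness claims with a single, very natural sparse cut: put on one side the set $S$ of already-split vertices and on the other the set $U$ of unsplit vertices, and examine this cut at exactly the moment when the two sides are balanced. Since each split removes one vertex from $U$ and adds two to $S$, after $k$ splits of a base expander on $m$ vertices we have $|S| = 2k$ and $|U| = m-k$, which coincide (so that $|S| = n/2$) precisely when $k = m/3$. I would therefore take the interpolating graph $G_n \in \mathcal G$ obtained after $\lfloor m/3 \rfloor$ splits, which guarantees $|S| \le n/2$ so that $X := S$ is a legitimate smaller side, and makes the cut (nearly) balanced. The structural fact that makes this cut sparse is that every crossing edge runs between a split and an unsplit vertex — and no crossing edge joins a paired pair, since both members of a paired pair lie in $S$. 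Hence the genuine cut weight equals $w_G(S,U)$ and every crossing edge is one of the weight-$1$ edges of Lemma~\ref{lem:edge-weight-invariant}, so the full factor-of-two loss of Lemma~\ref{lem:half} is realized.

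For the exact bound $\frac d3 + \frac 23$ I would specialize to the first interpolation, where the base expander is $G^*_0 = K_{\frac d2 + 1}$ with doubled edges; write $m = \frac d2 + 1$. Because this graph is complete, each of the $2k$ split vertices is joined by a weight-$1$ edge to every one of the $m-k$ unsplit vertices, so the cut weight is exactly $2k(m-k)$ and $h_G(S) = \frac{2k(m-k)}{2k} = m-k$. Taking $k = m/3$ (an integer when $3 \mid m$) yields $h_G(S) = \frac{2m}{3} = \frac{d+2}{3} = \frac d3 + \frac 23$, as claimed.

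For the second statement I would keep the same cut but exploit the spectral quality of a genuine BL expander. Fix $i \ge 1$, let $m_i = |V(G^*_i)| = 2^i(\frac d2 + 1)$, set $H = G^*_{i+1}$, and take $G_n$ after $k = \lfloor m_i/3 \rfloor$ splits, so $|X| = |S| = 2k \approx \tfrac{2 m_i}{3}$. Since $X \subseteq S$ and $\bar X \subseteq U$, Lemma~\ref{lem:SU} gives $w_G(X, \bar X) = \tfrac12 w_H(F(X), F(\bar X))$, where $F(X) = S$ and $F(\bar X) = V_H \setminus S$ partition $V_H$; note $|V_H| = 2 m_i$ and $|F(X)| = |S| \approx \tfrac{|V_H|}{3}$. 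Applying the upper-bound direction of the Mixing Lemma (Theorem~\ref{thm:mixing}) to the cut $(F(X), F(\bar X))$ in $H$ gives
\[
w_H(F(X), F(\bar X)) \le \frac{d\,|F(X)|\,|F(\bar X)|}{|V_H|} + \lambda \sqrt{|F(X)|\,|F(\bar X)|} = \frac{4 d m_i}{9} + O(\lambda m_i),
\]
and therefore, dividing by $|X| = |F(X)|$,
\[
h_G(X) = \frac{\tfrac12 w_H(F(X), F(\bar X))}{|X|} \le \frac d3 + O(\lambda) = \frac d3 + O\left(\sqrt{d \log^3 d}\right),
\]
using $\lambda = \lambda(H) = O(\sqrt{d \log^3 d})$. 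The rounding $k = \lfloor m_i/3 \rfloor$ introduces only an $O(1)$ imbalance, which, divided by $|X| = \Theta(m_i)$, is absorbed into the error term.

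I expect the main difficulties to be bookkeeping rather than conceptual. The points that need care are: verifying that the balanced moment $k \approx m/3$ really does give $|S| \le n/2$, so that $S$ is the valid smaller side; confirming that no paired edge crosses the cut, so the edge count is exact and Lemma~\ref{lem:SU} applies verbatim; and checking that it is the \emph{upper}-bound direction of the Mixing Lemma (not the lower bound of Lemma~\ref{lem:unbalanced-general}) that is needed here, with the discrepancy term $\lambda\sqrt{|F(X)|\,|F(\bar X)|}/|X|$ correctly controlled as $O(\lambda)$. The one genuinely delicate point is the exact constant $\tfrac23$ in the first statement, which relies on $3 \mid (\tfrac d2 + 1)$ so that $k = m/3$ is an integer and the complete-graph count is exact.
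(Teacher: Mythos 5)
Your proposal is correct and follows essentially the same route as the paper: the cut between split and unsplit vertices taken after one third of the base expander's vertices have been split, with the exact complete-graph count giving $\frac{d}{3}+\frac{2}{3}$ for the first interpolation and the Mixing Lemma on the future cut in $G^*_{i+1}$ combined with Lemma~\ref{lem:SU} giving $\frac{d}{3}+O(\sqrt{d\log^3 d})$ in general. Your extra bookkeeping (checking $|S|\le n/2$, that no paired edge crosses the cut, and the divisibility caveat for the exact constant) only makes explicit what the paper leaves implicit.
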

\ifprocs \else \begin{proof}
Recall that the starting point of the construction of $\mathcal G$ was the graph $K_{\frac{d}{2} + 1}$ with weight $2$ on all edges. After inserting $\frac13 (\frac{d}{2} + 1)$ new vertices, the resulting graph $G$ is on $\frac{4}{3}(\frac{d}{2} + 1)$ vertices.  Consider the cut $(S, U)$ in $G$, where $S$ is all of the vertices that have been split and $U$ is all of the unsplit vertices.  Then $|S| = |U| = \frac{2}{3} (\frac{d}{2} + 1)$, and there is an edge of weight $1$ from every vertex in $S$ to every vertex in $U$.  Consequently, the expansion of $S$ is equal to $\frac{2}{3} (\frac{d}{2} + 1)$.

Similarly, suppose that the initial graph is $G^*_i$ on $n = 2^i(\frac{d}{2}+1)$ vertices and $\frac{n}{3}$ new vertices are inserted to get graph $G$. Consider the cut in $G$ with all the split vertices $S$ on one side and all of the unsplit vertices $U$ on the other.  This cut is a bisection, where each side has size $\frac{2n}{3}$.  In the associated future cut $(F(S), F(U))$ of $G^*_{i+1}$, $|F(S)| = \frac{2n}{3}$ and $|F(U)| = \frac{4n}{3}$.  A simple application of the Mixing Lemma establishes that the weight in $G^*_{i+1}$ across this future cut is at most $\frac{4}{9}nd + O(n \sqrt{d \log^3 d})$, and then Lemma~\ref{lem:SU} implies that the weight across $(S,U)$ in $G$ is at most $\frac29 nd + O(n \sqrt{d \log^3 d})$.  Hence, $h_G(S) \leq \frac{d}{3} + O(\sqrt{d \log^3 d})$. 
\end{proof} \fi

\ifprocs \else
\section{Self-Healing Expanders} \label{app:healing}
We will now show how our construction can be used to build self-healing expanders.  The self-healing model is a variant of the well-known $\mathcal{CONGEST}$ model for distributed computing.  In the $\mathcal{CONGEST}$ model,  we think of the current graph $G = (V, E)$ as the communication graph of a distributed system (in particular, as a peer-to-peer or overlay network).  Each node has a unique id (possibly set by an adversary) which can be used to identify it.  Time passes in synchronous rounds, and in each round every node can send an $O(\log n)$-bit message on each edge incident to it (possibly a different message on different edges), as well as receive a message on each edge.  Usually the complexity of algorithms in this model is given as bounds on the \emph{round complexity} (the number of rounds necessary for the algorithm to complete) and on the \emph{message complexity} (the total number of messages sent during the algorithm).  Local computation is free, since the focus is on the cost of communication.

A \emph{self-healing expander} (originally defined by~\cite{Dex}) is an algorithm in the $\mathcal{CONGEST}$ model which maintains an expander upon node insertions and deletions.  Slightly more formally, given a current graph $G$, the adversary can add a new node or delete a node.  If a node is added, the adversary connects it to a constant-sized subset of current nodes.  If a node is deleted, its neighbors are informed.  This results in an \emph{intermediate graph} $U$.  The recovery algorithm then needs to recover to an expander by changing edges (or adding or deleting edges).  Adding an edge between two nodes $u$ and $v$ requires sending a message from $u$ to $v$ (or vice versa).  Initially, a newly inserted node only knows its (adversarially chosen) id, and does not have any knowledge of the graph.  

The key assumption is that the adversary does not interfere during recovery: no more nodes fail or are deleted until recovery is complete.  However, the adversary is fully-adaptive -- it knows the entire state and all previous states, as well as the details of the algorithm.  

The important parameters of a self-healing expander are 1) the expansion of the graph, 2) the maximum degree, 3) the number of topology changes (i.e.~the expansion cost), 4) the recovery time (i.e.~the round complexity), and 5) the message complexity.  The current best bounds on this are due to Pandurangan, Robinson, and Trehan, who gave a construction they called DEX of a self-healing expander~\cite{Dex} with maximum degree $O(1)$, only $O(1)$ topology changes, and $O(\log n)$ recovery time and message complexity.  We can use our deterministic expander construction to get similar bounds, but with two improvements: much larger edge expansion, and deterministic (rather than high probability) complexity bounds.  

In particular, DEX is based on the ``$p$-cycle with chords", a well-known $3$-regular graph with $\lambda_2 \leq 3(1-\frac{1}{10^4})$ (see, e.g.,~\cite[Section~11.1.2]{HLW06}).  Hence the edge expansion guaranteed by the Cheeger inequality is $\frac{d}{20000} = \frac{3}{20000}$.  Since our construction is based on $2$-lifts, we end up getting expansion $d/6 - o(d)$.  Also, while DEX is an expander with probability $1$, the logarithmic complexity bounds are only with high probability.  Since our expander construction is entirely deterministic, the complexity bounds are also deterministic.  Putting everything together, we get the following theorem.  

\begin{theorem} \label{thm:healing}
For any $d \geq 6$, there is a self-healing expander which is completely deterministic, has edge expansion at least $d/6 - o(d)$, has maximum degree $d$, has $O(d)$ topology changes, and has recovery time and message complexity of $O(\log n)$.
\end{theorem}

\subsection{Algorithm}

At a high level, we will simply maintain the (unweighted) version of our expander construction.  Since our analysis of expansion did not use any edge with weight greater than $2$, using the unweighted version yields a graph with degrees between $d/2$ and $d$ in which the expansion is at least $d/6 - o(n)$.  We just need to show how to maintain this in a distributed manner when a node is inserted or deleted.  Our major advantage over previous approaches (e.g., \cite{Dex}) is that since our expander construction is deterministic, if a node knows the total number of nodes $n$ in the network then it knows the actual topology of the network (since we do not charge for local memory use or computation).  Of course, we cannot simply distribute the value of $n$ throughout the network as that would take too many messages, but it turns out the structure of our expander makes it possible to estimate $n$ well enough to do recovery.  

We will heavily use the concept of a ``name".  Unlike an adversarially assigned ID, a name corresponds to an exact location in the graph.  Names will evolve over time, but intuitively they correspond to the ``splitting history".  We can define names in BL expanders inductively.  In the $i$th BL expander $G^*_i$, the names will be the elements of the set $\{0, 1, \dots, d/2\} \times \{0,1\}^i$.  Recall that $G^*_0$ is a $((d/2)+1)$-clique denoted by $G_0^*$, and hence we can assign unique names by using an arbitrary bijection between the nodes and $\{0,1, \dots, d/2\}$.  To define names in the BL expander $G^*_i$, let $u \in V(G^*_{i-1})$ be an arbitrary node in the previous BL expander and let  $\{u, u'\} = \pi^{-1}(u) \subset V(G^*_i)$ be the two nodes that $u$ has split into in $G^*_i$.  Then the name of $u$ in $G^*_i$ will be the name of $u$ in $G^*_{i-1}$ together with an extra coordinate equal to $0$, and the the name of $u'$ in $G^*_i$ will be the name of $u$ in $G^*_{i-1}$ together with an extra coordinate equal to $1$.  We will let the \emph{length} of a name be the number of bits after the leading element from $\{0,1, \dots, d/2\}$, so, e.g., an element of $\{0,1,\dots, d/2\} \times \{0,1\}^i$ has length $i$.

Let $G_n$ be one of our explicit expanders, with $2^i (\frac{d}{2}+1) \leq n \leq 2^{i+1} (\frac{d}{2} + 1)$.  Then we can define names in the obvious way.  If $u \in V(G_n)$ has not been split, then the name of $u$ is equal to its name in $G^*_i$.  If $u$ has been split, then its name is equal to its name in $G^*_{i+1}$.  So the names of split nodes have one bit more than the names of unsplit nodes.  

We begin by proving a simple lemma: if our graph is $G_n$ and every node knows its name and the names of its neighbors, then we can route messages.  Note that we do \emph{not} assume that every node knows $n$.

In the rest of this section, we will assume a unique shortest path between every two nodes.  If more than one shortest path exists, then we can pick one arbitrarily (it does not matter how we break ties, so long as we are consistent).

\begin{lemma} \label{lem:routing}
Let $G = G_n$, and suppose that every node in $G$ knows its name and the names and ids of its neighbors.  Then any node $u$ can send a message to any other node $v$ along a shortest path in $G$, as long as $u$ knows the name of $v$.
\end{lemma}
\begin{proof}
Suppose that the name of $u$ has length $i$.  Note that while $u$ does not know $n$, the length of its name implies that $G$ is either between $G^*_{i-1}$ and $G^*_i$ or between $G^*_{i}$ and $G^*_{i+1}$.  If at least one neighbor of $u$ has a name of a different length, then this resolves the ambiguity (although $u$ still does not know $n$ precisely), but it might be the case that all neighbors of $u$ have names of the same length.  

By induction, we simply need to show that $u$ can forward the message on the next hop of the shortest path to $v$ in $G$.  If no neighbor of $u$ has a name that is longer than the name of $u$ (i.e.~they all have length $i$ or $i-1$), then $u$ calculates the next hop $w'$ on a shortest path to $v$ in $G^*_i$.  If $w'$ is a neighbor of $u$ then we set $w = w'$.  If $w'$ is not a neighbor of $u$ then this must be because $\pi(w')$ has not yet split and $\pi(w')$ is a neighbor of $u$, in which case we set $w = \pi(w')$.  We then send the message to $w$ (note that this is possible since $G^*_i$ is deterministic and so it does not take any communication for $u$ to know the topology of $G^*_i$).  

If a neighbor of $u$ has a name of length $i+1$, then $u$ pretends that it is the contraction of the two nodes in $\pi^{-1}(u)$ in $G^*_{i+1}$ and calculates the shortest path to $v$ in $G^*_{i+1}$.  More formally, $let G^*_{i+1} / \pi^{-1}(u)$ denote the graph obtained by contracting the two nodes of $\pi^{-1}(u)$ in $G^*_{i+1}$, and let $u'$ denote this contracted node.  Let $w$ denote the next hop on the shortest path from $u'$ to $v$ in $G^*_{i+1} / \pi^{-1}(u)$.  Then either $w$ or $\pi(w)$ is a neighbor of $u$ in $G$, and it is straightforward to see that in either case, it is the next hop on the shortest path from $u$ to $v$ in $G$.  So $u$ can forward the message correctly.
\ifprocs \qed \fi
\end{proof}

We can now define the recovery algorithm for insertions and deletions.  Throughout, we will refer to the node with name $\vec 0$ as the \emph{coordinator} node.  We will assume that the state of $\vec 0$ is always replicated at every neighbor of $\vec 0$: this can be done using an additional $O(d) = O(1)$ messages whenever $\vec 0$ or a neighbor of $\vec 0$ is updated.  

\paragraph{Insertions:} Suppose that a new node $u$ is inserted, adjacent to some arbitrary constant-size subset of current nodes.  Let $v$ be an arbitrary initial neighbor of $u$.  
\begin{enumerate}
\item $u$ sends a message to $v$, asking it to send a message to $\vec 0$ notifying $\vec 0$ of the addition of $u$.
\item $\vec 0$ sends $v$ (who forwards to $u$) a message containing the total number of nodes $n$ (including $u$).
\item Since our sequence of expanders is deterministic, $u$ knows the expander $G_n$, and knows which node $x$ is supposed to split into $x, x'$ in order to create $G_n$ from $G_{n-1}$.  $u$ will become $x'$, setting its name accordingly.
\item $u$ sends a message to $x$ (through $v$) notifying $x$ that $u$ will become $x'$ and that $x$ should update its own name (by adding on a $0$).
\item $x$ responds with a list of its neighbors (names and ids).  
\item $u$ and $x$ each send messages to the appropriate neighbors (as defined by $G_n$) to create or delete edges (and inform them of the new names for $x$ and $x'$).
\end{enumerate}

\begin{theorem} \label{thm:insertions}
If before the insertion $G = G_{n-1}$ (the expander in our construction with $n-1$ nodes), then after the insertion recovery algorithm is complete $G = G_n$.  The total number of rounds and messages are both $O(\log n)$.
\end{theorem}
\begin{proof}
First, note that by Lemma~\ref{lem:routing} the algorithm can indeed send the messages it needs to send.  Initially $v$ knows $\vec{0}$ (up to one bit, which it is easy to see does not matter) and hence can send it the original message from $u$.  By induction $\vec{0}$ knows the true value of $n$, so it can update this value and send back to $v$ who can then forward it to $u$ (we can assume that $\vec 0$ knows the name of $v$ since $v$ can simply include it in the message it forwarded from $u$).  Once $u$ knows $n$ it knows the name of $x$ since $G_n$ is a fixed, deterministic graph, and hence can send a message to $x$.  Similarly, $x$ can send a message to $u$ (through $v$) with the names of its neighbors.  Then using Lemma~\ref{lem:routing} again, $u$ can send messages to these neighbors to build exactly the edges that it (now as $x'$) would have in $G_n$.  Hence after the algorithm finished, $G = G_n$.

The complexity bounds are straightforward.  Each step which requires sending a message sends only $O(\log n)$ bits, so these can indeed fit inside of a message (or $O(1)$ messages).  Since we always route on shortest paths and $G$ is an expander, each message traverses at most $O(\log n)$ edges.  Hence the number of messages and the number of rounds are both $O(\log n)$.
\ifprocs \qed \fi
\end{proof}

\paragraph{Deletions:} Suppose that $u$ is deleted from $G = G_n$, and its neighbors are notified.  Let $x$ be the new node added to $G_{n-1}$ to make $G_n$, i.e.~$\{x\} = V(G_n) \setminus V(G_{n-1})$.  
\begin{enumerate}
\item If $u \neq \vec 0$:
	\begin{itemize}
	\item Each neighbor of $u$ checks whether it is on the shortest path from $u$ to $\vec 0$ (note that these neighbors do not necessarily know $n$, but since they know the length of their names they can determine whether they are on the current shortest path from $u$ to $\vec 0$ for the same reasons as in the proof of Lemma~\ref{lem:routing}).  Let $v$ be the neighbor on the shortest path.
	\item $v$ sends a message to $\vec 0$ informing it that $u$ has been deleted, and $\vec 0$ updates $n$ and sets $w = \vec 0$.
	\end{itemize}
\item If $u = \vec 0$, then each neighbor of $u$ determines if it is on the shortest path from $\vec 0$ to $x$ in $G_n$.  Let $w$ be the neighbor of $\vec 0$ on this shortest path.  We will refer to $w$ as the ``temporary coordinator". 
\item Note that at this point, $w$ knows $u$ and also knows $x$ (since $w$ is either the coordinator or the temporary coordinator, which was replicating the state of the coordinator).
\item $w$ sends a message to $x$, telling it to take over for $u$
\item $x$ sends messages to ``undo" its insertion, i.e., we ``unsplit" the node that was split to add $x$.
\item $x$ sends messages to create edges to nodes that were neighbors of $u$.  
\item If $u = \vec 0$, then $x$ gets the state of $u$ prior to its deletion from one of the neighbors of $u$ and decreases $n$ by $1$.
\item $x$ switches its name to the name of $u$.
\end{enumerate}

\begin{theorem} \label{thm:deletions}
If before the deletion $G = G_n$, then after the deletion recovery algorithm is complete $G = G_{n-1}$.  The total number of rounds and messages are both $O(\log n)$.
\end{theorem}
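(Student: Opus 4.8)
The plan is to follow the structure of the proof of Theorem~\ref{thm:insertions}, establishing correctness and the complexity bounds separately, and to carry an induction over the sequence of adversarial operations whose hypothesis is that before each operation the coordinator $\vec 0$ holds the true node count and every node knows its own name together with the names and ids of its neighbors. This is precisely the invariant that licenses the use of Lemma~\ref{lem:routing} for routing. The first step is to verify that every message the algorithm sends can in fact be delivered. When $u \neq \vec 0$, each neighbor of $u$ can locally decide whether it lies on the shortest path from $u$ to $\vec 0$ using only the lengths of its name and its neighbors' names, by the same argument as in the proof of Lemma~\ref{lem:routing}; once $\vec 0$ is reached it supplies the exact value of $n$, after which all further routing is governed by Lemma~\ref{lem:routing}.

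Second, I would show that the recovered graph equals $G_{n-1}$. Let $x$ be the unique node in $V(G_n)\setminus V(G_{n-1})$. Because the construction is deterministic, once $x$ learns $n$ it knows the complete topologies of both $G_n$ and $G_{n-1}$, and hence knows exactly which edges must be inserted or deleted, and which names must change, to transform the current graph into $G_{n-1}$. Concretely, $x$ reverses the split that created it (Step~5), attaches to the former neighbors of $u$, and adopts $u$'s name (Steps~6 and~8); by determinism these local modifications realize precisely $u$'s adjacencies in $G_{n-1}$, so that all corner cases (including $u = \pi(x)$ and the degenerate case $u = x$) are handled uniformly. The number of modified edges is $O(d)$, since the expansion cost is at most $\frac52 d$ by Theorem~\ref{thm:cost}, so only $O(1)$ logical messages are initiated.

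The genuinely delicate case is $u = \vec 0$, and I expect the coordinator handoff, rather than the edge bookkeeping, to be the main obstacle: deleting the coordinator leaves no node driving recovery and requires both the count $n$ and the coordinator state to be transferred. Here I would exploit the standing assumption that $\vec 0$'s state is replicated at all of its neighbors, so that every neighbor of $\vec 0$ already knows $n$ and therefore knows $x$; this lets each such neighbor determine whether it lies on the shortest path from $\vec 0$ to $x$, and the one that does becomes the temporary coordinator $w$ of Step~2. The crux is to check that after Steps~5--8 the node $x$ inherits $\vec 0$'s replicated state (Step~7), decrements $n$, and assumes the name $\vec 0$, thereby reestablishing the induction invariant with the corrected count $n-1$ and with all names and ids consistent; this is what allows the induction to proceed to the next operation.

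Finally, for complexity, every message carries only a name, an id, or the integer $n$, each of which is $O(\log n)$ bits and hence fits in $O(1)$ messages. The algorithm initiates only $O(1)$ logical messages, namely the notification to the coordinator, the handoff to $x$, and the $O(d)=O(1)$ edge-update messages, and since each is routed along a shortest path in an expander of diameter $O(\log n)$, each traverses $O(\log n)$ edges. Summing over these yields $O(\log n)$ total rounds and $O(\log n)$ total messages, all deterministically, which completes the proof.
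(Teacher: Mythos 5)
Your proposal is correct and follows essentially the same route as the paper's proof: correctness by determinism of the construction (the unique node $x \in V(G_n)\setminus V(G_{n-1})$ undoes its own insertion and takes over $u$'s adjacencies and name), and complexity by counting $O(1)$ logical messages (the notification to $\vec 0$, the handoff to $x$, and $O(d)=O(1)$ edge updates from the $\frac{5}{2}d$ expansion-cost bound), each routed via Lemma~\ref{lem:routing} along shortest paths in a graph of diameter $O(\log n)$. You are in fact somewhat more careful than the paper, which dismisses correctness as ``obvious by construction'' and does not spell out the induction invariant or the $u=\vec 0$ handoff in the proof itself.
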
 
\begin{proof}
It is obvious by construction that the algorithm results in $G_{n-1}$, since the unique node $x \in V(G_n) \setminus V(G_{n-1})$ is removed and then precisely takes the place of the deleted node.  Hence we need only to prove the complexity bounds.

As with insertions, note that by Lemma~\ref{lem:routing} all messages can be sent to their desired destinations along a shortest path.  Since $G$ is an expander it has diameter $O(\log n)$, so we need to prove that only $O(1)$ messages are sent across the network.  Clearly sending the original message to $\vec 0$ and the message from $w$ to $x$ take only $O(1)$ messages.  Since there are only $5d/2$ edge changes from $G_{n-1}$ to $G_n$, only $O(d) = O(1)$ messages are needed for $x$ to undo its insertion.  Then only another $O(d) = O(1)$ messages are necessary to replace $u$.  Putting this together with the diameter bound, we get message complexity and round complexity of $O(\log n)$ as claimed.  
\ifprocs \qed \fi
\end{proof}

These theorems together will now let us prove Theorem~\ref{thm:healing}.
\paragraph{Proof of Theorem~\ref{thm:healing}.} The round complexity and message complexity bounds are directly implied by Theorems~\ref{thm:insertions} and \ref{thm:deletions}.  The maximum degree, topology change, and expansion bounds follow from our main expander construction by changing each multigraph into a simple graph (reducing all nonzero weights to $1$).  \ifprocs \qed \fi

\fi

\section{Open Questions}\label{sec:open-questions}

\noindent{\bf Better Bounds.} The obvious open question is proving better bounds for expansion and expansion cost, and exploring the space of tradeoffs between them. 

\ifprocs \else \vspace{0.05in}\noindent{\bf Expanding Simple Expanders.} Our construction crucially utilized the flexibility afforded to us by multigraphs. Can we get comparable results if we restrict all expanders in the sequence to be simple graphs?  It is straightforward to adapt our construction to the simple graph setting by relaxing the regularity condition slightly: simply using our current construction but setting each nonzero weight to $1$ will result in degrees between $d/2$ and $d$, with expansion at least $d/6 - o(d)$.  But can we achieve regularity?
\fi

\vspace{0.05in}\noindent{\bf Expanding Spectral Expanders.} Our construction interpolates between Bilu-Linial (BL) expanders, which are very good spectral expanders ($\lambda \leq O(\sqrt{d \log^3 d})$). But Theorem~\ref{thm:rayleigh} implies that some of the expanders that appear between the BL expanders in the sequence are only weak spectral expanders. Can a sequence of strong spectral expanders (say, with $\lambda \leq O(\sqrt{d} \cdot  \polylog{d})$) be constructed with low expansion cost?


\ifprocs \else 
\section*{Acknowledgements}
We warmly thank Nati Linial for many fruitful discussions. We also thank Yonatan Bilu, Alex Lubotzky, Noga Alon, Doron Puder, and Dan Spielman for helpful conversations.
\fi

\ifprocs
\bibliographystyle{splncs}
\vspace{-0.1in}
\bibliography{refs}
\else
\bibliographystyle{plain}
\bibliography{refs}
\fi

\ifprocs \ifapps

\clearpage

\appendix

\section{Proofs from Section~\ref{sec:construction}} \label{app:basic-proofs}

\paragraph{Proof of Lemma~\ref{lem:large-weight-invariant}:}
When $u$ is first split (when $u'$ is first created) the edge $\{u,u'\}$ has weight $|U(u)|$ by construction.  Now suppose that we are at some point in the execution of the algorithm, let $U(u)$ be the set of neighbors of the original vertex that are still unsplit, and assume that the weight of $\{u,u'\}$ is $|U(u)|$.  We will prove that this invariant continues to hold.  Let $v$ be the vertex that is currently being split, say into $v$ and $v'$.  If $v$ was not a neighbor of $u$ in the original expander then it is not a neighbor of $u$ or $u'$ in the current graph, and clearly splitting it has no effect on the weight of $\{u,u'\}$.  If $v$ was a neighbor of $u$, then when we split $v$ we decrease the weight of $\{u,u'\}$ by $1$.  Observe that now, though, there is one less neighbor of $u$ that is unsplit, and so the invariant is maintained.
\ifprocs \qed \else \QEDA \fi

\paragraph{Proof of Lemma~\ref{lem:edge-weight-invariant}:}
We start out with $G^*_i$ in which no vertices are split and all edges have weight $2$, satisfying the lemma.  Suppose the lemma is satisfied at the moment we split some vertex $u$ into $u$ and $u'$.  Edges between unpaired vertices that do not have $u$ as an endpoint are unchanged.  Edges from $u$ or $u'$ to unsplit vertices have weight $1$ by step~\ref{step:unsplit} of the algorithm, and edges from $u$ or $u'$ to split vertices have weight $2$ by step~\ref{step:split}.  This implies the lemma. \ifprocs \qed \else \QEDA \fi

\paragraph{Proof of Lemma~\ref{lem:degree}:}
We proceed by induction.  For the base case, take the original expander $G^*_i$: it is $\frac{d}{2}$ regular and every edge has weight $2$, so the weighted degree is $d$.  Now, suppose that we just split the vertex $u$ into $u$ and $u'$, and assume that before the split $u$ had weighted degree $d$. Lemma~\ref{lem:edge-weight-invariant} implies that before the split each edge from $u$ to a vertex in $S(u)$ had weight $1$, while each edge from $u$ to a vertex in $U(u)$ had weight $2$.  Thus, $|S(u)| + 2|U(u)| = d$.

After the split, the edges from $u$ and from $u'$ to vertices that are unsplit (i.e.~vertices in $U(u)$) have weight $1$, while the edges to vertices in $S(u)$ have weight $2$ (by Lemma~\ref{lem:edge-weight-invariant}).  However, each of $u$ and $u'$ is adjacent to only half of the vertices in $S(u)$, since for each $v, v'$ pair in $S(u)$ the edges $\{u,v\}$ and $\{u, v'\}$ are replaced by the appropriate matching (either $\{u, v\}, \{u',v'\}$ or $\{u,v'\}, \{u', v\}$).  By construction, we know that the weight of $\{u,u'\}$ is $|U(u)|$.  Hence, $u$ and $u'$ have weighted degree $2\frac{|S(u)|}{2} + |U(u)| + |U(u)| = |S(u)| + 2|U(u)| =  d$.

Now, consider some vertex $v \in U(u)$.  By Lemma~\ref{lem:edge-weight-invariant}, before splitting $u$ the edge from $u$ to $v$ had weight $2$.  After splitting, $v$ has a weight $1$ edge to $u$ and a weight $1$ edge to $u'$, and thus maintains its weighted degree of $d$.

Lastly, let $v \in S(u)$, with its paired vertex $v'$.  By Lemma~\ref{lem:edge-weight-invariant}, before splitting $u$ the edge from $u$ to $v$ (and the one to $v'$) had weight $1$.  After splitting, it is replaced by a single edge of weight $2$ (to either $u$ or $u'$, depending on the matching). However, the weight on the $\{v,v'\}$ edge is also decreased by $1$, and so the total weighted degree of $v$ is unchanged (note that Lemma~\ref{lem:large-weight-invariant} implies that since $v \in S(u)$ the weight of $\{v,v'\}$ before splitting $u$ is at least $1$, so this weight can be decreased by $1$ without becoming negative).
\ifprocs \qed \else \QEDA \fi

\paragraph{Proof of Lemma~\ref{lem:correct}:}
We proceed by induction, with the inductive hypothesis that the edges between non-paired split vertices are exactly the edges between those vertices in $G^*_{i+1}$.  Initially there are no split vertices, so this holds.  Now suppose it holds for $G_n$, and suppose we create $G_{n+1}$ by splitting $u$ into $u$ and $u'$.  Then the only changes in the edges between split vertices are the addition of edges from $u$ and $u'$ to vertices in $S(u)$.  But step~\ref{step:split} explicitly creates those edges to be identical to the edges in $G^*_{i+1}$, and thus the inductive hypothesis continues to hold.  This, together with Lemmas~\ref{lem:large-weight-invariant}~and~\ref{lem:edge-weight-invariant}, implies the lemma. \ifprocs \qed \else \QEDA \fi

%

\section{Proofs from Section~\ref{sec:expansion}} \label{app:d4}

\paragraph{Proof of Lemma~\ref{lem:SS}:} 
Since $A, B \subseteq S$, we know by definition that $F(A) = A$ and $F(B) = B$.  This means that (if we ignore edges between $u_0, u_1$ with $\pi(u_0) = \pi(u_1)$) the edges in $G$ between $A$ and $B$ are precisely the edges in $H$ between $A$ and $B$, and moreover all such edges have weight $2$ in both $G$ and $H$. \ifprocs \qed \fi

\paragraph{Proof of Lemma~\ref{lem:UU}:}
Since $A$ and $B$ are entirely unsplit, by definition $F(A) = \cup_{u \in A} \pi^{-1}(u)$ and $F(B) = \cup_{u \in B} \pi^{-1}(u)$.  This means that if $a \in A$ and $b \in B$, there is an edge between $a$ and $b$ in $G$ if and only if there is a matching between $\pi^{-1}(a)$ and $\pi^{-1}(b)$ in $H$. Clearly, any such edge $\{a,b\}$ has weight $2$ in $G$ (since neither endpoint has split), and the two edges in the matching between $\pi^{-1}(a)$ and $\pi^{-1}(b)$ in $H$ also have weight $2$ (by definition). Hence, $w_H(F(A), F(B)) = 2 \cdot w_G(A, B)$. \ifprocs \qed \fi

\paragraph{Proof of Lemma~\ref{lem:SU}:}
Clearly $F(A) = A$ and $F(B) = \cup_{u \in B} \pi^{-1}(u)$.  Consider an edge $\{a, b\} \in E$ with $a \in A$ and $b \in B$. By Lemma~\ref{lem:edge-weight-invariant}, this edge has weight $1$.  Let $\{b_0, b_1\} = \pi^{-1}(b)$.  Then in $H$ exactly one of $\{a, b_0\}$ and $\{a, b_1\}$ exists, and this edge has weight $2$.  Thus $w_H(F(A), F(B)) \geq 2 \cdot w_G(A,B)$.  Similarly, let $\{a,b\} \in E_H$ with $a \in F(A)$ and $b \in F(B)$.  Then this edge has weight $2$, and in $G$ the edge $\{a, \pi(b)\}$ must exist and have weight $1$.  Hence $w_H(F(A), F(B)) \leq 2 \cdot w_G(A,B)$. \ifprocs \qed \fi

\paragraph{Proof of Lemma~\ref{lem:half}:}
We divide each of $A$ and $\bar A$ into two parts:  let $S(A)$ denote the nodes in $A \cap S$, let $U(A) = A \cap U$, let $S(\bar A) = \bar A \cap S$, and let $U(\bar A) = \bar A \cap U$.  We then have that
\begin{align*}
w_G(A, \bar A) &= w_G(S(A), S(\bar A)) + w_G(S(A), U(\bar A)) + w_G(U(A), S(\bar A)) + w_G(U(A), U(\bar A)) \\
& = w_H(F(S(A)), F(S(\bar A))) + \frac12 w_H(F(S(A)), F(U(\bar A))) \\
&\qquad + \frac12 w_H(F(U(A)), F(S(\bar A))) + \frac12 w_H(F(U(A)), F(U(\bar A))) \\
& \geq \frac12 w_H(F(A), F(\bar A))
\end{align*}
where the first equality is by definition (since $S$ and $U$ are disjoint) and the second equality is due to Lemmas~\ref{lem:SS}, \ref{lem:UU}, and~\ref{lem:SU}. The last inequality is again because $F(S(A)), F(U(A)), F(S(\bar A))$, and $F(U(\bar A))$ are disjoint. \ifprocs \qed \fi

\section{Proofs from Section~\ref{sec:improved-expansion}} \label{app:tight}

\paragraph{Proof of Theorem~\ref{thm:rayleigh}:}
Fix $i \geq 0$, and let $G_{n-1} = G^*_i$.  Let $G_n$ be the next graph in $\mathcal G$, obtained by splitting a single node of $G^*_i$ (say $v$) into two nodes (say $v_0$ and $v_1$).  So the weight of the edge between $v_0$ and $v_1$ in $G_n$ is $d/2$.  Recall that $\lambda_1(G_n) = d$ and the associated eigenvector is the vector $\vec{1/\sqrt{n}}$ in which every coordinate is $1/\sqrt{n}$.  So in order to lower bound $\lambda_2(G_n)$, we just need to find a vector $\vec{x}$ orthogonal to $\vec{1/\sqrt{n}}$ with Rayleigh quotient $(\vec{x}^T A \vec{x}) / (\vec{x}^T \vec{x}) \geq d/2 - \epsilon$.   

Let $\vec x$ be the vector with $1-2/n$ in the coordinate for $v_0$ and $1-2/n$ in the coordinate for $v_1$, and $-2/n$ in all other coordinates.  Then clearly $\vec x$ is orthogonal to $\vec{1/\sqrt{n}}$.  We begin by analyzing $\vec{x}^T A \vec{x} = \sum_i \sum_j A_{ij} x_i x_j$.  Simple calculations show that when $i$ is not in the neighborhood of $v$ it contributes $\Theta(d/n^2)$ to this sum, while if $i$ is in the neighborhood of $v$ then it contributes $\Theta(d/n^2 - 1/n) = -\Theta(1/n)$.  Finally, if $i$ is $v_0$ or $v_1$ then it contributes $\frac{d}{2}(1 - \frac2n)^2 - (\frac{d}{2} - 1)(\frac2n)(1-\frac{2}{n})$.  Thus 
\begin{equation*}
\vec{x}^T A \vec{x} \geq d \left(\frac{n-2}{n}\right)^2 - \Theta(d/n) \geq d - \epsilon
\end{equation*}
for large enough $n$. 

Now we are left with the easy task of computing $\vec{x}^T \vec{x}$.  This is clearly $2(\frac{n-2}{n})^2  + (n-2)(4/n^2) \leq 2+\epsilon$ for large enough $n$.  Putting this together, we get that the Rayleigh quotient of $x$ is at least $(d-\epsilon) / (2+\epsilon) \geq d/2 - \epsilon$ (for large enough $n$, by slightly changing $\epsilon$).  Thus $\lambda_2(G_n) \geq d/2 - \epsilon$.  This was true for all sufficiently large $n$, so by setting $i$ large enough we have this infinitely often.  

\paragraph{Proof of Lemma~\ref{lem:unbalanced-general}:}
Recall that $H$ is $\frac{d}{2}$-regular and all edges have weight $2$.  Consider any bisection $(X_0, X_1)$ of $X$ such that $X_0 \cap X_1 = \emptyset$ and $|X_0| = |X_1| = |X|/2$.  The Mixing Lemma implies that $|E(X_0, X_1)| \leq \frac{(d/2) \cdot (|X|^2/4)}{n} + \lambda \frac{|X|}{2}$. We claim that this implies that the number of edges with both endpoints in $X$ is at most $\frac{d |X|^2}{4n} + \lambda |X|$.  To see this, suppose otherwise.   Then in a random bisection of $X$ (i.e., a random partition of $X$ into two equally-sized subsets) the expected number of edges across the bisection is larger than $\frac{d}{2} \cdot \frac{|X|^2}{4n} + \lambda \frac{|X|}{2}$.  Hence there exists a bisection of $X$ with at least that many edges across it, contradicting our upper bound on the number of edges across any bisection.

So the total number of edges with both endpoints in $X$ is at most $\frac{d |X|^2}{4n} + \lambda |X|$.  Each of these edges counts against the total degree for two vertices (each endpoint), and so
$|E(X, \bar X)| \geq \frac{d}{2} |X| - \frac{d|X|^2}{2n} - 2 \lambda |X| = |X| \left(\frac{d}{2} \cdot \frac{n - |X|}{n} - 2 \lambda\right)$.

This, and the fact that every edge has weight $2$, concludes the proof. \ifprocs \qed \fi

\paragraph{Proof of Lemma~\ref{lem:unbalanced}:}
Clearly $|F(X)| \leq 2|X|$ and $|F(\bar X)| \geq |\bar X|$.  Thus, $|F(X)| < \frac13 |V_H|$, and so Lemma~\ref{lem:unbalanced-general} implies that $w_H(F(X), F(\bar X)) \geq \left(\frac{2d}{3} - O\left(\sqrt{d \log^3 d}\right)\right) |F(X)|$.  Now, Lemma~\ref{lem:half} and the fact that $|F(X)| \geq |X|$ imply that $w_G(X, \bar X) \geq \left(\frac{d}{3} - O\left(\sqrt{d \log^3 d}\right)\right) |X|$, giving the claimed expansion.  \ifprocs \qed \fi

\paragraph{Proof of Theorem~\ref{thm:tight}:}
Recall that the starting point of the construction of $\mathcal G$ was the graph $K_{\frac{d}{2} + 1}$ with weight $2$ on all edges. After inserting $\frac13 (\frac{d}{2} + 1)$ new vertices, the resulting graph $G$ is on $\frac{4}{3}(\frac{d}{2} + 1)$ vertices.  Consider the cut $(S, U)$ in $G$, where $S$ is all of the vertices that have been split and $U$ is all of the unsplit vertices.  Then $|S| = |U| = \frac{2}{3} (\frac{d}{2} + 1)$, and there is an edge of weight $1$ from every vertex in $S$ to every vertex in $U$.  Consequently, the expansion of $S$ is equal to $\frac{2}{3} (\frac{d}{2} + 1)$.

Similarly, suppose that the initial graph is $G^*_i$ on $n = 2^i(\frac{d}{2}+1)$ vertices and $\frac{n}{3}$ new vertices are inserted to get graph $G$. Consider the cut in $G$ with all the split vertices $S$ on one side and all of the unsplit vertices $U$ on the other.  This cut is a bisection, where each side has size $\frac{2n}{3}$.  In the associated future cut $(F(S), F(U))$ of $G^*_{i+1}$, $|F(S)| = \frac{2n}{3}$ and $|F(U)| = \frac{4n}{3}$.  A simple application of the Mixing Lemma establishes that the weight in $G^*_{i+1}$ across this future cut is at most $\frac{4}{9}nd + O(n \sqrt{d \log^3 d})$, and then Lemma~\ref{lem:SU} implies that the weight across $(S,U)$ in $G$ is at most $\frac29 nd + O(n \sqrt{d \log^3 d})$.  Hence, $h_G(S) \leq \frac{d}{3} + O(\sqrt{d \log^3 d})$. \ifprocs \qed \else \QEDA \fi

\section{Self-Healing Expanders} \label{app:healing}
\subsection{Model}
The self-healing model is a variant of the well-known $\mathcal{CONGEST}$ model for distributed computing.  In the $\mathcal{CONGEST}$ model,  we think of the current graph $G = (V, E)$ as the communication graph of a distributed system (in particular, as a peer-to-peer or overlay network).  Each node has a unique id (possibly set by an adversary) which can be used to identify it.  Time passes in synchronous rounds, and in each round every node can send an $O(\log n)$-bit message on each edge incident to it (possibly a different message on different edges), as well as receive a message on each edge.  Usually the complexity of algorithms in this model is given as bounds on the \emph{round complexity} (the number of rounds necessary for the algorithm to complete) and on the \emph{message complexity} (the total number of messages sent during the algorithm).  Local computation is free, since the focus is on the cost of communication.

A \emph{self-healing expander} (originally defined by~\cite{Dex}) is an algorithm in the $\mathcal{CONGEST}$ model which maintains an expander upon node insertions and deletions.  Slightly more formally, given a current graph $G$, the adversary can add a new node or delete a node.  If a node is added, the adversary connects it to a constant-sized subset of current nodes.  If a node is deleted, its neighbors are informed.  This results in an \emph{intermediate graph} $U$.  The recovery algorithm then needs to recover to an expander by changing edges (or adding or deleting edges).  Adding an edge between two nodes $u$ and $v$ requires sending a message from $u$ to $v$ (or vice versa).  Initially, a newly inserted node only knows its (adversarially chosen) id, and does not have any knowledge of the graph.  

The key assumption is that the adversary does not interfere during recovery: no more nodes fail or are deleted until recovery is complete.  However, the adversary is fully-adaptive -- it knows the entire state and all previous states, as well as the details of the algorithm.  

The important parameters of a self-healing expander are 1) the expansion of the graph, 2) the maximum degree, 3) the number of topology changes (i.e.~the expansion cost), 4) the recovery time (i.e.~the round complexity), and 5) the message complexity.  The current best bounds on this are due to Pandurangan, Robinson, and Trehan, who gave a construction they called DEX of a self-healing expander~\cite{Dex} with maximum degree $O(1)$, only $O(1)$ topology changes, and $O(\log n)$ recovery time and message complexity.  We can use our deterministic expander construction to get similar bounds, but with two improvements: much larger edge expansion, and deterministic (rather than high probability) complexity bounds.  

In particular, DEX is based on the ``$p$-cycle with chords", a well-known $3$-regular graph with $\lambda_2 \leq 3(1-\frac{1}{10^4})$ (see, e.g.,~\cite[Section~11.1.2]{HLW06}).  Hence the edge expansion guaranteed by the Cheeger inequality is $\frac{d}{20000} = \frac{3}{20000}$.  Since our construction is based on $2$-lifts, we end up getting expansion $d/6 - o(d)$.  Also, while DEX is an expander with probability $1$, the logarithmic complexity bounds are only with high probability.  Since our expander construction is entirely deterministic, the complexity bounds are also deterministic.  Putting everything together, we get the following theorem.  

\begin{theorem} \label{thm:healing}
For any $d \geq 6$, there is a self-healing expander which is completely deterministic, has edge expansion at least $d/6 - o(d)$, has maximum degree $d$, has $O(d)$ topology changes, and has recovery time and message complexity of $O(\log n)$.
\end{theorem}

\subsection{Algorithm}

At a high level, we will simply maintain the (unweighted) version of our expander construction.  Since our analysis of expansion did not use any edge with weight greater than $2$, using the unweighted version yields a graph with degrees between $d/2$ and $d$ in which the expansion is at least $d/6 - o(n)$.  We just need to show how to maintain this in a distributed manner when a node is inserted or deleted.  Our major advantage over previous approaches (e.g., \cite{Dex}) is that since our expander construction is deterministic, if a node knows the total number of nodes $n$ in the network then it knows the actual topology of the network (since we do not charge for local memory use or computation).  Of course, we cannot simply distribute the value of $n$ throughout the network as that would take too many messages, but it turns out the structure of our expander makes it possible to estimate $n$ well enough to do recovery.  

We will heavily use the concept of a ``name".  Unlike an adversarially assigned ID, a name corresponds to an exact location in the graph.  Names will evolve over time, but intuitively they correspond to the ``splitting history".  We can define names in BL expanders inductively.  In the $i$th BL expander $G^*_i$, the names will be the elements of the set $\{0, 1, \dots, d/2\} \times \{0,1\}^i$.  Recall that $G^*_0$ is a $((d/2)+1)$-clique denoted by $G_0^*$, and hence we can assign unique names by using an arbitrary bijection between the nodes and $\{0,1, \dots, d/2\}$.  To define names in the BL expander $G^*_i$, let $u \in V(G^*_{i-1})$ be an arbitrary node in the previous BL expander and let  $\{u, u'\} = \pi^{-1}(u) \subset V(G^*_i)$ be the two nodes that $u$ has split into in $G^*_i$.  Then the name of $u$ in $G^*_i$ will be the name of $u$ in $G^*_{i-1}$ together with an extra coordinate equal to $0$, and the the name of $u'$ in $G^*_i$ will be the name of $u$ in $G^*_{i-1}$ together with an extra coordinate equal to $1$.  We will let the \emph{length} of a name be the number of bits after the leading element from $\{0,1, \dots, d/2\}$, so, e.g., an element of $\{0,1,\dots, d/2\} \times \{0,1\}^i$ has length $i$.

Let $G_n$ be one of our explicit expanders, with $2^i (\frac{d}{2}+1) \leq n \leq 2^{i+1} (\frac{d}{2} + 1)$.  Then we can define names in the obvious way.  If $u \in V(G_n)$ has not been split, then the name of $u$ is equal to its name in $G^*_i$.  If $u$ has been split, then its name is equal to its name in $G^*_{i+1}$.  So the names of split nodes have one bit more than the names of unsplit nodes.  

We begin by proving a simple lemma: if our graph is $G_n$ and every node knows its name and the names of its neighbors, then we can route messages.  Note that we do \emph{not} assume that every node knows $n$.

In the rest of this section, we will assume a unique shortest path between every two nodes.  If more than one shortest path exists, then we can pick one arbitrarily (it does not matter how we break ties).

\begin{lemma} \label{lem:routing}
Let $G = G_n$, and suppose that every node in $G$ knows its name and the names and ids of its neighbors.  Then any node $u$ can send a message to any other node $v$ along a shortest path in $G$, as long as $u$ knows the name of $v$.
\end{lemma}
\begin{proof}
Suppose that the name of $u$ has length $i$.  Note that while $u$ does not know $n$, the length of its name implies that $G$ is either between $G^*_{i-1}$ and $G^*_i$ or between $G^*_{i}$ and $G^*_{i+1}$.  If at least one neighbor of $u$ has a name of a different length, then this resolves the ambiguity (although $u$ still does not know $n$ precisely), but it might be the case that all neighbors of $u$ have names of the same length.  

By induction, we simply need to show that $u$ can forward the message on the next hop of the shortest path to $v$ in $G$.  If no neighbor of $u$ has a name that is longer than the name of $u$ (i.e.~they all have length $i$ or $i-1$), then $u$ calculates the next hop $w'$ on a shortest path to $v$ in $G^*_i$.  If $w'$ is a neighbor of $u$ then we set $w = w'$.  If $w'$ is not a neighbor of $u$ then this must be because $\pi(w')$ has not yet split and $\pi(w')$ is a neighbor of $u$, in which case we set $w = \pi(w')$.  We then send the message to $w$ (note that this is possible since $G^*_i$ is deterministic and so it does not take any communication for $u$ to know the topology of $G^*_i$).  

If a neighbor of $u$ has a name of length $i+1$, then $u$ pretends that it is the contraction of the two nodes in $\pi^{-1}(u)$ in $G^*_{i+1}$ and calculates the shortest path to $v$ in $G^*_{i+1}$.  More formally, $let G^*_{i+1} / \pi^{-1}(u)$ denote the graph obtained by contracting the two nodes of $\pi^{-1}(u)$ in $G^*_{i+1}$, and let $u'$ denote this contracted node.  Let $w$ denote the next hop on the shortest path from $u'$ to $v$ in $G^*_{i+1} / \pi^{-1}(u)$.  Then either $w$ or $\pi(w)$ is a neighbor of $u$ in $G$, and it is straightforward to see that in either case, it is the next hop on the shortest path from $u$ to $v$ in $G$.  So $u$ can forward the message correctly.
\ifprocs \qed \fi
\end{proof}

We can now define the recovery algorithm for insertions and deletions.  Throughout, we will refer to the node with name $\vec 0$ as the \emph{coordinator} node.  We will assume that the state of $\vec 0$ is always replicated at every neighbor of $\vec 0$: this can be done using an additional $O(d) = O(1)$ messages whenever $\vec 0$ or a neighbor of $\vec 0$ is updated.  

\paragraph{Insertions:} Suppose that a new node $u$ is inserted, adjacent to some arbitrary constant-size subset of current nodes.  Let $v$ be an arbitrary initial neighbor of $u$.  
\begin{enumerate}
\item $u$ sends a message to $v$, asking it to send a message to $\vec 0$ notifying $\vec 0$ of the addition of $u$.
\item $\vec 0$ sends $v$ (who forwards to $u$) a message containing the total number of nodes $n$ (including $u$).
\item Since our sequence of expanders is deterministic, $u$ knows the expander $G_n$, and knows which node $x$ is supposed to split into $x, x'$ in order to create $G_n$ from $G_{n-1}$.  $u$ will become $x'$, setting its name accordingly.
\item $u$ sends a message to $x$ (through $v$) notifying $x$ that $u$ will become $x'$ and that $x$ should update its own name (by adding on a $0$).
\item $x$ responds with a list of its neighbors (names and ids).  
\item $u$ and $x$ each send messages to the appropriate neighbors (as defined by $G_n$) to create or delete edges (and inform them of the new names for $x$ and $x'$).
\end{enumerate}

\begin{theorem} \label{thm:insertions}
If before the insertion $G = G_{n-1}$ (the expander in our construction with $n-1$ nodes), then after the insertion recovery algorithm is complete $G = G_n$.  The total number of rounds and messages are both $O(\log n)$.
\end{theorem}
\begin{proof}
First, note that by Lemma~\ref{lem:routing} the algorithm can indeed send the messages it needs to send.  Initially $v$ knows $\vec{0}$ (up to one bit, which it is easy to see does not matter) and hence can send it the original message from $u$.  By induction $\vec{0}$ knows the true value of $n$, so it can update this value and send back to $v$ who can then forward it to $u$ (we can assume that $\vec 0$ knows the name of $v$ since $v$ can simply include it in the message it forwarded from $u$).  Once $u$ knows $n$ it knows the name of $x$ since $G_n$ is a fixed, deterministic graph, and hence can send a message to $x$.  Similarly, $x$ can send a message to $u$ (through $v$) with the names of its neighbors.  Then using Lemma~\ref{lem:routing} again, $u$ can send messages to these neighbors to build exactly the edges that it (now as $x'$) would have in $G_n$.  Hence after the algorithm finished, $G = G_n$.

The complexity bounds are straightforward.  Each step which requires sending a message sends only $O(\log n)$ bits, so these can indeed fit inside of a message (or $O(1)$ messages).  Since we always route on shortest paths and $G$ is an expander, each message traverses at most $O(\log n)$ edges.  Hence the number of messages and the number of rounds are both $O(\log n)$.
\ifprocs \qed \fi
\end{proof}

\paragraph{Deletions:} Suppose that $u$ is deleted from $G = G_n$, and its neighbors are notified.  Let $x$ be the new node added to $G_{n-1}$ to make $G_n$, i.e.~$\{x\} = V(G_n) \setminus V(G_{n-1})$.  
\begin{enumerate}
\item If $u \neq \vec 0$:
	\begin{itemize}
	\item Each neighbor of $u$ checks whether it is on the shortest path from $u$ to $\vec 0$ (note that these neighbors do not necessarily know $n$, but since they know the length of their names they can determine whether they are on the current shortest path from $u$ to $\vec 0$ for the same reasons as in the proof of Lemma~\ref{lem:routing}).  Let $v$ be the neighbor on the shortest path.
	\item $v$ sends a message to $\vec 0$ informing it that $u$ has been deleted, and $\vec 0$ updates $n$ and sets $w = \vec 0$.
	\end{itemize}
\item If $u = \vec 0$, then each neighbor of $u$ determines if it is on the shortest path from $\vec 0$ to $x$ in $G_n$.  Let $w$ be the neighbor of $\vec 0$ on this shortest path.  We will refer to $w$ as the ``temporary coordinator". 
\item Note that at this point, $w$ knows $u$ and also knows $x$ (since $w$ is either the coordinator or the temporary coordinator, which was replicating the state of the coordinator).
\item $w$ sends a message to $x$, telling it to take over for $u$
\item $x$ sends messages to ``undo" its insertion, i.e., we ``unsplit" the node that was split to add $x$.
\item $x$ sends messages to create edges to nodes that were neighbors of $u$.  
\item If $u = \vec 0$, then $x$ gets the state of $u$ prior to its deletion from one of the neighbors of $u$ and decreases $n$ by $1$.
\item $x$ switches its name to the name of $u$.
\end{enumerate}

\begin{theorem} \label{thm:deletions}
If before the deletion $G = G_n$, then after the deletion recovery algorithm is complete $G = G_{n-1}$.  The total number of rounds and message are both $O(\log n)$.
\end{theorem}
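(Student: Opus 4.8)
The plan is to establish the two claims of the theorem separately: correctness (that recovery yields exactly $G_{n-1}$) and the $O(\log n)$ bounds on rounds and messages. The correctness claim should follow almost immediately from the design of the algorithm, mirroring the argument for insertions in Theorem~\ref{thm:insertions}. The key structural fact is that $G_n$ and $G_{n-1}$ differ by a single node: there is a unique $x \in V(G_n) \setminus V(G_{n-1})$, namely the node created by the most recent split. The strategy is therefore to ``free up'' $x$ by undoing its insertion, and then have $x$ slide into the position vacated by the deleted node $u$.

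For correctness I would argue by cases on whether $u = x$. If $u = x$, then deleting $u$ is itself the reversal of the last split, so once $x$ undoes its own insertion we are left with precisely $G_{n-1}$ and nothing further is needed. If $u \neq x$, then $u \in V(G_{n-1})$, and I would first have $x$ undo its insertion: by Lemma~\ref{lem:correct} and the invariants of Section~\ref{sec:construction}, this returns the local topology to that of $G_{n-1}$ on the vertex set $V(G_{n-1})$. I would then relabel the now-free node $x$ with the name of $u$ and attach it to exactly the neighbors that $u$ had. Since our construction is deterministic, the edges $x$ must create as the stand-in for $u$ are completely determined by $u$'s name, so the resulting graph is identical to $G_{n-1}$ up to the identity of vertices. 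This establishes the correctness claim.

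The complexity bounds follow the same template as Theorem~\ref{thm:insertions}. By Lemma~\ref{lem:routing}, every message the algorithm sends can be routed along a shortest path once the sender knows the recipient's name, and since $G$ is an expander its diameter is $O(\log n)$, so every such message traverses $O(\log n)$ edges. It then suffices to bound the number of distinct messages initiated. Locating and informing $\vec 0$ (or, in the coordinator case, locating the temporary coordinator $w$) costs $O(1)$ messages, and the single message from $w$ to $x$ costs $O(1)$. For the rewiring itself, Theorem~\ref{thm:cost} guarantees that $G_n$ and $G_{n-1}$ differ in at most $\frac{5}{2}d$ edges, so undoing $x$'s insertion and then installing $x$ in $u$'s place each require only $O(d) = O(1)$ edge modifications, hence $O(1)$ messages. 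Multiplying the $O(1)$ message initiations by the $O(\log n)$ hop bound yields both message complexity and round complexity of $O(\log n)$.

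I expect the main obstacle to be the coordinator case $u = \vec 0$, where the node being deleted is exactly the node the remaining steps rely on to coordinate. Here I would lean on the assumption that $\vec 0$'s state is replicated at all of its neighbors, so that the temporary coordinator $w$ inherits enough information to identify $x$ and drive the recovery. The subtlety, exactly as in Lemma~\ref{lem:routing}, is that $w$ need not know $n$ precisely and must determine which neighbor lies on the shortest path to $x$ purely from name lengths and the deterministic structure of the construction; verifying that this local information always suffices to select the correct $w$, and that $w$ can then complete recovery without further global knowledge, is the delicate point.
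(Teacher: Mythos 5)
Your proposal is correct and follows essentially the same route as the paper: correctness is immediate from the fact that the unique node $x \in V(G_n) \setminus V(G_{n-1})$ is unsplit and then substituted for the deleted node, and the complexity bounds come from combining Lemma~\ref{lem:routing} with the $O(\log n)$ diameter and the observation that only $O(1)$ messages are initiated (since Theorem~\ref{thm:cost} bounds the edge changes by $\frac{5}{2}d = O(1)$). Your explicit case split on $u = x$ versus $u \neq x$ and your attention to the coordinator case $u = \vec 0$ are finer-grained than the paper's one-line correctness argument, but they are elaborations of the same idea rather than a different approach.
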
 
\begin{proof}
It is obvious by construction that the algorithm results in $G_{n-1}$, since the unique node $x \in V(G_n) \setminus V(G_{n-1})$ is removed and then precisely takes the place of the deleted node.  Hence we need only to prove the complexity bounds.

As with insertions, note that by Lemma~\ref{lem:routing} all messages can be sent to their desired destinations along a shortest path.  Since $G$ is an expander it has diameter $O(\log n)$, so we need to prove that only $O(1)$ messages are sent across the network.  Clearly sending the original message to $\vec 0$ and the message from $w$ to $x$ take only $O(1)$ messages.  Since there are only $5d/2$ edge changes from $G_{n-1}$ to $G_n$, only $O(d) = O(1)$ messages are needed for $x$ to undo its insertion.  Then only another $O(d) = O(1)$ messages are necessary to replace $u$.  Putting this together with the diameter bound, we get message complexity and round complexity of $O(\log n)$ as claimed.  
\ifprocs \qed \fi
\end{proof}

These theorems together will now let us prove Theorem~\ref{thm:healing}.
\paragraph{Proof of Theorem~\ref{thm:healing}.} The round complexity and message complexity bounds are directly implied by Theorems~\ref{thm:insertions} and \ref{thm:deletions}.  The maximum degree, topology change, and expansion bounds follow from our main expander construction by changing each multigraph into a simple graph (reducing all nonzero weights to $1$).  \ifprocs \qed \fi

\fi
\fi

\end{document}